\documentclass[12pt, draftclsnofoot, onecolumn]{IEEEtran}
\pdfoutput=1
\usepackage[font=small]{caption}

\usepackage{amsmath}
\usepackage{mathtools}
\usepackage{amsmath}
\usepackage{amssymb}
\usepackage{cite}
\usepackage{array}
\usepackage{amsthm}
\usepackage{amsmath}
\usepackage{tabulary}
\usepackage{multirow}
\usepackage{subfig}
\usepackage{enumerate}
\IEEEoverridecommandlockouts
\usepackage{pifont}
\usepackage{epsfig}
\usepackage{graphicx}
\usepackage{url}
\usepackage{amssymb}
\usepackage{paralist} %
\usepackage{bbm}

\graphicspath{ {./figs/} }

\newcommand{\expect}[1]{{\mathbb{E}\left[{#1}\right]}}
\newcommand{\cexpect}[2]{{\mathbb{E}_{#2}\left[{#1}\right]}}
\newcommand{\pr}{{\mathbb{P}}}

\DeclareMathOperator*{\argmin}{arg\,min}

\newcommand{\mgf}{\mathcal{L}}

\newcommand{\loadpmf}{\mathsf{K}}

\newcommand{\hg}{_2\mathrm{F}_1}
\newcommand{\hgc}{\mathrm{C}}



\newcommand{\ple}{\alpha}

\newcommand{\twople}{\delta}

\newcommand{\power}[1]{\mathrm{P}_{#1}}
\newcommand{\res}{\mathrm{W}}

\newcommand{\bias}[1]{\mathrm{B}_{#1}}

\newcommand{\dnsty}[1]{{\lambda_{#1}}}
\newcommand{\ednsty}[1]{{a_{#1}}}

\newcommand{\ndnstyr}{\Lambda}

\newcommand{\userdnsty}{\lambda_u}

\newcommand{\SINRthresh}{\tau}

\newcommand{\RATEthresh}{\rho}

\newcommand{\nRATEthresh}{\hat{\rho}}
 
\newcommand{\uRATEthresh}{v} 
\newcommand{\metric}[1]{\mathrm{T}_{#1}} 
\newcommand{\nmetric}[1]{\hat{\mathrm{T}}_{#1}}

\newcommand{\real}[1]{\mathbb{R}^{#1}}

 
\newcommand{\uth}[1]{#1^\text{th}}

\newcommand{\ct}[1] {\mathcal{K}_{#1}}
\newcommand{\cc}[1]{\mathcal{B}_{#1}}
\newcommand{\pcf}{\epsilon}

\newcommand{\SINR}{\mathtt{SINR}}
\newcommand{\pl}{L}

\newcommand{\SIR}{\mathtt{SIR}}
\newcommand{\SNR}{\mathtt{SNR}}

\newcommand{\assocr}{\mathcal{C}}
\newcommand{\rate}{\mathtt{Rate}}

\newcommand{\PPP}{\Phi}

\newcommand{\PropPP}{\mathcal{N}}

\newcommand{\PPPu}{\Phi_{u}}

\newcommand{\load}[1]{{N}_{#1}}

\newcommand{\avload}[1]{\bar{N}_{#1}} 
 
\newcommand{\chanl}{H}

\newcommand{\sha}{S}

\newcommand{\indic}{\mathbbm{1}}

\newcommand{\pcov}{\mathcal{P}}

\newcommand{\rcov}{\mathcal{R}}

\newcommand{\avrcov}{\bar{\mathcal{R}}}

\newcommand{\passoc}{\mathcal{A}}

\newcommand{\af}{\eta} 
 


\newtheorem{thm}{{\bf Theorem}}
\newtheorem{cor}{Corollary}
\newtheorem{rem}{Remark}
\newtheorem{lem}{Lemma}
\newtheorem{prop}{Proposition}

\theoremstyle{definition}
\newtheorem{definition}{Definition}

\theoremstyle{thm}
\newtheorem{asmptn}{Assumption}

\begin{document}
\title{ Joint Rate and SINR Coverage Analysis for Decoupled Uplink-Downlink Biased Cell Associations in HetNets}

\author{\IEEEauthorblockN{Sarabjot Singh, Xinchen Zhang, and Jeffrey G. Andrews}
\thanks{ J. G. Andrews ({\tt jandrews@ece.utexas.edu}) is  with Wireless Networking and Communications Group (WNCG), The University of Texas at Austin, Austin, TX, USA.
S. Singh ({\tt sarabjot@utexas.edu}) and X. Zhang ({\tt xzhang7@alumni.nd.edu}) were also with WNCG. They are now with Nokia Tech., Berkeley, CA, USA and  Qualcomm Inc., San Diego, CA, USA respectively. 
}}
\maketitle

\begin{abstract}
Load balancing by proactively offloading users onto small and otherwise lightly-loaded cells  is critical for tapping the potential of  dense   heterogeneous cellular networks (HCNs). Offloading has mostly been studied for the downlink, where it is generally assumed that a user offloaded to a small cell will communicate with it on the uplink as well. The impact of coupled downlink-uplink offloading is not well understood.  Uplink power control and spatial interference  correlation further complicate  the mathematical analysis as compared to the downlink.  We propose an accurate and  tractable model to characterize the uplink $\SINR$ and rate distribution in a multi-tier  HCN  as a function of the association rules and power control parameters. Joint uplink-downlink rate coverage is also characterized.  Using the developed analysis, it is shown that the optimal degree of channel inversion (for uplink power control) increases with load imbalance in the network.  In sharp contrast to the downlink, minimum path loss  association  is shown to be optimal for uplink rate. Moreover, with minimum path loss  association and full channel inversion, uplink $\SIR$ is shown to be invariant of infrastructure density. It is further shown that  a \emph{decoupled association}---employing differing association strategies for uplink and downlink---leads to significant improvement in joint uplink-downlink rate coverage over the standard  coupled association in HCNs.
\end{abstract}

\section{Introduction}
Supplementing existing cellular networks with low power access points (APs), generically referred to as  small cells, leads to   wireless networks that are  highly heterogeneous in AP max transmit powers  and deployment density \cite{ghosh2012heterogeneous,qcom_hetnet_wmag}. Although the mathematical modeling and performance analysis -- particularly for downlink -- for  HCNs  has received  significant attention in recent years (see \cite{ElSawy13tut} for a survey),    attempts to model and analyze the uplink have been limited.   
In  popular uplink intensive services like cloud storage    and video chat, uplink performance is as important (if not  more) as that of the downlink. Moreover, in  services like video chat, the traffic is symmetric and thus  what really matters    is the ability to achieve   the required   QoS \emph{both} in uplink and downlink. The  insights for downlink design cannot  be directly  extrapolated  to the uplink setting in HCNs, as the latter is fundamentally different  due to \begin{inparaenum}[(i)] \item the homogeneity of  transmitters or user equipments (UEs), \item the   use of uplink transmission power control to the desired AP, and \item the  correlation of the  interference power from a UE   with its path loss to its own serving AP. \end{inparaenum} 

\subsection{Background and related work}
\textbf{Load balancing and power control.}
Due to the large AP transmission power disparity across different tiers in HCNs, the nominal UE load per AP (under downlink maximum power  association) is highly  imbalanced, with macrocells being significantly more congested than small cells.   It is  now   well established (both empirically and theoretically) that biasing UEs towards small cells leads to  significant improvement in downlink throughput (see \cite{ghosh2012heterogeneous,AndLoadCommag13,qcom_hetnet_wmag} and references therein). 
In conventional homogeneous macrocellular  networks, \textit{coupled} associations are used, wherein the UE is paired with the same AP for both uplink and downlink transmission.  Traditionally, this association has been based on the maximum downlink received power as measured at the UE, which also led to a max-uplink power association with the same AP, since the downlink and uplink channels are nearly reciprocal in terms of shadowing and path loss and all APs and UEs had essentially the same transmit powers, respectively.  However, this is clearly not the case in HCNs with load balancing. Biasing UEs towards small cells with a coupled association not only improves the downlink rate (despite a lower SINR) due to the load balancing aspect, but it simultaneously improves the uplink signal--to--noise--ratio ($\SNR$).  This is because the offloaded UEs  now on average transmit to APs, which are closer, since they are more likely to transmit to a nearby small cell whose downlink power was not large enough to associate with in the absence of biasing.   It is dubious, though, whether the bias designed to encourage downlink offloading would also be optimal for the uplink. 

Since transmit power is a critical resource at a UE, power control is  employed to  conserve energy  and also to reduce interference.
3GPP LTE networks support the use of  fractional power control (FPC), which   partially compensates for path loss \cite{XiaUL}.  In FPC, a UE   with path loss  $\pl$  to its serving AP  transmits with power $  \pl^{\epsilon}$, where $0\leq \epsilon\leq 1$  is the power control fraction (PCF).  Thus, with $\epsilon =0$, each UE transmits with constant  power,  and  with $\epsilon=1$, the path loss is fully compensated corresponding to channel inversion.   
 From a network point of view,  $\pcf$  can be interpreted as a fairness parameter, where a higher PCF $\pcf$  helps the cell edge users meet their $\SINR$ target but generates higher interference  \cite{mulUL,MuhULPC,CasUL,SimULPC,NovUL,Cou2011}.  Since the association strategy influences the statistics of  path loss  in    HCNs, the aggressiveness of power control should be  correlated with the association strategy. Therefore, it is important to develop an analytical model to capture the  interplay between  load balancing and power control on the uplink performance.  This is one of the goals of this paper.

\textbf{Uplink analysis.} The use of spatial point processes, particularly  the homogeneous Poisson point process (PPP), for modeling HCNs and derivation of the corresponding downlink coverage and rate under various association and interference coordination strategies has been extensively explored as of late (see \cite{ElSawy13tut} and references therein). The homogeneous  PPP assumption for AP location  not only greatly simplifies the downlink interference characterization, but also comes with  empirical and theoretical support  \cite{andganbac11,BlaKarKee12,ADG_Tcom,SinDhiAnd13}. However,  analysis of the uplink  in such a setting is highly non-trivial, as the uplink interference does not originate from   Poisson distributed nodes (UEs here). This is because  in orthogonal multiple access schemes, like OFDMA, there is  one UE per AP located randomly in the AP's association area that transmits on a given resource block. As a result, the uplink interference  can be viewed as stemming from  a Voronoi perturbed lattice process (see \cite{Soellerhaus} for more discussion), for which an exact   interference characterization is not available.   Moreover, due to the  uplink power control, the transmit power of an interfering UE is correlated with its path loss to the  AP under consideration.  Consequently, various  generative models \cite{NovUL,ElSawyH14,LeeUL14}  have been  proposed to approximate uplink performance in OFDMA Poisson cellular networks.   Most of these models, however, only apply to certain special cases such as (macro-only) for single tier networks \cite{NovUL} or full channel inversion with truncation and nearest AP association \cite{ElSawyH14}.  They do not extend naturally to HCNs with flexible power control and association. The recent work  in \cite{LeeUL14}, however, adopts a similar approach to the one proposed in this paper for  approximating the interfering UE  process  to derive the uplink  $\SIR$ distribution in a two tier network  with a (simpler) linear power control and biased association. All these generative models, however,   ignore the aforementioned conditioning, which may yield unreliable performance estimates. Also, none of these prior works characterizes  the impact of load balancing on the uplink rate distribution or the joint uplink-downlink rate coverage.

\textbf{Joint uplink-downlink coverage.} When UEs employ different association policies for uplink and downlink,  called \emph{decoupled association} \cite{ElshaerBDI14,SmiPopGavTWC,SmiPopGav}),  it   results in possibly different APs serving the user in the uplink and downlink. Characterizing the correlation between the respective   uplink and downlink path losses is then vital for the joint coverage analysis.  Such a correlation analysis was addressed in the recent work  \cite{SmiPopGavTWC} for the  special case of a two-tier scenario with max-received power association for downlink and nearest AP association for uplink.  However, the uplink coverage in \cite{SmiPopGavTWC,SmiPopGav} was derived assuming the interfering user process follows a homogeneous PPP, which is not accurate  for uplink analysis (as discussed above).  The analysis in this paper also addresses the joint uplink-downlink rate and $\SINR$  in greater generality with an arbitrary   association and number of tiers. Traditional coupled association is a special case of this general setting.  

\subsection{Contributions and outcomes}
 A novel generative model is proposed  to analyze uplink performance, where the  APs of each tier are  assumed to  be distributed as an independent homogeneous Poisson point process (PPP) and all UEs employ a weighted path loss based association and FPC.     The  interfering UE locations are approximated   as an \emph{inhomogeneous}  PPP with intensity dependent on the   association parameters. Further, the correlation between the uplink  transmit power of each interfering  UE  and its path loss to the AP under consideration is captured.  Based on this novel approach, the contributions of the paper are as follows:\\
\textbf{Uplink $\SINR$ and rate  distribution.} The complementary cumulative distribution  function  (CCDF) of the  uplink $\SINR$ and rate are  derived for  a  $K$-tier  HCN as a function of the association (tier specific) and power control   parameters in Sec. \ref{sec:ULanalysis}. The general   expression is simplified for certain plausible scenarios.   Simpler  upper and lower bounds are also derived.  \\
\textbf{Joint uplink-downlink rate coverage.} The joint rate/$\SINR$ coverage is defined as the joint probability of uplink and downlink rate/$\SINR$ exceeding  their respective thresholds. The joint coverage is derived in Sec. \ref{sec:ULDLcov} by combining the  derived analysis of  uplink coverage   with the characterization of joint distribution  of  uplink-downlink path losses for  arbitrary    uplink and downlink association weights. The uplink and downlink interference is, however, assumed independent for tractability.

The   analysis  of Sec. \ref{sec:ULanalysis} and  \ref{sec:ULDLcov} (and the involved assumptions)  are validated by comparing with  simulations in  Sec. \ref{sec:validation}  for  a wide  range of parameter settings, which  builds confidence in the   following design insights.\\
\textbf{Insights.} Using the developed model, it is shown, in Sec. \ref{sec:opt}, that:  
\begin{itemize}
\item the  PCF maximizing uplink $\SIR$ coverage  is inversely proportional to the $\SIR$ threshold. As a result, edge users prefer a higher PCF as compared to that of cell interior users. A similar result was shown in \cite{NovUL} for macrocellular networks.

\item With increasing disparity  in association weights across various tiers,  the optimal PCF increases across all $\SIR$ thresholds. 

\item  Minimum path loss  association (i.e. same association weights for all tiers) leads to  optimal uplink  rate coverage. This is in contrast to the corresponding result for downlink \cite{SinDhiAnd13,SinAnd14,AndLoadCommag13}.

\item  For minimum path loss  association and full channel inversion based power control, the uplink $\SIR$ coverage is independent  of infrastructure density in multi-tier networks\footnote{A similar result was shown in \cite{ElSawyH14}   under a different deployment model for interfering UEs.}. This trend is similar to   that in  downlink HCNs \cite{josanxiaand12,dhiganbacand12}. However, the corresponding uplink  $\SIR$ is shown to be stochastically dominated by that of downlink. 

\item With a static uplink-downlink resource allocation ratio, the uplink and downlink association weights that maximize their respective coverage also   maximize the uplink-downlink joint coverage.

\item As a result, a decoupled association---employing different association weights for uplink and downlink---maximizes  joint uplink-downlink rate coverage.

\end{itemize}

\section{System Model}\label{sec:sysmodel}
A co-channel deployment  of a $K$-tier HCN is considered, where the locations of the APs of the $k^{\mathrm{th}}$ tier  are modeled as a 2-D homogeneous PPP   $\PPP_{k} \subset \real{2}$  of density $\dnsty{k}$. All APs  of tier $k$ are assumed to transmit with power $\power{k}$. Further,  the UEs in the network are assumed to be distributed according to an independent homogeneous PPP $\PPPu$ with density $\userdnsty$.    
The signals are assumed to experience path loss with a path loss exponent (PLE) $\ple$ and   the power received from a  node at $X \in \real{2}$ transmitting with power $\power{X}$ at $Y \in \real{2}$  is $\power{X} H_{X,Y}\pl(X,Y)^{-1}$, where $H\in \real{+}$ is the fast fading power gain and $\pl$ is the path loss.  The random channel gains are assumed to be Rayleigh distributed with unit average power, i.e.,  $H \sim \exp(1)$, and $\pl(X,Y) \triangleq \sha_{X,Y} \|X-Y\|^{\ple}$, where $\sha\in \real{+}$ denotes the large scale fading (or shadowing) and   is assumed i.i.d across all UE-AP pairs but the  same for uplink and downlink. The small scale fading gain $\chanl$ is assumed i.i.d  across all links.   WLOG, the analysis in this paper is done for a \textit{typical} UE located at the origin $O$.  The AP serving this typical UE is referred to as the \textit{tagged} AP.

\subsection{Uplink power control}
Let $\cc{X} \in \PPP{}$ denote the AP serving the UE at $X \in \real{2}$ and define $\pl_{X} \triangleq \pl(X,\cc{X})$ to be the path loss  between the UE and its serving AP. A fractional pathloss-inversion based power control  is assumed for   uplink transmission, where  a UE at $X$ transmits with a power spectral density (dBm/Hz) $\power{X}=  \power{u} \pl_{X}^{ \epsilon}$, where $0\leq \epsilon\leq 1$  is the power control fraction (PCF) and $\power{u}$ is the open loop power spectral density \cite{XiaUL}.  Thus, the total transmit power of a user depends on the spectral resources allocated to the user and it's path loss. For  tractability, the  per user maximum power constraint is ignored in this paper.   However, if the dependence of transmit power on load (or resources) is ignored, the analysis in this paper can be extended to incorporate a maximum power constraint similar to  \cite{ElSawyH14}.

 Orthogonal access is assumed in the uplink without multi-user transmission, i.e.,  there is only one UE transmitting in any given resource block. 
  Let $\PPPu^b$ be the point process  denoting the location of UEs transmitting on the same resource as the typical UE. Therefore, $\PPP_{u}^b$ is  \emph{not} a PPP but a Poisson-Voronoi perturbed lattice (per \cite{Soellerhaus}). The uplink $\SINR$ of the typical UE (at $O$) on a given resource block is
\begin{equation}\label{eq:sinr}
\SINR =\frac{ H_{O,\cc{O}} \pl_{O}^{\epsilon-1}}{\SNR^{-1} + \sum_{X\in \PPPu^b}\pl_{X}^{\pcf} H_{X,\cc{O}} \pl(X,\cc{O})^{-1}},
\end{equation}
where $\SNR \triangleq \frac{\power{u}G \mathrm{L_0}}{{\mathrm{N}_0}}$ with $\mathrm{N}_0$ being the thermal noise spectral density, $G$ being the antenna gain at the tagged AP, and $\mathrm{L_0}$ is the free space path loss at a reference distance. 
Henceforth  channel power gain between interfering  UEs and the tagged AP $\{H_{X,\cc{O}}\}$ are simply denoted by $\{H_X\}$.  The index  `$O$' of the typical user is dropped   wherever implicitly clear.

\subsection {Weighted path loss association}
Every UE is assumed to be using weighted path loss for both uplink and downlink  association in  which a UE at $X$ associates to an AP of tier $\ct{X}$ in the uplink, where
\begin{align}\label{eq:association}
\ct{X} &= \arg \max_{k\in\left\{1,\ldots, K \right\}} \metric{k} \pl_{\mathrm{min},k}(X)^{-1},
\end{align}
with $\pl_{\mathrm{min},k}(X)=\min_{Y\in\PPP_{k}}\pl(X, Y)$ is  the minimum path loss of the UE from   $\uth{k}$ tier and $\metric{k}$ is the uplink association weight  for APs in the $\uth{k}$ tier.  The  downlink association is similar with possibly different per tier weights denoted by  $\{\metric{k}^{'}\}_{k=1}^K$ and the selected tier   denoted by $\ct{X}^{'}$. 
 
 The presented association encompasses  biased cell association, where $\metric{k} = \power{k}\bias{k}$ with $\power{k}$ and $\bias{k}$ being the transmit power of APs of $\uth{k}$ tier and the corresponding bias respectively.  Note that if all the association weights are identical, it results in minimum path loss  association.   
For ease of notation, we define  $\nmetric{k} \triangleq \frac{\metric{k}}{\metric{\ct{}}}, \nmetric{k}^{'} \triangleq \frac{\metric{k}^{'}}{\metric{\ct{}^{'}}^{'}} \,\, \forall k=1\ldots K$,   as the ratio of the association weight of an arbitrary tier to that of the serving tier of the typical UE (defined in (\ref{eq:association})) under association weights $\{\metric{k}\}$  and $\{\metric{k}^{'}\}$. 

As a result of the above association model, the uplink association cell of an AP of tier $k$ located at $X$ is
\[ \assocr_{X} =\{ Y \in \real{2}:\metric{k} \pl(X,Y)^{-1} \geq \metric{j}\pl_{\mathrm{min},j}(Y)^{-1},\,\, \forall j=1\ldots K.  \}\]
The downlink association cell can be similarly defined. 
Note that the described association strategy (both for uplink and downlink) is stationary \cite{SinBacAnd13} and hence the resulting association cells are also stationary. The uplink association cells in a two tier setting with $\frac{\power{1}}{\power{2}}=20$ dB resulting from downlink max power association and minimum path loss association are contrasted in Fig. \ref{fig:overview}. 

 It is assumed that each AP has at least one user in its association region with data to transmit in uplink. Further, the AP queues for downlink transmission are assumed to be \textit{saturated} implying that each AP always has data to transmit in downlink. The fraction of resources reserved for the uplink at  each AP is denoted by $\af$.  Assuming an equal partitioning  of the total  uplink (downlink) resources among the associated uplink (downlink)  users (as accomplished by proportional fair or round robin scheduling), the  rate of the typical user is
\begin{equation}\label{eq:ratemodel}
\rate = \frac{\res}{\load{}}\gamma\log\left(1+\SINR\right),
\end{equation}
where $\res$ is the bandwidth,   $\load{}$ denotes the total number of uplink  or downlink users sharing the $\gamma$ fraction of resources, $\gamma = \af$ for uplink and  $1-\af$ for downlink.
 The notation used in this paper  is summarized in Table \ref{tbl:param}.
 \begin{table}
	\centering
\caption{Notation and simulation parameters}
	\label{tbl:param}
  \begin{tabulary}{\columnwidth}{ |C | L | L|}
    \hline
  \textbf{ Notation} & \textbf{Parameter} & \textbf{Value (if applicable) }\\\hline
 $\PPP_k$, $\dnsty{k}$, $\power{k}$ & PPP of tier  $k$ APs, the corresponding density, and the corresponding power   &  \\\hline
 $\PPP_u$, $\userdnsty$ & user PPP and density & $\userdnsty =200$  per sq. km \\\hline
 $\ple $, $\twople$ & path loss exponent; $2/\ple$ & \\ \hline
$\res$ &  bandwidth & $10$ MHz \\  \hline
$\metric{k},\metric{k}^{'}$ & uplink and downlink association weight for tier $k$ &  \\\hline
$\pcf$, $\power{u}$ & power control fraction, open loop power spectral density & $0\leq \pcf\leq 1$, $\power{u}=-80$ dBm/Hz  \\\hline
 $\mathrm{N}_0$  &  thermal noise spectral density & $-174$ dBm/Hz \\\hline
$ \chanl$ & small scale fading gain & Exponential   with unit mean $\sim\exp(1)$\\ \hline
$\sha$ & large scale  fading & Lognormal with $8$ dB standard deviation\\ \hline
$\ct{X}$, $\ct{X}^{'}$ & uplink and downlink serving tier of user at $X$ &  \\ \hline
$\cc{X}$ & serving AP of user at $X$ in uplink & \\\hline
$\load{}$ & uplink or  downlink load  & \\\hline
\end{tabulary}
\end{table}

\begin{figure*}
  \centering
\subfloat[Maximum downlink power association ]{\includegraphics[width= 0.5\columnwidth]{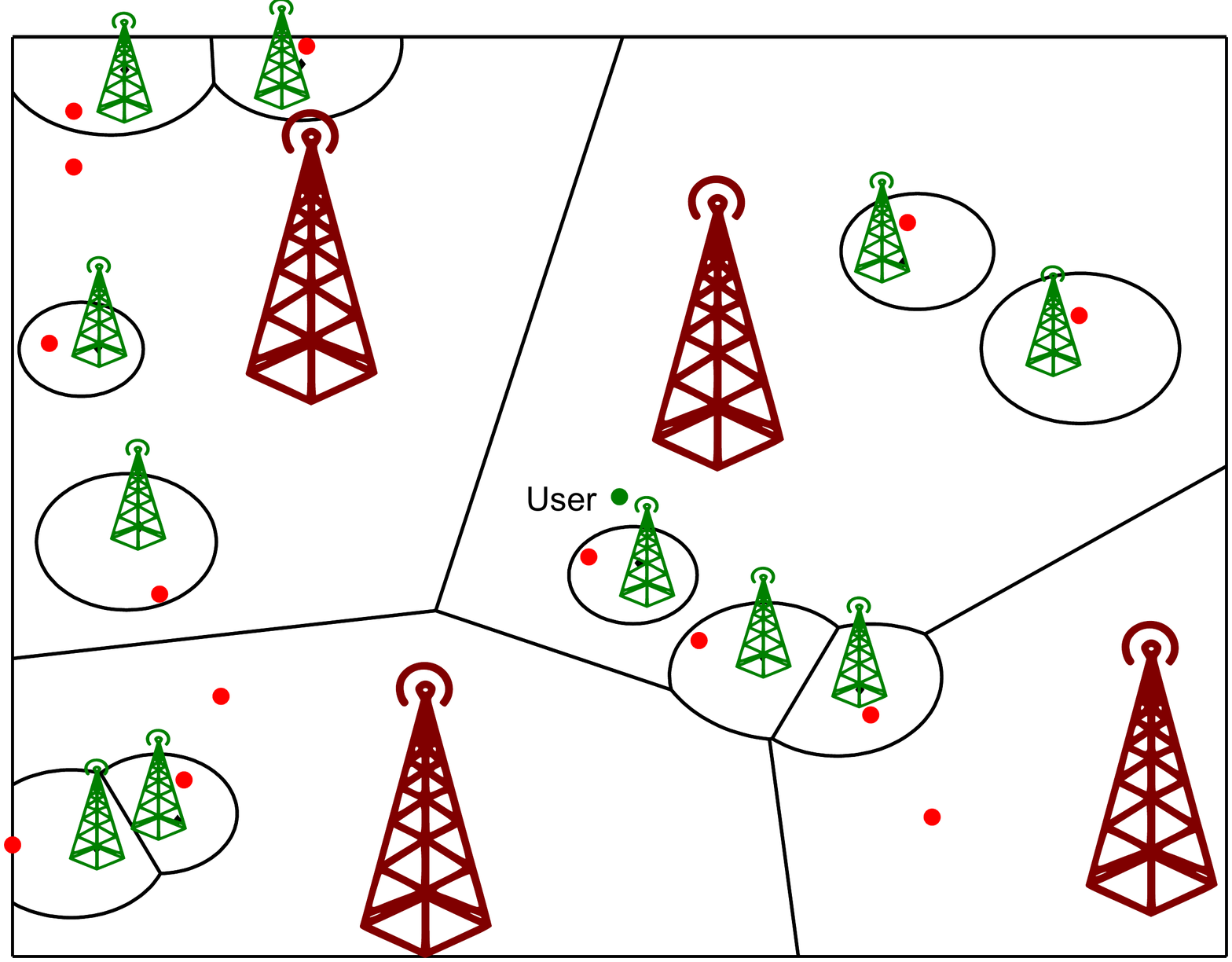}}
\subfloat[Nearest AP association ]{\includegraphics[width= 0.5\columnwidth]{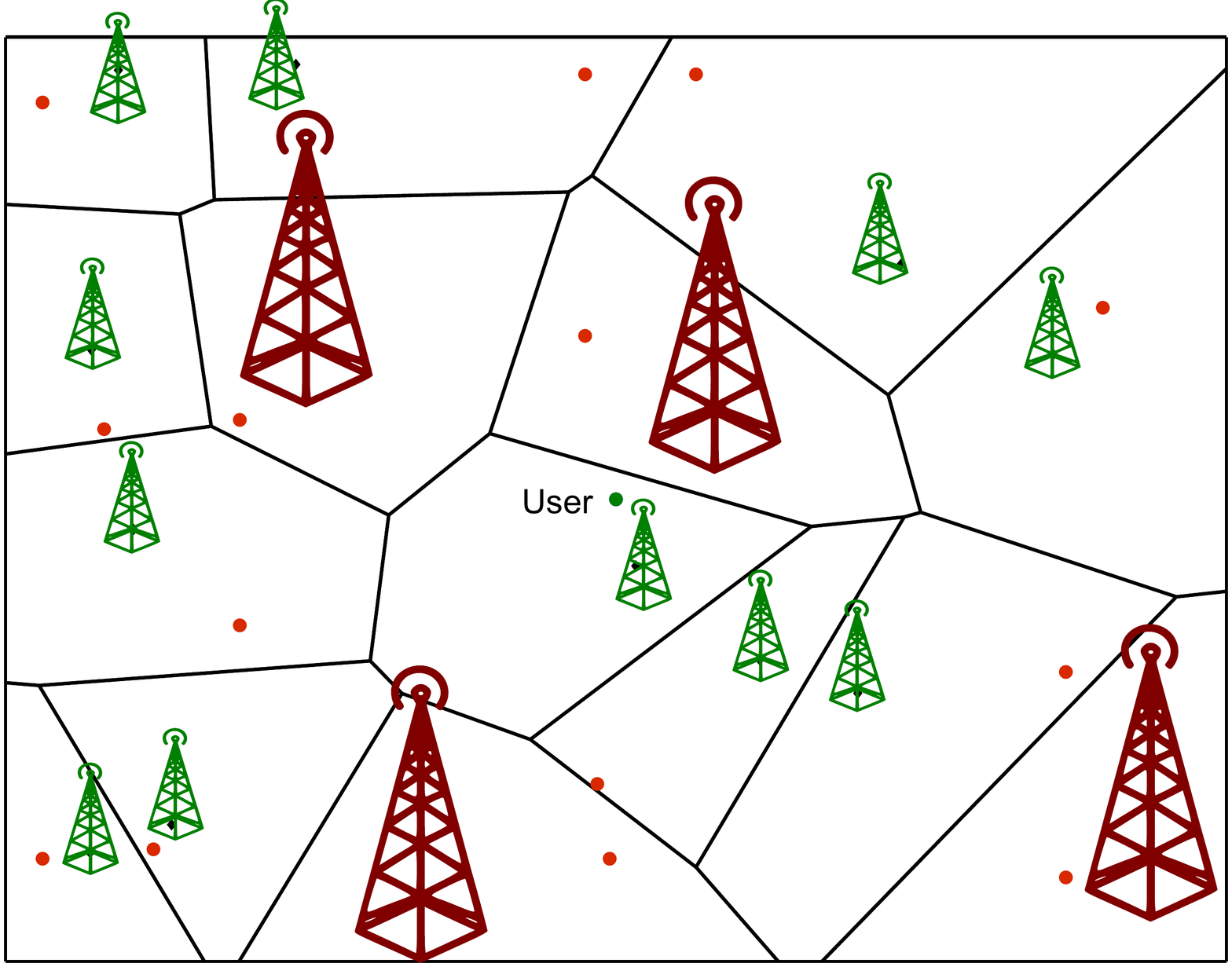}}
\caption{Different association strategies and the corresponding association regions with   one active UE per AP.}
 \label{fig:overview}
\end{figure*} 

\section{Uplink   $\SINR$ and Rate Coverage}\label{sec:ULanalysis}
This is the main technical section of the paper, where we detail the proposed uplink model and the corresponding analysis.  
\subsection{General case}
The uplink $\SIR$ CCDF  of the typical UE is  
\begin{equation}\label{eq:pcovexp}
\pcov(\SINRthresh) \triangleq \pr(\SIR> \SINRthresh) = \sum_{k=1}^K \pr(\ct{}=k)\pcov_k(\SINRthresh),
\end{equation}
where 
\begin{align*}
\pcov_k(\SINRthresh) & \triangleq   \pr(\SIR > \SINRthresh|\ct{}=k)
= \pr\left(\frac{\chanl  \pl{}^{\epsilon-1}}{\sum_{X\in \PPPu^b}\pl_{X}^{\pcf} H_{X}  \pl(X,\cc{})^{-1}}> \SINRthresh|\ct{} =k\right)\\
&= \expect{\exp(-\pl^{1-\epsilon}\SINRthresh I)|\ct{}=k}=\expect{\mgf_{I|\ct{} =k}(\pl^{1-\epsilon}\SINRthresh)},
\end{align*}
where $I=\sum_{X\in \PPPu^b}\pl_{X}^{\pcf} H_{X} \pl(X,\cc{})^{-1}$ is the uplink interference at the tagged AP $\cc{}$, and $\mgf_{I|\ct{}=k}$   is the   Laplace transform of interference conditional on $\uth{k}$ tier being the serving tier.

The following Lemma characterizes the path loss distribution of a typical UE in the given system model.  
\begin{lem}\label{lem:dist}
\textbf{Path loss distribution at the desired link.} The probability distribution function (PDF) of the path loss of a typical UE to its serving AP   is  
\begin{equation*}
f_{\pl}(l) = \twople l^{\twople-1}\sum_{j=1}^Ka_j\exp(-G_j l^{\twople}),  \,\, l \geq 0,
\end{equation*}
where $\twople \triangleq \frac{2}{\ple}$,  $a_k = \dnsty{k}\pi \expect{\sha^{\twople}}$, $ G_k = \sum_{j=1}^K a_{j}(\metric{j}/\metric{k})^{\twople}$,   
and the  PDF, conditioned on the serving the tier being $k$, is
\begin{equation*}
f_{\pl|\ct{}=k}(l)\triangleq f_{\pl}(l|\ct{}  =k ) = \twople G_k l^{\twople-1}\exp(-G_k l^{\twople}),  \,\, l \geq 0,
\end{equation*}
where $\passoc_k \triangleq \pr(\ct{} =k)= \frac{a_{k}}{G_k}$ is the probability of the typical UE  associating with tier $k$.
\end{lem}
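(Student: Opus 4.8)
The plan is to reduce the $K$-tier association problem to the order statistics of $K$ independent one-dimensional Poisson processes, obtained by mapping each spatial tier into a ``path loss process'' on $\R^+$. First I would fix a tier $k$ and consider the point process $\PropPP_k \triangleq \{ \pl(O,Y) : Y \in \PPP_k\}$ of path loss values seen by the typical UE at the origin. Since $\PPP_k$ is a homogeneous PPP of density $\dnsty{k}$ and each point carries an i.i.d.\ shadowing mark $\sha$, the map $(Y,\sha)\mapsto \sha\|Y\|^{\ple}$ is a measurable displacement, so by the mapping (marking) theorem $\PropPP_k$ is an inhomogeneous PPP on $\R^+$. Its intensity measure is
\[
\Lambda_k([0,l]) = \dnsty{k}\int_{\R^2}\expect{\indic\!\left(\sha\|y\|^{\ple}\le l\right)}\,\mathrm{d}y = \dnsty{k}\pi\,\expect{\sha^{\twople}}\,l^{\twople} = a_k\, l^{\twople},
\]
after passing to polar coordinates and substituting $u=\|y\|^{\ple}$; here $\twople=2/\ple$ and $a_k=\dnsty{k}\pi\expect{\sha^{\twople}}$. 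Thus $\PropPP_k$ has intensity function $a_k\twople\, l^{\twople-1}$, and the minimum path loss $\pl_{\mathrm{min},k}=\min \PropPP_k$ to the $\uth{k}$ tier is its smallest point.

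Next I would obtain the law of $\pl_{\mathrm{min},k}$ from the void probability of a PPP: $\pr(\pl_{\mathrm{min},k}>l)=\pr(\PropPP_k\cap[0,l]=\emptyset)=\exp(-\Lambda_k([0,l]))=\exp(-a_k l^{\twople})$, so $\pl_{\mathrm{min},k}$ has density $a_k\twople\,l^{\twople-1}\exp(-a_k l^{\twople})$. The crucial structural observation is that the tiers are independent, hence $\{\pl_{\mathrm{min},j}\}_{j=1}^K$ are mutually independent. The association rule (\ref{eq:association}) selects tier $k$ exactly when $\pl_{\mathrm{min},k}/\metric{k}\le \pl_{\mathrm{min},j}/\metric{j}$ for all $j$, and the serving path loss is then $\pl=\pl_{\mathrm{min},k}$. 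I would therefore write the joint ``event-and-density''
\[
\pr(\ct{}=k,\ \pl\in\mathrm{d}l) = f_{\pl_{\mathrm{min},k}}(l)\,\mathrm{d}l\prod_{j\neq k}\pr\!\left(\pl_{\mathrm{min},j}>\tfrac{\metric{j}}{\metric{k}}\,l\right),
\]
and substitute the void probabilities $\pr(\pl_{\mathrm{min},j}>(\metric{j}/\metric{k})l)=\exp(-a_j(\metric{j}/\metric{k})^{\twople}l^{\twople})$. Collecting the exponents and noting $a_k=a_k(\metric{k}/\metric{k})^{\twople}$ gives $\pr(\ct{}=k,\ \pl\in\mathrm{d}l)=a_k\twople\,l^{\twople-1}\exp(-G_k l^{\twople})\,\mathrm{d}l$ with $G_k=\sum_{j=1}^K a_j(\metric{j}/\metric{k})^{\twople}$ precisely as claimed.

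The three asserted identities then drop out by bookkeeping. Summing the joint density over $k$ yields the unconditional PDF $f_{\pl}(l)=\twople l^{\twople-1}\sum_{k=1}^K a_k\exp(-G_k l^{\twople})$. Integrating the joint density over $l$ and using $\int_0^\infty \twople l^{\twople-1}\exp(-G_k l^{\twople})\,\mathrm{d}l=1/G_k$ gives $\passoc_k=\pr(\ct{}=k)=a_k/G_k$. Dividing the joint density by $\passoc_k$ then produces the conditional PDF $f_{\pl|\ct{}=k}(l)=\twople G_k l^{\twople-1}\exp(-G_k l^{\twople})$, which is itself a clean (Weibull-type) form.

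I expect the main obstacle to be justifying the first step rigorously: namely that the shadowed path loss values form a genuine PPP and that the displacement does not destroy the Poisson property, together with correctly identifying the shadowing moment $\expect{\sha^{\twople}}$ in the intensity (and noting that for dB-symmetric lognormal shadowing this coincides with $\expect{\sha^{-\twople}}$, the moment that would appear if one instead tracked received power). Once the per-tier path loss process is pinned down as an inhomogeneous PPP with intensity $a_k\twople l^{\twople-1}$, everything else is a consequence of PPP void probabilities and the independence of the tiers, and requires no further approximation.
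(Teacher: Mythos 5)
Your proof is correct and follows essentially the same route as the paper: both reduce each tier to a one-dimensional Poisson propagation process with intensity measure $a_k t^{\twople}$ (the paper cites prior work for this step, you derive it via the mapping theorem), then use void probabilities and independence across tiers to obtain the joint law of the serving tier and the serving path loss, from which the unconditional PDF, the association probability $a_k/G_k$, and the conditional PDF all follow by the same bookkeeping. The only nit is that with the paper's convention $\pl(X,Y)=\sha_{X,Y}\|X-Y\|^{\ple}$ your displayed intensity integral actually evaluates to $\dnsty{k}\pi\expect{\sha^{-\twople}}l^{\twople}$ rather than $\dnsty{k}\pi\expect{\sha^{\twople}}l^{\twople}$, a distinction you already note is immaterial for dB-symmetric lognormal shadowing and which the paper itself glosses over.
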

\begin{proof}
The proof follows by generalizing the results in \cite{BlaKarKee12,MadHCN12} to our setting.  
Define the propagation process (introduced in \cite{BlaKarKee12}) from   APs  of tier $j$ to the typical user as $\PropPP_j  \triangleq\{ L(X,0)\}_{X\in \PPP_{j}}$.   The process $\PropPP_j$ is also Poisson with intensity $\ndnstyr_j(t) = a_j t^{\twople}, t \in \real{+}$, with $a_j  = \pi\dnsty{j}\expect{\sha^{\twople}}$\footnote{A tier specific $\sha$ can be incorporated in the analysis using $a_j  = \pi\dnsty{j}\expect{\sha_j^{\twople}}$.}. Therefore $ $ $\pr(\pl_{\mathrm{min},j}>t) = \exp(-\ndnstyr_j(t))$. Path loss to the tagged AP of tier $k$ 
has the CCDF
\begin{align*}
 \pr(\pl > l |\ct{} = k) & = \frac{\pr(\pl_{\mathrm{min},k}> l, \ct{}  =k)}{\pr(\ct{}=k) }
\\ &=  \frac{ \pr\left(\cap_{j=1,\neq k}^K\pl_{\mathrm{min},k}^{-1}\metric{k} > \pl_{\mathrm{min},j}^{-1}\metric{j}\cap \pl_{\mathrm{min},k}> l \right)}{\passoc_k}\\
& =  \frac{\cexpect{\prod_{j=1,\neq k}^K \exp(-\ndnstyr_j(\nmetric{j}\pl_{\mathrm{min},k}))}{\pl_{\mathrm{min},k}>l}}{\passoc_k} \\ &= \frac{\twople a_k}{\passoc_k} \int_{l}^\infty y^{\twople-1}\exp\left(-y^{\twople}\sum_{j=1}^K a_j \nmetric{j}^{\twople}\right)\mathrm{d} y
\end{align*}
Therefore,  $f_{\pl}(l|\ct{} =k) = \twople \frac{\ednsty{k}}{\passoc_k} l^{\twople-1}\exp(-G_k l^\twople)$ and $f_{\pl}(l)=\sum_{j=1}^K \passoc_j f_{\pl}(l|\ct{} =j)$.
\end{proof}
The above distribution is not, however, identical to the distribution of the path loss between an interfering UE and its serving AP, since the latter is the conditional distribution given that the interfering UE \emph{does not} associate with the tagged AP. This correlation is formalized in the corollary below.
\begin{cor}\label{cor:cdist}
\textbf{Path loss distribution at an interfering UE.} The PDF of the path loss of a  UE at $X$ associated with tier $j$, conditioned on it not lying in the association cell ($\assocr_{\cc{}}$) of the tagged AP at $\cc{} $   of tier $k$ and the corresponding path loss $\pl(X,\cc{} )=y$, is 
\begin{equation*}
f_{\pl_X}(l|\ct{X} =j, \ct{} =k, X \notin \assocr_{\cc{}}, \pl(X,\cc{} )=y  ) \\= \frac{\twople G_j}{1-\exp(- G_k y^{\twople})}l^{\twople-1}\exp(- G_j l^{\twople}), \,\, 0\leq l\leq  \frac{\metric{j}}{\metric{k}}y.
\end{equation*}
\end{cor}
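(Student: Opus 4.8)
The plan is to obtain the conditional density by truncating and renormalizing the unconditional tier-$j$ path-loss density supplied by Lemma \ref{lem:dist}. By the stationarity of the association strategy, the path loss $\pl_X$ from the interfering UE at $X$ to its serving AP, conditioned only on $\ct{X}=j$, has the same law as that of the typical UE, namely $\twople G_j l^{\twople-1}\exp(-G_j l^{\twople})$ for $l\ge 0$. The task is therefore to show that the extra conditioning on $\{\ct{}=k,\, X\notin\assocr_{\cc{}},\, \pl(X,\cc{})=y\}$ affects this law only through a restriction of its support, and then to compute the resulting normalization.

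First I would identify the support constraint. The tagged AP $\cc{}$ is a tier-$k$ AP whose path loss to $X$ is fixed at $\pl(X,\cc{})=y$, so its weighted path loss from $X$ is $y/\metric{k}$. Under the weighted path-loss rule (\ref{eq:association}), the interfering UE prefers its tier-$j$ AP at path loss $l$ over $\cc{}$ precisely when $\metric{j}l^{-1}\ge\metric{k}y^{-1}$, i.e. $l/\metric{j}\le y/\metric{k}$; given the remaining conditioning this is exactly the event $X\notin\assocr_{\cc{}}$. Hence the admissible range is $0\le l\le(\metric{j}/\metric{k})y$, which is the stated support.

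The key structural step is to argue that, beyond this truncation, the shape of the density is unchanged. I would invoke Slivnyak's theorem together with the independence of the per-tier PPPs: conditioning on $\pl(X,\cc{})=y$ only fixes the location (path loss) of one atom of the tier-$k$ propagation process $\PropPP_k$ seen from $X$, and by Slivnyak this leaves the remaining points of all propagation processes distributed as before. Consequently the path loss to the serving tier-$j$ AP, $\pl_X$, retains its original conditional law, with the tagged AP entering only through the association inequality $l/\metric{j}\le y/\metric{k}$ recorded above. The event $\ct{}=k$ concerns the configuration governing the typical UE at $O$ and, apart from designating $\cc{}$ as a tier-$k$ AP, does not further constrain $\pl_X$.

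Finally I would renormalize. Integrating the tier-$j$ density over the admissible range gives $\int_0^{(\metric{j}/\metric{k})y}\twople G_j l^{\twople-1}\exp(-G_j l^{\twople})\,\mathrm{d}l = 1-\exp\big(-G_j(\metric{j}/\metric{k})^{\twople}y^{\twople}\big)$. The one computation that makes the statement clean is the identity $G_j(\metric{j}/\metric{k})^{\twople}=\sum_{i=1}^K a_i(\metric{i}/\metric{j})^{\twople}(\metric{j}/\metric{k})^{\twople}=\sum_{i=1}^K a_i(\metric{i}/\metric{k})^{\twople}=G_k$, which reduces the normalizing constant to $1-\exp(-G_k y^{\twople})$ and yields the claimed density. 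I expect the main obstacle to be the second step: carefully justifying via Slivnyak and the inter-tier independence that conditioning on the tagged-AP path loss $y$ does not perturb the interfering UE's serving-AP statistics except through the support truncation, since the tagged AP belongs to tier $k$ and hence also participates in $X$'s own association.
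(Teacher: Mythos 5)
Your proposal is correct and follows essentially the same route as the paper: derive the support constraint $\pl_X \leq (\metric{j}/\metric{k})y$ from the association rule, then truncate and renormalize the tier-$j$ density of Lemma \ref{lem:dist} using the identity $G_j(\metric{j}/\metric{k})^{\twople}=G_k$. The paper's proof is just a two-line version of this; your Slivnyak-based justification of why the conditioning acts only through the support truncation fills in a step the paper leaves implicit.
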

\begin{proof}
An interfering UE at $X$ cannot associate with the tagged AP of tier $k$ which, given the association policy, implies that the corresponding path loss   is bounded as $\pl_X \leq \frac{\metric{j}}{\metric{k}}\pl(X,\cc{})$. Noting that $G_j\left(\frac{\metric{j}}{\metric{k}}\right)^{\twople}= G_k$ results in the above distribution. 
\end{proof}

Due to uplink orthogonal access within each AP, only one UE per AP  transmits on the typical resource block and hence contributes to interference at the tagged AP. Therefore, $\PPP_{u}^b$ is  \emph{not} a PPP but a Poisson-Voronoi perturbed lattice (per \cite{Soellerhaus}) and hence the functional form of the interference (or the Laplace functional of  $\PPP_{u}^b$) is not tractable. Based on the  following remark, we propose an approximation to characterize the corresponding process as an \emph{inhomogeneous} PPP.
%
%
%

\begin{rem}\label{rem:thin}
{\textbf{Thinning probability.}}
Conditioned on an  AP of tier $k$ being located at $V \in \real{2}$, a UE at $U \in \real{2}$  associates  with $V$ with probability $\pr(\cc{U}=V)=\exp(-G_k \pl(V,U)^{\twople})$. 
\end{rem}

\begin{asmptn}{\textbf{Proposed interfering UE point process.}}\label{asmptn:ppp}
Conditioned on the tagged AP being located at $\cc{ }$ and of tier $k$, the propagation process  of  interfering UEs from tier $j$  to $\cc{ }$, $\PropPP_{u,j}\coloneqq \{\pl(X,\cc{})\}_{X \in \PPP_{u,j}^b }$ is assumed to be Poisson with intensity measure function $\ndnstyr_{u,j}(dx)= \twople a_j x^{\twople-1} (1-\exp(-  G_k x^{\twople}))(dx)$.  
\end{asmptn}
The basis of the above assumption is Remark \ref{rem:thin} along with   the fact that only one UE per AP can potentially interfere with the typical UE in the uplink.  Thus,  the maximum density of UEs that might potentially interfere in the uplink  from tier $j$ is $\dnsty{j}$. Assuming this parent process to be a PPP with density $\dnsty{j}$, the propagation process of these UEs to the tagged AP has intensity measure function $ \twople a_j  x^{\twople-1}$. However, the intensity of this parent process has to be appropriately thinned  as per Remark 1 to account for  the fact that these UEs do not associate with the tagged AP. The resulting process $\PropPP_{u,j}$ has an intensity that increases with increasing path loss from the tagged AP. 

 The methodology proposed in \cite{LeeUL14} for modeling non-uniform intensity of $\PPP_{u,b}$ was based on a curve-fitting based approach  and hence may not be accurate for more diverse system parameters.

\begin{asmptn}{\textbf{Tier-wise independence.}}\label{asmptn:indpndnc}
The point process of interfering UEs from each tier are assumed to be independent, i.e., the intensity measure of the interfering UEs propagation process  $\PropPP_u$ is $\ndnstyr_u(x)\triangleq  \sum_{j=1}^K \ndnstyr_{u,j}(x)$.
\end{asmptn}

\begin{asmptn}{\textbf{Independent path loss.}}\label{asmptn:dist}
The path losses $\{\pl_{X}\}_{X \in \PPP_{u}^b}$ are assumed to follow the Gamma distribution given by Corollary \ref{cor:cdist},  assumed independent (but not identically distributed).
\end{asmptn}

\begin{lem}\label{lem:LapI}
The Laplace transform of  interference at the tagged AP of tier $k$ under the proposed model 
is 
\begin{equation}\label{eq:LapIk}
\mgf_{I_k}(s)\triangleq \mgf_{I|\ct{}=k}(s) = \exp\left(-\frac{\twople}{1-\twople}s\sum_{j=1}^K \nmetric{j}^{1-\twople}a_{j}\cexpect{\pl^{\twople-(1-\pcf)}\hgc_\twople\left( \frac{s\nmetric{j}}{\pl^{1-\pcf}}\right)}{\pl|\ct{}=j}\right),
\end{equation}
 where $\hgc_\twople(x)\triangleq\,  \hg(1,1-\twople,2-\twople,-x)$ and   $\hg$ is the Gauss-Hypergeometric function.
\end{lem}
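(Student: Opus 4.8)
The plan is to evaluate $\mgf_{I_k}(s)=\expect{e^{-sI}\mid\ct{}=k}$ directly from the probability generating functional (PGFL) of the inhomogeneous PPP posited in Assumption~\ref{asmptn:ppp}. Since the interfering UEs of distinct tiers form independent point processes (Assumption~\ref{asmptn:indpndnc}), the Laplace transform factorizes across tiers and $\log\mgf_{I_k}(s)$ becomes a sum of per-tier contributions. A tier-$j$ interferer sitting at the propagation point $y=\pl(X,\cc{})\in\PropPP_{u,j}$ contributes $\pl_X^{\pcf}H_X y^{-1}$ to $I$, where the two marks are the Rayleigh fading $H_X\sim\exp(1)$ and the interferer's own path loss $\pl_X$, whose $y$-dependent law is supplied by Corollary~\ref{cor:cdist} through Assumption~\ref{asmptn:dist}. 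The marked PGFL then gives
\begin{equation*}
\log\mgf_{I_k}(s)=-\sum_{j=1}^K\int_0^\infty\left(1-\expect{e^{-s\pl_X^{\pcf}H_X y^{-1}}\mid y}\right)\ndnstyr_{u,j}(dy),
\end{equation*}
where the inner expectation is over $H_X$ and $\pl_X$ given $y$.

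First I would take the expectation over the fading using $\expect{e^{-aH}}=(1+a)^{-1}$ for $H\sim\exp(1)$, which turns the integrand into $1-(1+s\pl_X^{\pcf}y^{-1})^{-1}=s\pl_X^{\pcf}/(y+s\pl_X^{\pcf})$. Next I would average over $\pl_X$ with the conditional density of Corollary~\ref{cor:cdist}, supported on $[0,\nmetric{j}y]$ (recall $\nmetric{j}=\metric{j}/\metric{k}$ when $\ct{}=k$) and carrying the normalizer $(1-e^{-G_k y^{\twople}})^{-1}$. The key structural observation is that this normalizer exactly cancels the thinning factor $(1-e^{-G_k y^{\twople}})$ in the intensity measure $\ndnstyr_{u,j}(dy)=\twople a_j y^{\twople-1}(1-e^{-G_k y^{\twople}})\,dy$ of Assumption~\ref{asmptn:ppp}. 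After this cancellation the tier-$j$ term is a double integral over $\{(l,y):0\le l\le\nmetric{j}y\}$ in which the factor $\twople G_j l^{\twople-1}e^{-G_j l^{\twople}}$ is recognized as the conditional path loss density $f_{\pl|\ct{}=j}$ of Lemma~\ref{lem:dist}.

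I would then apply Fubini, so that for fixed $l$ the variable $y$ ranges over $[l/\nmetric{j},\infty)$, and evaluate the inner integral $\int_{l/\nmetric{j}}^\infty \frac{s l^{\pcf}}{y+s l^{\pcf}}\,y^{\twople-1}\,dy$. The substitution $t=(l/\nmetric{j})/y$ maps this to $\int_0^1 t^{-\twople}(1+xt)^{-1}\,dt$ with $x=s\nmetric{j}/l^{1-\pcf}$, which is the Euler-type representation $\int_0^1 t^{-\twople}(1+xt)^{-1}\,dt=\tfrac{1}{1-\twople}\hgc_\twople(x)$ of the Gauss hypergeometric function with $\hgc_\twople(x)=\hg(1,1-\twople,2-\twople,-x)$. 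Collecting the prefactors $\nmetric{j}^{1-\twople}$ and $l^{\twople-(1-\pcf)}$ that fall out of the substitution, the remaining $l$-integral is exactly $\cexpect{\pl^{\twople-(1-\pcf)}\hgc_\twople(s\nmetric{j}/\pl^{1-\pcf})}{\pl|\ct{}=j}$, and summing over $j$ reproduces \eqref{eq:LapIk}.

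The main obstacle is this last step: carrying out the order-of-integration swap with its $l$-dependent limits and choosing the substitution so that both the hypergeometric argument $s\nmetric{j}/\pl^{1-\pcf}$ and the polynomial weight $\pl^{\twople-(1-\pcf)}$ emerge with the correct powers of $\nmetric{j}$, and then identifying the resulting one-dimensional integral with $\hgc_\twople$. The PGFL step and the fading expectation are routine, and the cancellation of the thinning factor against the normalizer of Corollary~\ref{cor:cdist} is the fortunate feature that keeps the whole computation in closed form.
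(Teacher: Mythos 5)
Your proposal is correct and follows essentially the same route as the paper's proof: tier-wise factorization, the fading expectation, the PGFL of the assumed inhomogeneous propagation process, the cancellation of the Corollary~\ref{cor:cdist} normalizer against the thinning factor, and the Fubini swap leading to the hypergeometric identification. The only cosmetic difference is your substitution mapping the inner integral to the Euler form $\int_0^1 t^{-\twople}(1+xt)^{-1}\,\mathrm{d}t$ rather than the paper's $t=(x\nmetric{j}/\pl)^{\twople}$ giving an integral over $[1,\infty)$; both yield $\tfrac{1}{1-\twople}\hgc_\twople$ with identical prefactors.
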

\begin{proof}
See Appendix \ref{sec:prooflapI}.
\end{proof}

Using the above Lemma and (\ref{eq:pcovexp}), the uplink $\SINR$ coverage is given in the following Theorem.
\begin{thm}\label{thm:pcov} The uplink $\SINR$  coverage probability for the proposed uplink generative model is
\small 
\begin{align*}
 \sum_{k=1}^{K} \twople a_k\int_{l>0}l^{\twople-1}\exp\left(-G_k l^{\twople} -\frac{\twople}{1-\twople}\SINRthresh l^{1-\pcf}\sum_{j=1}^K \left(\frac{\metric{j}}{\metric{k}}\right)^{1-\twople}a_{j}\cexpect{\pl^{\twople-(1-\pcf)}\hgc_\twople\left(\frac{\SINRthresh\metric{j}l^{1-\pcf}}{\metric{k}\pl^{1-\pcf}}\right)}{\pl|\ct{}={j}}-\frac{\SINRthresh}{\SNR} l^{1-\pcf} \right)\mathrm{d}l.
\end{align*}
\normalsize
\end{thm}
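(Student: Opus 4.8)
The plan is to evaluate each conditional coverage term $\pcov_k(\SINRthresh)$ separately and then assemble the total via $\pcov(\SINRthresh)=\sum_{k=1}^K\passoc_k\pcov_k(\SINRthresh)$ as in (\ref{eq:pcovexp}), the only difference from the $\SIR$ computation there being that the noise term is now retained. Conditioning on the serving tier $k$ and writing the coverage event as $\chanl>\pl^{1-\pcf}\SINRthresh(\SNR^{-1}+I)$, I would integrate out the desired channel gain $\chanl\sim\exp(1)$, which is independent of the interference and of the geometry. Because $\SNR^{-1}$ is deterministic, this factors the noise contribution out of the interference term and leaves
\begin{align*}
\pcov_k(\SINRthresh)=\cexpect{\exp\!\left(-\pl^{1-\pcf}\SINRthresh\,\SNR^{-1}\right)\cexpect{\exp\!\left(-\pl^{1-\pcf}\SINRthresh\,I\right)}{\pl,\,\ct{}=k}}{\ct{}=k},
\end{align*}
where the inner expectation is the conditional interference Laplace transform evaluated at $s=\pl^{1-\pcf}\SINRthresh$.

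The key structural point is that, under the proposed generative model (Assumptions~\ref{asmptn:ppp}--\ref{asmptn:dist}), the interfering-UE process and hence the law of $I$ depend on the tagged AP only through its tier $k$ (via $G_k$), and not on the typical UE's own path loss $\pl$. Consequently the inner conditional Laplace transform reduces to $\mgf_{I_k}(\pl^{1-\pcf}\SINRthresh)$ supplied by Lemma~\ref{lem:LapI}, and the remaining randomness is just the typical UE's path loss. Averaging against the conditional PDF $f_{\pl|\ct{}=k}(l)=\twople G_k l^{\twople-1}\exp(-G_k l^{\twople})$ from Lemma~\ref{lem:dist} gives
\begin{align*}
\pcov_k(\SINRthresh)=\int_0^\infty \twople G_k\,l^{\twople-1}\exp\!\left(-G_k l^{\twople}-\tfrac{\SINRthresh}{\SNR}l^{1-\pcf}\right)\mgf_{I_k}(l^{1-\pcf}\SINRthresh)\,\mathrm{d}l.
\end{align*}

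Finally I would insert $\passoc_k=a_k/G_k$ and the Lemma~\ref{lem:LapI} formula for $\mgf_{I_k}$. The factor $G_k$ from the PDF cancels against the $1/G_k$ in $\passoc_k$, leaving the prefactor $\twople a_k$. Setting $s=l^{1-\pcf}\SINRthresh$ inside $\mgf_{I_k}$ turns the argument of $\hgc_\twople$ into $\SINRthresh\metric{j}l^{1-\pcf}/(\metric{k}\pl^{1-\pcf})$ and the leading coefficient into $\tfrac{\twople}{1-\twople}\SINRthresh l^{1-\pcf}\sum_{j}(\metric{j}/\metric{k})^{1-\twople}a_j(\cdots)$, using $\nmetric{j}=\metric{j}/\metric{k}$ under the conditioning $\ct{}=k$ (here $\pl$ inside the inner expectation is the interfering-UE dummy variable, distinct from the outer integration variable $l$). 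Merging the noise exponent, the $-G_k l^{\twople}$ term and the interference exponent into a single exponential and summing over $k$ reproduces the claimed expression.

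The main obstacle is conceptual rather than computational: it is the justification of the factorization that replaces the inner Laplace transform by $\mgf_{I_k}$. In the actual network the typical UE's path loss and the interferer geometry are correlated through the shared Poisson--Voronoi tessellation, so this decoupling is exactly where the generative approximation of Assumptions~\ref{asmptn:ppp}--\ref{asmptn:dist} is invoked. Once that independence is granted, the remaining steps---integrating out the exponential desired channel, averaging over $f_{\pl|\ct{}=k}$, and the $G_k$ cancellation---are routine.
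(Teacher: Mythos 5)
Your proposal is correct and follows exactly the route the paper intends (the paper itself gives no explicit proof, stating only that the theorem follows from Lemma \ref{lem:LapI} and (\ref{eq:pcovexp})): condition on the serving tier, integrate out the exponential channel gain to split off the deterministic noise factor $\exp(-\SINRthresh l^{1-\pcf}/\SNR)$ and the conditional interference Laplace transform $\mgf_{I_k}(l^{1-\pcf}\SINRthresh)$, average against $f_{\pl|\ct{}=k}$ from Lemma \ref{lem:dist}, and cancel $G_k$ against $\passoc_k=a_k/G_k$. You also correctly locate the only non-routine step, namely that the decoupling of $I$ from the typical UE's own path loss is precisely what Assumptions \ref{asmptn:ppp}--\ref{asmptn:dist} grant.
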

The $\SIR$ coverage can be derived by letting $\SNR\to \infty$ in the above theorem. 
\begin{cor} \label{cor:sir} The uplink $\SIR$  coverage probability for the proposed uplink generative model is 
\small
\begin{align*}
\pcov(\SINRthresh)= \sum_{k=1}^{K} \twople a_k\int_{l>0}l^{\twople-1}\exp\left(-G_k l^{\twople} -\frac{\twople}{1-\twople}\SINRthresh l^{1-\pcf}\sum_{j=1}^K \left(\frac{\metric{j}}{\metric{k}}\right)^{1-\twople}a_{j}\cexpect{\pl^{\twople-(1-\pcf)}\hgc_\twople\left(\frac{\SINRthresh\metric{j}l^{1-\pcf}}{\metric{k}\pl^{1-\pcf}}\right)}{\pl|\ct{}={j}}\right)\mathrm{d}l.
\end{align*}
\normalsize
\end{cor}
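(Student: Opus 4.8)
The plan is to obtain this as a direct limit of Theorem \ref{thm:pcov}. First I would observe that the only structural difference between $\SINR$ in (\ref{eq:sinr}) and $\SIR$ is the additive noise term $\SNR^{-1}$ in the denominator; removing it is exactly the limit $\SNR \to \infty$. Tracing where this term enters the coverage expression, I would condition on $\ct{} = k$ and on the serving path loss $\pl = l$, so that the Rayleigh fading $H \sim \exp(1)$ turns $\pr(\SINR > \SINRthresh \mid \ct{} = k, \pl = l)$ into $\expect{\exp(-\SINRthresh l^{1-\pcf}(\SNR^{-1} + I))}$. This factors as $\exp(-\tfrac{\SINRthresh}{\SNR} l^{1-\pcf})$ times the Laplace transform $\mgf_{I_k}(\SINRthresh l^{1-\pcf})$ supplied by Lemma \ref{lem:LapI}, since the noise contribution is deterministic given $l$. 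The noise therefore contributes precisely the isolated factor $\exp(-\tfrac{\SINRthresh}{\SNR} l^{1-\pcf})$ inside the integrand of Theorem \ref{thm:pcov}, namely the term $-\tfrac{\SINRthresh}{\SNR}l^{1-\pcf}$ in the exponent.

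Next I would let $\SNR \to \infty$. For each fixed $l > 0$ the factor $\exp(-\tfrac{\SINRthresh}{\SNR} l^{1-\pcf})$ increases monotonically to $1$, so the full integrand converges pointwise and monotonically to the integrand claimed in the corollary. To move the limit inside the integral I would invoke monotone (or dominated) convergence: each integrand is bounded above, uniformly in $\SNR$, by the limiting integrand, which is itself integrable since it is the integrand of the finite $\SIR$ coverage probability $\pcov(\SINRthresh) \le 1$. This justifies the interchange and, after summing over the tiers $k$ weighted by $\twople a_k$, yields the stated formula.

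Since this amounts to evaluating an already-established formula at a boundary value of one parameter, I do not expect a substantial obstacle. The only point requiring a moment of care is the interchange of limit and integral, which the monotonicity above settles cleanly. I would note that an alternative, fully elementary route avoids even the explicit integral: because $\SIR \ge \SINR$ pointwise and $\SIR = \lim_{\SNR\to\infty}\SINR$ almost surely, bounded convergence applied directly to $\pr(\SINR > \SINRthresh) \to \pr(\SIR > \SINRthresh)$ gives the result, with the explicit expression then inherited from Theorem \ref{thm:pcov}.
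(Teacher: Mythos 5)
Your proposal is correct and matches the paper's own (one-line) justification: the paper simply states that the $\SIR$ coverage follows by letting $\SNR\to\infty$ in Theorem \ref{thm:pcov}, which is exactly your route of dropping the $-\frac{\SINRthresh}{\SNR}l^{1-\pcf}$ term from the exponent. Your additional care with monotone convergence to justify passing the limit inside the integral is a welcome (if implicit in the paper) refinement, not a departure.
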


The coverage expression for the most general case involves two folds of integrals and a lookup table for the Hypergeometric function. The expression is, however, further simplified for the special cases   in the next section. Useful bounds can hence be obtained in single integral-form (Corollary \ref{cor:ub}) and closed-form (Corollary \ref{cor:lb}) as below: 

\begin{cor}\label{cor:ub}The uplink $\SIR$ coverage for the proposed   generative model is  upper bounded by 
\small
\begin{align*}
 \pcov^u(\SINRthresh)=\sum_{k=1}^{K} \twople a_k\int_{l>0}l^{\twople-1}\exp\left(-G_k l^{\twople}  -\frac{\twople\SINRthresh l^{1-\pcf}}{(1-\twople)\Gamma(2+(1-\pcf)/\twople)} \sum_{j=1}^K \left(\frac{\metric{j}}{\metric{k}}\right)^{1-\twople}a_{j}G_j^{(1-\pcf)/\twople-1} \hgc_\twople\left(\frac{\SINRthresh\metric{j}l^{1-\pcf}G_j^{(1-\pcf)/\twople}}{\metric{k}\Gamma(2+(1-\pcf)/\twople)}\right) \right)\mathrm{d}l.
\end{align*}
\normalsize
\end{cor}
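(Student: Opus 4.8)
The plan is to obtain the bound from the exact $\SIR$ coverage of Corollary~\ref{cor:sir} by replacing, for each tier $j$, the conditional expectation
\[
\Xi_j(l) \triangleq \cexpect{\pl^{\twople-(1-\pcf)}\hgc_\twople\!\left(\frac{\SINRthresh\metric{j}l^{1-\pcf}}{\metric{k}\pl^{1-\pcf}}\right)}{\pl|\ct{}=j}
\]
by a tractable \emph{lower} bound $\hat\Xi_j(l)$. Each $\Xi_j(l)$ enters the integrand of Corollary~\ref{cor:sir} inside a decreasing exponential with the nonnegative coefficient $\tfrac{\twople}{1-\twople}\SINRthresh l^{1-\pcf}(\metric{j}/\metric{k})^{1-\twople}a_j$, so any pointwise-in-$l$ lower bound $\Xi_j(l)\ge\hat\Xi_j(l)$ immediately produces an \emph{upper} bound on $\pcov(\SINRthresh)$; the outer sum over $k$ and the integral over $l$ preserve the inequality because $\twople a_k l^{\twople-1}\ge0$.

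First I would linearize the Gauss hypergeometric factor. Writing $A\triangleq\SINRthresh\metric{j}l^{1-\pcf}/\metric{k}$ and using the definition of $\hgc_\twople$ in Lemma~\ref{lem:LapI} together with Euler's integral representation, namely $\hgc_\twople(y)=(1-\twople)\int_0^1 u^{-\twople}(1+yu)^{-1}\,\mathrm{d}u$ (valid since $\twople<1$), I substitute $W\triangleq\pl^{\twople}$, which under the conditional law $f_{\pl|\ct{}=j}$ of Lemma~\ref{lem:dist} is exponential with rate $G_j$, so that $\pl^{1-\pcf}=W^{(1-\pcf)/\twople}$ and $\pl^{\twople-(1-\pcf)}=W\,\pl^{-(1-\pcf)}$. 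Clearing the factor $\pl^{1-\pcf}$ through each Euler integrand recasts the expectation as
\[
\Xi_j(l)=(1-\twople)\int_0^1 u^{-\twople}\,\expect{\frac{W}{W^{(1-\pcf)/\twople}+Au}}\,\mathrm{d}u,
\]
which concentrates all of the $\pl$-randomness into the scalar functional $\expect{W/(W^{(1-\pcf)/\twople}+Au)}$ for each fixed $u$.

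The key step is a Cauchy--Schwarz lower bound on this functional. The factorization $W=\big(W/(W^{(1-\pcf)/\twople}+Au)\big)^{1/2}\big(W(W^{(1-\pcf)/\twople}+Au)\big)^{1/2}$ yields
\[
(\expect{W})^2\le \expect{\frac{W}{W^{(1-\pcf)/\twople}+Au}}\,\expect{W^{1+(1-\pcf)/\twople}+AuW},
\]
hence $\expect{W/(W^{(1-\pcf)/\twople}+Au)}\ge (\expect{W})^2\big(\expect{W^{1+(1-\pcf)/\twople}}+Au\,\expect{W}\big)^{-1}$. Inserting the exponential moments $\expect{W}=1/G_j$ and $\expect{W^{1+(1-\pcf)/\twople}}=\Gamma(2+(1-\pcf)/\twople)\,G_j^{-(1+(1-\pcf)/\twople)}$ collapses the denominator to $G_j^{-1}\big(\Gamma(2+(1-\pcf)/\twople)\,G_j^{-(1-\pcf)/\twople}+Au\big)$; substituting back into the $u$-integral and factoring out $\Gamma(2+(1-\pcf)/\twople)\,G_j^{-(1-\pcf)/\twople}$ reconstitutes exactly one more Euler integral of the form $(1-\twople)\int_0^1 u^{-\twople}(1+yu)^{-1}\,\mathrm{d}u$ with $y=A\,G_j^{(1-\pcf)/\twople}/\Gamma(2+(1-\pcf)/\twople)$. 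This reassembles into $\hgc_\twople$ and gives $\hat\Xi_j(l)=\frac{G_j^{(1-\pcf)/\twople-1}}{\Gamma(2+(1-\pcf)/\twople)}\hgc_\twople\!\big(\frac{\SINRthresh\metric{j}l^{1-\pcf}G_j^{(1-\pcf)/\twople}}{\metric{k}\Gamma(2+(1-\pcf)/\twople)}\big)$, which is precisely the term appearing in the stated bound.

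The main obstacle I anticipate is identifying the correct inequality: a naive Jensen step applied directly to the conditional path-loss law would move $\pl$ into $\hgc_\twople$ through $\expect{\pl^{\twople-(1-\pcf)}}=\Gamma(2-(1-\pcf)/\twople)G_j^{(1-\pcf)/\twople-1}$ and therefore produce the factor $\Gamma(2-(1-\pcf)/\twople)$ rather than the reciprocal $\Gamma(2+(1-\pcf)/\twople)$ that the statement requires. The decisive move is to pass to the Euler integral \emph{first}, reducing the nonlinearity to the rational kernel $W/(W^{(1-\pcf)/\twople}+Au)$, for which weighting $\sqrt{W/(W^{(1-\pcf)/\twople}+Au)}$ against $\sqrt{W(W^{(1-\pcf)/\twople}+Au)}$ makes the numerator collapse to $W$ and surfaces the $\big(1+(1-\pcf)/\twople\big)$-th moment of the exponential variable. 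A secondary check is integrability: finiteness of $\expect{W^{1+(1-\pcf)/\twople}}$ holds because all positive moments of an exponential are finite, and convergence of the $u$-integral near $u=0$ holds because $\twople<1$.
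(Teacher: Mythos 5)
Your proof is correct and is essentially the paper's proof in different clothing: the paper exchanges the auxiliary integral (its $t$-integral over $[1,\infty)$, which is your Euler $u$-integral under the substitution $u=t^{-1/\twople}$) with the path-loss expectation and applies Jensen's inequality to the convex map $x\mapsto 1/(1+cx)$ under the size-biased density $f_{\pl'|\ct{}=j}(l)=\twople G_j^2 l^{2\twople-1}e^{-G_j l^{\twople}}$, which is exactly your Cauchy--Schwarz bound $\expect{W/(W^{(1-\pcf)/\twople}+Au)}\ge (\expect{W})^2\big/\big(\expect{W^{1+(1-\pcf)/\twople}}+Au\,\expect{W}\big)$ restated on the original measure. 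Both routes surface the same moment $\Gamma(2+(1-\pcf)/\twople)G_j^{-(1-\pcf)/\twople}$ and reassemble the same $\hgc_\twople$ term, so the resulting bound is identical.
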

\begin{IEEEproof}
See Appendix \ref{sec:proofub}.
\end{IEEEproof}

\begin{rem} It can be noted from the above proof that the coverage upper bound is, in fact, exact for full channel inversion, i.e., $\pcf=1$.
\end{rem}

\begin{cor}\label{cor:lb}
The uplink $\SIR$ coverage is lower bounded by   
\begin{equation*} 
\pcov^l(\SINRthresh)= \exp\left(-\SINRthresh^{\twople}\frac{\pi^2\twople \pcf(1-\pcf)}{\sin(\pi\twople) \sin(\pi \pcf)}\left(\sum_{k=1}^K\frac{\ednsty{k}}{G_k^{2-\pcf}}\right)\left(\sum_{k=1}^K\frac{\ednsty{k}}{G_k^{\pcf}}\right)\right).
\end{equation*}
\end{cor}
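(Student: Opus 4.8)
The plan is to start from the exact $\SIR$ coverage in Corollary \ref{cor:sir} and reduce it to closed form through two successive convexity arguments, after first replacing the Gauss hypergeometric factor by a clean monomial upper bound. The guiding observation is that $\hgc_\twople$ enters the exponent of Corollary \ref{cor:sir} with a negative sign, so an \emph{upper} bound on $\hgc_\twople$ yields a \emph{lower} bound on $\pcov(\SINRthresh)$.

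First I would bound $\hgc_\twople$. Writing $\hgc_\twople(x)=\hg(1,1-\twople,2-\twople,-x)$ in its Euler integral form $\hgc_\twople(x)=(1-\twople)\int_0^1 \frac{t^{-\twople}}{1+xt}\,\mathrm{d}t$ and enlarging the domain from $[0,1]$ to $[0,\infty)$ gives the uniform inequality $\hgc_\twople(x)\le (1-\twople)\frac{\pi}{\sin(\pi\twople)}\,x^{\twople-1}$ for all $x>0$, since $\int_0^\infty \frac{t^{-\twople}}{1+xt}\,\mathrm{d}t=\frac{\pi}{\sin(\pi\twople)}x^{\twople-1}$. Substituting this into the exponent of Corollary \ref{cor:sir}, the prefactor $\frac{\twople}{1-\twople}$ absorbs the $(1-\twople)$, the weight ratios $(\metric{j}/\metric{k})^{1-\twople}$ cancel against the $(\metric{j}/\metric{k})^{\twople-1}$ produced by the monomial, and the powers of $\pl$ collapse so that the inner conditional expectation reduces to the Gamma moment $\cexpect{\pl^{\twople\pcf}}{\pl|\ct{}=j}=\Gamma(1+\pcf)/G_j^{\pcf}$ (using that $\pl^{\twople}$ is exponential with rate $G_j$ conditional on $\ct{}=j$, by Lemma \ref{lem:dist}), while the surviving dependence on the integration variable is $l^{(1-\pcf)\twople}$.

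The key structural point is that the resulting coefficient $D\triangleq \twople\frac{\pi}{\sin(\pi\twople)}\Gamma(1+\pcf)\SINRthresh^{\twople}\sum_{j}\ednsty{j}/G_j^{\pcf}$ no longer depends on the tagged tier $k$. Hence, after recognizing $\twople\,\ednsty{k}\,l^{\twople-1}\exp(-G_k l^{\twople})$ as $\passoc_k f_{\pl|\ct{}=k}(l)$, the sum over $k$ recombines into a single unconditional expectation over the serving path loss, giving $\pcov(\SINRthresh)\ge \expect{\exp(-D\,\pl^{(1-\pcf)\twople})}$. I would then apply Jensen's inequality using convexity of $x\mapsto e^{-x}$ to obtain $\pcov(\SINRthresh)\ge \exp(-D\,\expect{\pl^{(1-\pcf)\twople}})$, and evaluate the remaining moment as $\expect{\pl^{(1-\pcf)\twople}}=\Gamma(2-\pcf)\sum_{k}\ednsty{k}/G_k^{2-\pcf}$ (again a Gamma moment of $\pl^{\twople}$, now averaged over the association probabilities $\passoc_k$ via Lemma \ref{lem:dist}). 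Collecting constants and using the reflection formula $\pi/\sin(\pi\pcf)=\Gamma(\pcf)\Gamma(1-\pcf)$ to rewrite $\twople\,\Gamma(1+\pcf)\Gamma(2-\pcf)\,\pi/\sin(\pi\twople)$ as $\frac{\pi^2\twople\pcf(1-\pcf)}{\sin(\pi\twople)\sin(\pi\pcf)}$ reproduces the stated bound, with the two factors $\left(\sum_k \ednsty{k}/G_k^{2-\pcf}\right)$ and $\left(\sum_k \ednsty{k}/G_k^{\pcf}\right)$ arising from the desired-link and interferer moments respectively.

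I expect the main obstacle to be Step one: identifying the hypergeometric bound that is simultaneously valid for every argument \emph{and} tight enough that the subsequent algebra collapses to exactly the product of sums above. The domain-extension bound is precisely calibrated for this — it is in fact asymptotically exact as $x\to\infty$, where the interference integral concentrates — whereas a cruder bound on $\hgc_\twople$ would inflate the leading constant and fail to match the claimed closed form. Everything downstream (the cancellation of the weight ratios, the two Gamma-moment evaluations, and the final reflection-formula simplification) is then routine.
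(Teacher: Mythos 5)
Your proof is correct and is essentially the paper's own argument in analytic form: extending the Euler integral of $\hgc_\twople$ from $[0,1]$ to $[0,\infty)$ is exactly equivalent to the paper's step of neglecting both the truncation of the interferers' path-loss distribution (using Lemma \ref{lem:dist} in place of Corollary \ref{cor:cdist}) and the thinning factor $1-\exp(-G_k x^{\twople})$ in the interferer intensity, which together extend the $t$-integral in the proof of Lemma \ref{lem:LapI} from $[1,\infty)$ to $[0,\infty)$. The subsequent steps---Jensen's inequality applied to the serving-link path loss, the two Gamma moments $\Gamma(1+\pcf)/G_j^{\pcf}$ and $\Gamma(2-\pcf)\sum_{k}\ednsty{k}/G_k^{2-\pcf}$, and the reflection formula---coincide with the paper's proof.
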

\begin{proof}
See Appendix \ref{sec:prooflb}.
\end{proof}

\subsection{Special cases}
For  the following plausible special  cases, the uplink  $\SIR$ coverage expression is further simplified.
\begin{cor}(\textbf{$K=1$}) The uplink $\SIR$ coverage in a single tier network with density $\dnsty{1}$ is 
\begin{equation*}
\pcov(\SINRthresh)=  \twople \ednsty{1}\int_{l>0}l^{\twople-1}\exp\left(-\ednsty{1} l^{\twople} -\frac{\twople}{1-\twople}\SINRthresh l^{1-\pcf} \ednsty{1}\cexpect{\pl ^{\twople-(1-\pcf)}\hgc_\twople\left(\frac{\SINRthresh l^{1-\pcf}}{\pl ^{1-\pcf}}\right)}{\pl}\right)\mathrm{d}l,
\end{equation*}
where $\ednsty{1}=  \dnsty{1}\pi \expect{\sha^{\twople}}$.
\end{cor}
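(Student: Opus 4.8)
The plan is to obtain this statement as a direct specialization of Corollary \ref{cor:sir} to $K=1$, so no new machinery is required. First I would observe that with a single tier both the outer sum over $k$ and the inner sum over $j$ in the general $\SIR$ coverage expression collapse to their single $k=j=1$ terms. Next, from the definition $G_k = \sum_{j=1}^K \ednsty{j}(\metric{j}/\metric{k})^{\twople}$, the $K=1$ case immediately gives $G_1 = \ednsty{1}$, since the only weight ratio $\metric{1}/\metric{1}$ equals $1$; this same ratio makes the prefactor $(\metric{j}/\metric{k})^{1-\twople}$ equal to $1$ and reduces the argument of the Gauss-Hypergeometric term from $\frac{\SINRthresh\metric{j}l^{1-\pcf}}{\metric{k}\pl^{1-\pcf}}$ to $\frac{\SINRthresh l^{1-\pcf}}{\pl^{1-\pcf}}$.

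The one point that needs a moment of care is the tier-conditioning in the expectation. By Lemma \ref{lem:dist}, the association probability in a single-tier network is $\passoc_1 = \ednsty{1}/G_1 = 1$, so conditioning on $\ct{}=1$ is conditioning on an almost-sure event. Consequently $\cexpect{\cdot}{\pl|\ct{}=1}$ collapses to the unconditional expectation $\cexpect{\cdot}{\pl}$, taken over the path loss PDF $f_{\pl}(l) = \twople\ednsty{1}l^{\twople-1}\exp(-\ednsty{1}l^{\twople})$ guaranteed by the same Lemma. Substituting $G_1 = \ednsty{1}$ together with the simplifications above into the integrand of Corollary \ref{cor:sir}, and replacing the leading $\twople a_k$ with $\twople\ednsty{1}$, produces exactly the single-integral expression claimed, with $\ednsty{1}=\dnsty{1}\pi\expect{\sha^{\twople}}$ as stated.

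Since the result is a pure specialization, I expect no genuine obstacle; the only thing to verify carefully is that the two formally distinct quantities $G_1$ and $\ednsty{1}$ coincide and that the tier-conditioning trivializes, both of which follow from the degenerate $K=1$ geometry rather than from any additional estimate.
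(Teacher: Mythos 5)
Your proposal is correct and matches the paper's (implicit) approach exactly: the corollary is stated as a direct specialization of Corollary \ref{cor:sir} to $K=1$, and your verification that $G_1=\ednsty{1}$, that the weight ratios reduce to unity, and that the tier-conditioning trivializes (since $\passoc_1=1$) is precisely what is needed. Nothing further is required.
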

The above expression differs from the one in \cite{NovUL} due to the proposed  interference characterization. In \cite{NovUL}, the distribution of path loss of each interfering UE to its serving AP was assumed i.i.d. 

\begin{cor}($\metric{j} = \metric{k} \forall j,k$) The uplink $\SIR$ coverage in a $K$-tier network with min-path loss  association is the same as the coverage of a single tier network with density $\dnsty{}=\sum_{k=1}^K \dnsty{k}$.
\end{cor}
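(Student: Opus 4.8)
The plan is to specialize the general $\SIR$ coverage formula of Corollary~\ref{cor:sir} to the case $\metric{j}=\metric{k}$ for all $j,k$ and show that it collapses, term by term, to the $K=1$ expression with effective density $\dnsty{}=\sum_{k=1}^K\dnsty{k}$. The first observation I would record is the effect of equal weights on the constants: since every ratio $\metric{j}/\metric{k}=1$, the definition $G_k=\sum_{j=1}^K a_j(\metric{j}/\metric{k})^{\twople}$ reduces to $G_k=\sum_{j=1}^K a_j=\pi\expect{\sha^{\twople}}\dnsty{}$, which is \emph{independent of} $k$. I would denote this common value by $G$ and note that it equals precisely the constant $\ednsty{1}=\pi\expect{\sha^{\twople}}\dnsty{}$ appearing in the single-tier corollary. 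The same cancellation kills the prefactor $(\metric{j}/\metric{k})^{1-\twople}$ and removes the tier dependence in the argument $\SINRthresh\metric{j}l^{1-\pcf}/(\metric{k}\pl^{1-\pcf})$ of $\hgc_\twople$.

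The key structural step is then to collapse the two nested sums. First, because $G_j=G$ for every $j$, the conditional path-loss density from Lemma~\ref{lem:dist}, $f_{\pl|\ct{}=j}(l)=\twople G_j l^{\twople-1}\exp(-G_j l^{\twople})$, no longer depends on the conditioning tier $j$ and coincides with the single-tier path-loss density. Consequently each conditional expectation $\cexpect{\pl^{\twople-(1-\pcf)}\hgc_\twople(\SINRthresh l^{1-\pcf}/\pl^{1-\pcf})}{\pl|\ct{}=j}$ is the same for all $j$; I would pull this common expectation out of the inner $j$-sum, leaving the prefactor $\sum_{j=1}^K a_j=G$. Second, with the weights equalized the entire exponent inside the integral becomes independent of the outer index $k$, so the outer sum contributes only through $\sum_{k=1}^K\twople a_k=\twople G=\twople\ednsty{1}$. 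After these two collapses the expression is literally the integrand of the $K=1$ corollary with $\ednsty{1}=G=\pi\expect{\sha^{\twople}}\dnsty{}$, which completes the identification.

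I do not anticipate a genuine obstacle here: the argument is a direct substitution once one recognizes that equal association weights make both the per-tier constants $G_k$ and the conditional path-loss laws tier-invariant. The only point deserving care is the justification for pulling the conditional expectation out of the $j$-sum, which rests entirely on the fact that the densities $f_{\pl|\ct{}=j}$ coincide across tiers under minimum path loss association; I would state this explicitly rather than treat it as immediate. It is also reassuring to verify consistency through the association probabilities $\passoc_k=\ednsty{k}/G_k=\ednsty{k}/G$, which sum to one as required.
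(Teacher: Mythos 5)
Your proposal is correct and matches the paper's (implicit) reasoning: the paper states this corollary without proof precisely because it is the direct specialization of Corollary~\ref{cor:sir} that you carry out, with $G_k=\sum_j a_j$ becoming tier-invariant, the weight ratios cancelling, and both sums collapsing to the single-tier expression with $\ednsty{1}=\pi\expect{\sha^{\twople}}\sum_k\dnsty{k}$. The only remark worth adding is that a shorter model-level argument is also available --- under equal weights the association, the conditional path-loss laws, and the interfering-UE intensity in Assumption~\ref{asmptn:ppp} depend on the tiers only through the superposition $\cup_k\PPP_k$, which is itself a PPP of density $\sum_k\dnsty{k}$ --- but your formula-level verification is complete as it stands.
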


\begin{cor} ($\pcf=0$)
Without uplink  power control,  the uplink $\SIR$ coverage is  
\begin{align*} 
\pcov(\SINRthresh)= \sum_{k=1}^{K} \twople a_k\int_{l>0}l^{\twople-1}\exp\left(-G_k l^{\twople}-\ednsty{}\int_{0}^\infty \frac{1-\exp(- G_k  x)}{1+(\SINRthresh l)^{-1}x^{-1/\twople}}\mathrm{d}x\right)\mathrm{d}l,
\end{align*}
where $a =\sum_{j=1}^K a_j$.
\end{cor}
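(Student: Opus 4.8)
The plan is to specialize the general machinery behind Corollary~\ref{cor:sir} (equivalently Lemma~\ref{lem:LapI}) to $\pcf=0$, and the cleanest route is to observe that at $\pcf=0$ the entire derivation collapses to an ordinary shot-noise computation. Setting $\pcf=0$ in (\ref{eq:sinr}) makes the transmit power $\pl_X^{\pcf}=1$ for every interferer, so the interference $I=\sum_{X\in\PPPu^b}H_X\pl(X,\cc{})^{-1}$ depends only on the propagation distances $\pl(X,\cc{})$ to the tagged AP and the i.i.d.\ fades $H_X$, and no longer on the interfering UEs' path losses $\pl_X$ to their own serving APs. Consequently Corollary~\ref{cor:cdist} and Assumption~\ref{asmptn:dist} play no role, and the Gauss-hypergeometric factor of Lemma~\ref{lem:LapI}, which encodes precisely that transmit-power dependence, should degenerate. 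I would therefore carry the interfering-UE propagation process $\PropPP_u$ of Assumptions~\ref{asmptn:ppp}--\ref{asmptn:indpndnc} directly into a Laplace-functional evaluation.

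Conditioned on the tagged AP being of tier $k$, Assumptions~\ref{asmptn:ppp} and \ref{asmptn:indpndnc} give $\PropPP_u$ as an inhomogeneous PPP on $\real{+}$ with intensity $\twople a\, x^{\twople-1}\big(1-\exp(-G_k x^{\twople})\big)$, where $a=\sum_{j=1}^K a_j$, each point $x$ carrying an independent $\exp(1)$ mark and contributing $H/x$ to $I$. Applying the probability generating functional of a PPP together with $\expect{e^{-tH}}=(1+t)^{-1}$ at $t=s/x$, I obtain
\begin{equation*}
\mgf_{I_k}(s)=\exp\left(-\int_0^\infty \frac{s/x}{1+s/x}\,\twople a\, x^{\twople-1}\big(1-\exp(-G_k x^{\twople})\big)\,\mathrm{d}x\right).
\end{equation*}
The change of variable $x\mapsto x^{\twople}$ removes the $\twople x^{\twople-1}$ factor and turns the thinning factor into $1-\exp(-G_k x)$, leaving a single inner integral of exactly the form in the statement, with $s$ playing the role of $\SINRthresh l$. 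Convergence for $0<\twople<1$ should be noted but is routine, the integrand being $O(x^{\twople-2})$ at infinity and tamed near the origin by the thinning factor.

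Finally I would assemble the coverage via (\ref{eq:pcovexp}): since at $\pcf=0$ the event is $\{H_O>\SINRthresh\,\pl_O\,I\}$, conditioning on $\pl_O=l$ and the serving tier gives $\pcov_k(\SINRthresh)=\expect{\mgf_{I_k}(\SINRthresh\,\pl)\mid\ct{}=k}$, so that $\pr(\ct{}=k)\pcov_k$ integrates $\mgf_{I_k}(\SINRthresh l)$ against $\passoc_k f_{\pl\mid\ct{}=k}(l)=\twople a_k\, l^{\twople-1}\exp(-G_k l^{\twople})$ from Lemma~\ref{lem:dist}; summing over $k$ yields the claimed double integral. The one genuine subtlety---the step I would be most careful about---is the very first one: justifying that the $\pcf=0$ specialization may bypass Assumption~\ref{asmptn:dist} entirely rather than inheriting the conditional expectation over the interferer path loss $\pl$ that still appears inside $\hgc_\twople(\SINRthresh\nmetric{j}/\pl)$ in Lemma~\ref{lem:LapI}. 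One must confirm that the direct shot-noise answer coincides with what Lemma~\ref{lem:LapI} produces after that expectation is carried out; the direct route sidesteps the hypergeometric bookkeeping and is therefore the cleaner derivation.
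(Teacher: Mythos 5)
Your approach is correct and is essentially the derivation the paper intends: the corollary is just Lemma~\ref{lem:LapI} specialized to $\pcf=0$, where in step (c) of its proof the integrand $\frac{1}{1+(s\pl^{\pcf})^{-1}x}$ loses all dependence on the interferer's own path loss, so the conditional expectation (and with it Corollary~\ref{cor:cdist} and Assumption~\ref{asmptn:dist}) drops out and one is left with the plain probability generating functional of the inhomogeneous propagation process with intensity $\twople a x^{\twople-1}(1-\exp(-G_k x^{\twople}))$ --- exactly the shot-noise computation you perform, assembled against $\passoc_k f_{\pl|\ct{}=k}(l)=\twople a_k l^{\twople-1}\exp(-G_k l^{\twople})$. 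One caveat: carrying out your own substitution $u=x^{\twople}$ on $\frac{s/x}{1+s/x}=\frac{1}{1+s^{-1}x}$ yields the inner integrand $\frac{1-\exp(-G_k u)}{1+(\SINRthresh l)^{-1}u^{1/\twople}}$, i.e.\ exponent $+1/\twople$, whereas the statement as printed has $x^{-1/\twople}$; the printed form diverges (its integrand tends to $1$ as $x\to\infty$ since $\twople<1$), so this is a sign typo in the corollary, and you should not have claimed your result matches the stated form ``exactly.''
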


\begin{cor} ($\pcf=1$)
With full  channel inversion,  the coverage is 
\begin{align*}
\pcov(\SINRthresh)= \sum_{k=1}^K \frac{\ednsty{k}}{G_k}\exp\left(-\frac{\twople}{\twople-1} \SINRthresh\sum_{j=1}^{K} \left(\frac{\metric{j}}{\metric{k}}\right)^{1-\twople}\frac{\ednsty{j}}{G_j}\hgc_\twople\left( \SINRthresh \frac{\metric{j}}{\metric{k}}\right)\right) .
\end{align*}
\end{cor}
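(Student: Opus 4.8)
The plan is to obtain this closed form by specializing the general $\SIR$ coverage of Corollary~\ref{cor:sir} to $\pcf=1$. The crucial observation is that full channel inversion removes the dependence of both integrands on the desired-link path loss $\pl$. Setting $\pcf=1$ gives $l^{1-\pcf}=1$ and $\pl^{\twople-(1-\pcf)}=\pl^{\twople}$, and, most importantly, the argument of the hypergeometric function $\hgc_\twople\!\left(\frac{\SINRthresh\metric{j}l^{1-\pcf}}{\metric{k}\pl^{1-\pcf}}\right)$ collapses to the constant $\hgc_\twople\!\left(\frac{\SINRthresh\metric{j}}{\metric{k}}\right)$, independent of both the integration variable $l$ and the conditioning path loss $\pl$.

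First I would pull this now-constant hypergeometric factor out of the inner conditional expectation, which reduces that expectation to the $\twople$-th conditional moment of the path loss, $\cexpect{\pl^{\twople}}{\pl|\ct{}=j}$. Using the conditional PDF $f_{\pl|\ct{}=j}(l)=\twople G_j l^{\twople-1}\exp(-G_j l^{\twople})$ from Lemma~\ref{lem:dist} and the substitution $u=l^{\twople}$, this moment is simply the mean of an exponential random variable of rate $G_j$, namely $1/G_j$. The prefactor of the sum is inherited unchanged from Corollary~\ref{cor:sir}, since nothing in the $\pcf=1$ substitution alters it.

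After this substitution the entire second term in the exponent no longer depends on $l$, so it factors out of the outer $l$-integral. What remains is $\twople a_k\int_{0}^{\infty} l^{\twople-1}\exp(-G_k l^{\twople})\,\mathrm{d}l$, which, by the same $u=l^{\twople}$ change of variable, evaluates to $a_k/G_k=\passoc_k$, the tier-$k$ association probability. Collecting the tier-$k$ terms then yields exactly the stated expression $\sum_{k}\frac{\ednsty{k}}{G_k}\exp(\cdots)$.

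Equivalently, and perhaps more transparently, one can work directly from Lemma~\ref{lem:LapI}: at $\pcf=1$ the Laplace-transform argument $\pl^{1-\pcf}\SINRthresh$ appearing in $\pcov_k=\expect{\mgf_{I_k}(\pl^{1-\pcf}\SINRthresh)}$ reduces to the deterministic value $\SINRthresh$, so the outer expectation over $\pl$ disappears and $\pcov_k=\mgf_{I_k}(\SINRthresh)$; multiplying by $\passoc_k=\ednsty{k}/G_k$ and summing over $k$ gives the result. I do not anticipate a genuine obstacle here: the proof is a specialization rather than a new derivation, and the only quantitative inputs are two elementary Gamma integrals. The one point requiring care is verifying that the argument of $\hgc_\twople$ truly becomes variable-free at $\pcf=1$---it is this decoupling, and nothing else, that collapses the double integral of Corollary~\ref{cor:sir} to a single closed-form sum.
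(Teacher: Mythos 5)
Your proof is correct and is exactly the specialization the paper intends: set $\pcf=1$ in Corollary~\ref{cor:sir} (equivalently, observe $\pcov_k=\mgf_{I_k}(\SINRthresh)$ from Lemma~\ref{lem:LapI} since $\pl^{1-\pcf}\equiv 1$), note that the argument of $\hgc_\twople$ becomes constant, evaluate $\cexpect{\pl^{\twople}}{\pl|\ct{}=j}=1/G_j$, and integrate out $l$ to obtain $\passoc_k=\ednsty{k}/G_k$. One remark: your computation produces the exponent prefactor $-\frac{\twople}{1-\twople}$, whereas the statement prints $-\frac{\twople}{\twople-1}$; the former (negative, since $\twople<1$) is the correct sign, consistent with Lemma~\ref{lem:LapI} and with the further special case in Corollary~\ref{cor:pcovpcmin}, so the printed sign is a typo in the paper rather than a gap in your argument.
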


\begin{cor} ($\pcf=0,\metric{j} = \metric{k} \forall j,k$)
Without power control and with min path loss association, the uplink $\SIR$ coverage is  
\begin{align*}
\pcov(\SINRthresh)=  \twople \ednsty{}\int_{l>0}l^{\twople-1}\exp\left(-\ednsty{ } l^{\twople}-\ednsty{ }\frac{\twople}{1-\twople}\SINRthresh l  \cexpect{\pl ^{\twople-1}\hgc_\twople\left(\frac{\SINRthresh l}{\pl }\right)}{\pl}\right)\mathrm{d}l.\nonumber
\end{align*}
\end{cor}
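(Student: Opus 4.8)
The plan is to obtain this expression as a direct specialization of the general $\SIR$ coverage formula in Corollary~\ref{cor:sir}, imposing both the min-path-loss condition $\metric{j}=\metric{k}$ for all $j,k$ and the no-power-control condition $\pcf=0$. Equivalently, one may chain the two preceding special-case corollaries: first reduce the $K$-tier min-path-loss network to an effective single-tier network, and then set $\pcf=0$ in the resulting single-tier expression. I would carry out the reductions in this order because it isolates exactly where each hypothesis is used.

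First I would impose $\metric{j}=\metric{k}$. Under this choice every ratio $\metric{j}/\metric{k}=1$, so $(\metric{j}/\metric{k})^{1-\twople}=1$, and the normalizing constants collapse to a common value $G_k=\sum_{j=1}^K a_j(\metric{j}/\metric{k})^\twople=\sum_{j=1}^K a_j\eqqcolon\ednsty{}$ for every $k$. The key observation is that, because $G_j$ no longer depends on $j$, the conditional path-loss density $f_{\pl|\ct{}=j}(l)=\twople G_j l^{\twople-1}\exp(-G_j l^\twople)$ from Lemma~\ref{lem:dist} becomes identical across tiers; hence the conditional expectation $\cexpect{\pl^{\twople-(1-\pcf)}\hgc_\twople(\,\cdot\,)}{\pl|\ct{}=j}$ is the same for all $j$ and can be factored out of the inner sum, leaving $\sum_{j=1}^K a_j=\ednsty{}$ as a prefactor.

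With the inner sum resolved, the integrand of Corollary~\ref{cor:sir} is the same for every $k$, so the outer sum contributes only $\sum_{k=1}^K a_k=\ednsty{}$. This is precisely the content of the min-path-loss corollary: the coverage equals that of a single-tier network of aggregate density, with $\ednsty{}$ playing the role of $\ednsty{1}$ in the $K=1$ expression.

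Finally I would set $\pcf=0$, so that $1-\pcf=1$. Then $l^{1-\pcf}=l$, the exponent $\pl^{\twople-(1-\pcf)}$ becomes $\pl^{\twople-1}$, and the Gauss-hypergeometric argument $\SINRthresh l^{1-\pcf}/\pl^{1-\pcf}$ simplifies to $\SINRthresh l/\pl$; substituting these into the single-tier expression reproduces the claimed formula verbatim. There is no genuine analytic obstacle here, as the result is pure bookkeeping, so the only point requiring care is confirming that the two reductions commute. This holds precisely because min-path-loss association renders the conditional path-loss law tier-independent, decoupling it from the value of $\pcf$, so the order in which the two hypotheses are applied is immaterial.
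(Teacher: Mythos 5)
Your proof is correct and takes essentially the same route the paper intends: this corollary is stated without a separate proof precisely because it is the direct specialization of Corollary~\ref{cor:sir} that you describe, where $\metric{j}=\metric{k}$ forces $G_j\equiv\ednsty{}$ so the conditional path-loss law becomes tier-independent and both sums collapse to the aggregate density, after which $\pcf=0$ is pure substitution. Nothing further is needed.
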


\begin{cor}\label{cor:pcovpcmin} ($\pcf=1, \metric{j} = \metric{k} \forall j,k$)
With full channel inversion based power control and with min path loss association, the uplink $\SIR$ coverage is 
\begin{align*}
\pcov(\SINRthresh)= \exp\left(-\frac{\twople\SINRthresh}{1-\twople} \hgc_\twople(\SINRthresh)\right) .
\end{align*}
\end{cor}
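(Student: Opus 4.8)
The plan is to obtain this closed form by specializing the general $\SIR$ coverage expression of Corollary \ref{cor:sir} under the two simultaneous restrictions $\pcf=1$ and $\metric{j}=\metric{k}$ for all $j,k$; equivalently, one may combine the single-tier reduction (the min-path-loss corollary) with the full-inversion corollary. I prefer to work directly from Corollary \ref{cor:sir}, so that both simplifications are applied in a single pass and the sign of the $\frac{\twople}{1-\twople}$ prefactor is tracked explicitly (recall $\twople=2/\ple<1$, so $\frac{\twople}{1-\twople}>0$ and the exponent is genuinely negative).

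First I would set $\pcf=1$. This removes every $l$- and $\pl$-dependence from the power exponents: $l^{1-\pcf}=1$, the factor $\pl^{\twople-(1-\pcf)}$ becomes $\pl^{\twople}$, and the argument of $\hgc_\twople$ loses its dependence on $l$ and $\pl$, collapsing to $\SINRthresh\,\metric{j}/\metric{k}$. Next I would impose $\metric{j}=\metric{k}$, so that every ratio $\metric{j}/\metric{k}=1$, the normalizing constants collapse to $G_k=\sum_{j=1}^K a_j=:a$ independently of $k$, and the hypergeometric argument becomes simply $\SINRthresh$.

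With these substitutions $\hgc_\twople(\SINRthresh)$ is a constant and pulls out of the conditional expectation, leaving only the single moment $\cexpect{\pl^{\twople}}{\pl|\ct{}=j}$. Using the conditional density $f_{\pl|\ct{}=j}(l)=\twople G_j l^{\twople-1}\exp(-G_j l^{\twople})$ from Lemma \ref{lem:dist}, the substitution $u=G_j l^{\twople}$ gives $\cexpect{\pl^{\twople}}{\pl|\ct{}=j}=1/G_j=1/a$. The inner sum in the exponent then telescopes, $\sum_{j=1}^K a_j(1/a)\hgc_\twople(\SINRthresh)=\hgc_\twople(\SINRthresh)$, so the whole exponent reduces to $-\frac{\twople}{1-\twople}\SINRthresh\,\hgc_\twople(\SINRthresh)$, which is independent of both $k$ and the integration variable $l$.

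Finally I would factor this $k$-independent exponential out of the outer sum and integral. The remaining $l$-integral $\twople a_k\int_0^\infty l^{\twople-1}\exp(-a l^{\twople})\,\mathrm{d}l$ equals $a_k/a$, and summing over $k$ gives $\sum_{k=1}^K a_k/a=1$, yielding exactly $\pcov(\SINRthresh)=\exp(-\frac{\twople}{1-\twople}\SINRthresh\,\hgc_\twople(\SINRthresh))$. There is no real analytical obstacle here, since the result is a direct specialization; the only thing to watch is the bookkeeping, namely confirming that the $l$- and $\pl$-powers and the hypergeometric argument all degenerate correctly at $\pcf=1$, and that $G_k$ becomes $k$-independent at $\metric{j}=\metric{k}$, because it is precisely these two collapses that decouple the exponent from the association weights and make the dependence on the path-loss density vanish.
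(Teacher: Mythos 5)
Your derivation is correct and follows exactly the route the paper intends: Corollary \ref{cor:pcovpcmin} is stated as a direct specialization of Corollary \ref{cor:sir} (equivalently of the $\pcf=1$ corollary), and your bookkeeping --- the collapse of the $l$- and $\pl$-powers at $\pcf=1$, the moment $\cexpect{\pl^{\twople}}{\pl|\ct{}=j}=1/G_j$, the reduction $G_k=\sum_j a_j$ under equal weights, and the final integral $\twople a_k\int_0^\infty l^{\twople-1}e^{-a l^{\twople}}\,\mathrm{d}l=a_k/a$ summing to one --- is all accurate. Your explicit tracking of the sign of the $\frac{\twople}{1-\twople}$ prefactor is also consistent with the stated result.
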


\begin{rem}\textbf{Comparison with downlink.} The  downlink $\SIR$ coverage derived in   \cite{josanxiaand12}   for max downlink received power association, when adapted to the current setting,  is $\frac{1}{1+\frac{\twople\SINRthresh}{1-\twople} \hgc_\twople(\SINRthresh)}$. Since  ${1+x} < \exp(x),\,\, \forall x >0$,   downlink $\SIR$ stochastically dominates the uplink $\SIR$ of Corollary \ref{cor:pcovpcmin}.
\end{rem}

\begin{rem}\textbf{Density invariance.}
Corollary \ref{cor:pcovpcmin}  highlights  the independence of uplink $\SIR$ coverage on  infrastructure density in HCNs with minimum path loss association  and full channel inversion. This trend  is similar to the result proved for downlink $\SIR$ in macrocellular networks \cite{andganbac11} and HCNs \cite{dhiganbacand12,josanxiaand12}. 
\end{rem}

\subsection{Uplink rate distribution}
The rate of a user depends on both the $\SINR$ and load at the tagged AP (as per (\ref{eq:ratemodel})), which in turn depends on the  corresponding association area $|\assocr_{\cc{}}|$. The weighted path loss association  and PPP placement of APs leads to  complex association cells (see Fig. \ref{fig:overview}) whose area distribution is not known. However, the association policy   is  stationary   \cite{SinBacAnd13} and hence the mean uplink association area  of  a typical AP of tier $k$ is $\frac{\passoc_k}{\dnsty{k}}$.  The association area approximation  proposed in  \cite{SinDhiAnd13}  is used to quantify the uplink load distribution at the tagged AP as
\[\loadpmf_t(\userdnsty \passoc_k, \dnsty{k}, n) \triangleq \pr(N =n |\ct{}=k) = \frac{3.5^{3.5}}{(n-1)!}\frac{\Gamma(n+3.5)}{\Gamma(3.5)}\left(\frac{\userdnsty\passoc_{k}}{\dnsty{k}}\right)^{n-1}\left(3.5 + \frac{\userdnsty\passoc_{k}}{\dnsty{k}}\right)^{-(n+3.5)}  \,\,,n\geq 1.  \]

Using Corollary \ref{cor:sir} and (\ref{eq:ratemodel}), and assuming the independence between $\SINR$ and load, the uplink rate coverage is given  in the following Theorem.
\begin{thm}\label{thm:rcov} Under the presented system model and assumptions, the uplink rate coverage is given by 
\[\rcov(\RATEthresh) = \sum_{k=1}^K\passoc_k\sum_{n>0}\loadpmf_t(\userdnsty \passoc_k, \dnsty{k}, n)\pcov_k(2^{\nRATEthresh n}-1),\]
where $\pcov_k$ is given in Corollary \ref{cor:sir} and $\nRATEthresh \triangleq \RATEthresh(\af\res)^{-1}$. 
\end{thm}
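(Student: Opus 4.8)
The plan is to derive the rate coverage by conditioning on the serving tier and the load, then decoupling the rate event into an $\SINR$ event whose threshold depends on the load. The starting point is the rate model in~(\ref{eq:ratemodel}), which for the uplink ($\gamma=\af$) reads $\rate = \frac{\res}{\load{}}\af\log(1+\SIR)$. The rate-coverage probability is $\rcov(\RATEthresh)\triangleq\pr(\rate>\RATEthresh)$. First I would invert the rate condition: $\rate>\RATEthresh$ is equivalent to $\log(1+\SIR)>\frac{\RATEthresh\load{}}{\af\res}$, i.e.\ $\SIR>2^{\,\RATEthresh\load{}/(\af\res)}-1$ once one converts to base-$2$ logarithms (absorbing the $\log$ base into the rate-threshold normalization). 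Writing $\nRATEthresh\triangleq\RATEthresh(\af\res)^{-1}$, the event becomes $\SIR>2^{\nRATEthresh\load{}}-1$, so the effective $\SINR$ threshold is a random quantity driven entirely by the integer load $\load{}$.

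Next I would apply the law of total probability over the two discrete conditioning variables, the serving tier $\ct{}$ and the load $\load{}$. Conditioning on $\ct{}=k$ gives the association probability $\passoc_k=\pr(\ct{}=k)=a_k/G_k$ from Lemma~\ref{lem:dist}, and conditioning further on $\load{}=n$ contributes the load pmf $\loadpmf_t(\userdnsty\passoc_k,\dnsty{k},n)$ from the area-approximation model of~\cite{SinDhiAnd13}. The crucial step is the assumed independence between $\SINR$ and $\load{}$ stated in the theorem hypotheses: given this, once the load is fixed at $n$, the conditional rate-coverage reduces to the $\SIR$-coverage $\pcov_k(\cdot)$ from Corollary~\ref{cor:sir} evaluated at the now-deterministic threshold $2^{\nRATEthresh n}-1$. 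Assembling these pieces yields
\[
\rcov(\RATEthresh)=\sum_{k=1}^K\passoc_k\sum_{n>0}\loadpmf_t(\userdnsty\passoc_k,\dnsty{k},n)\,\pcov_k\!\left(2^{\nRATEthresh n}-1\right),
\]
which is exactly the claimed expression.

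The main obstacle, and the place where the argument is genuinely an approximation rather than an identity, is the independence assumption between $\SINR$ and $\load{}$. Both quantities depend on the geometry of the tagged AP's association cell: the load grows with the cell area, while the desired path loss $\pl{}$ (and hence the $\SINR$ through the $\pl^{1-\pcf}$ factor in $\pcov_k$) is also correlated with that same cell size. A fully rigorous treatment would have to characterize the joint distribution of the cell area and the serving path loss, which is intractable for the weighted-path-loss Poisson--Voronoi cells of Fig.~\ref{fig:overview}. The plan therefore is to invoke the stated independence as a modeling assumption (consistent with the approach of~\cite{SinDhiAnd13}) rather than to prove it, and to note that the load pmf already marginalizes over the cell-area randomness through the mean-area parameter $\userdnsty\passoc_k/\dnsty{k}$. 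Everything else is a clean substitution, so the proof is short once the rate-to-$\SINR$ threshold inversion and the tier/load conditioning are in place; the validity of the final expression rests on the independence approximation, which the paper defers to numerical validation in Sec.~\ref{sec:validation}.
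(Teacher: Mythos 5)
Your proposal is correct and follows essentially the same route as the paper's own proof: invert the rate condition to obtain the load-dependent $\SINR$ threshold $2^{\nRATEthresh n}-1$, condition on the serving tier and the load, and then invoke the $\SINR$--load independence approximation (together with dropping thermal noise) to replace the conditional probability by $\pcov_k$. Your remark that the independence is a modeling approximation justified by the shared dependence of load and path loss on the association-cell geometry matches the paper's own caveat, so nothing further is needed.
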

\begin{proof} Using the rate expression in (\ref{eq:ratemodel}) 
\begin{align*} 
\pr(\rate >\RATEthresh) &= \pr(\SINR>2^{\nRATEthresh\load{}}-1) =\sum_{k=1}^K\passoc_k\pr(\SINR>2^{\nRATEthresh\load{}}-1|\ct{}=k)\\
& =\sum_{k=1}^K\passoc_k\sum_{n>0}\loadpmf_t(\userdnsty \passoc_k, \dnsty{k}, n)\pr\left(\SINR>2^{\nRATEthresh n}-1|\ct{}=k, \load{}  = n\right),
\end{align*}
where $\nRATEthresh = \RATEthresh(\af\res)^{-1}$ is the normalized rate threshold.
Since  APs with larger association regions have higher load and larger user to AP distance, therefore  the load and $\SINR$ are correlated.  For tractability, this dependence and thermal noise are ignored, as in \cite{SinDhiAnd13}, to yield \[
  \pr\left(\SINR>2^{\nRATEthresh n}-1|\ct{}=k, \load{} = n\right)  \approx \pcov_k(2^{\nRATEthresh n}-1).\]
\end{proof}
\begin{cor}\label{cor:mrcov}
If the load at each AP is approximated by its respective mean, $\avload{k} \triangleq \expect{N|\ct{}=k}= 1+1.28\frac{\passoc_k\dnsty{u}}{\dnsty{k}}$ \cite{SinDhiAnd13}, the uplink rate coverage is
\[\avrcov(\RATEthresh) = \sum_{k=1}^K\passoc_k\pcov_k(2^{\nRATEthresh \avload{k}}-1).\]
\end{cor}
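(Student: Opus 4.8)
The plan is to obtain this corollary directly from Theorem~\ref{thm:rcov} by collapsing the inner sum over the load $n$ into a single deterministic term. The conditional rate coverage of a tier-$k$ user in that theorem is exactly $\expect{\pcov_k(2^{\nRATEthresh N}-1)\mid\ct{}=k}$, i.e. the expectation of $\pcov_k(2^{\nRATEthresh n}-1)$ against the load pmf $\loadpmf_t(\userdnsty\passoc_k,\dnsty{k},\cdot)$. The idea is to replace the random load $N$ by its conditional mean $\avload{k}=\expect{N\mid\ct{}=k}$, i.e. to invoke the first-order (mean-field) approximation $\expect{\pcov_k(2^{\nRATEthresh N}-1)\mid\ct{}=k}\approx\pcov_k(2^{\nRATEthresh\avload{k}}-1)$, which removes the sum over $n$. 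Summing over tiers weighted by the association probabilities $\passoc_k$ then yields the claimed expression for $\avrcov(\RATEthresh)$.

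The only quantitative step is evaluating the conditional mean $\avload{k}$ of the load pmf. First I would set $\mu_k\triangleq\userdnsty\passoc_k/\dnsty{k}$ and recognize $\loadpmf_t(\userdnsty\passoc_k,\dnsty{k},\cdot)$ as a shifted negative-binomial-type distribution, so that $\sum_{n\geq1}n\,\loadpmf_t(\userdnsty\passoc_k,\dnsty{k},n)$ is available in closed form; this mean was already established in \cite{SinDhiAnd13}. The result is $\avload{k}=1+1.28\,\mu_k=1+1.28\,\passoc_k\userdnsty/\dnsty{k}$, where the leading $1$ reflects that the tagged AP certainly serves the typical user, and the constant $1.28$ captures the size-biasing of the tagged cell: the cell serving the typical user is on average larger than a uniformly chosen cell and hence carries proportionally more load.

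The main subtlety—more a matter of justification than of difficulty—is that replacing $\expect{\pcov_k(2^{\nRATEthresh N}-1)\mid\ct{}=k}$ by $\pcov_k(2^{\nRATEthresh\avload{k}}-1)$ is an approximation rather than an identity. Here $\pcov_k$ is decreasing while $n\mapsto2^{\nRATEthresh n}-1$ is increasing and convex, so the composite map $n\mapsto\pcov_k(2^{\nRATEthresh n}-1)$ is decreasing but has no fixed-sign curvature in general, and Jensen's inequality therefore does not deliver a clean one-sided bound. I would justify the substitution on the grounds that $\loadpmf_t$ is concentrated about its mean for the densities of interest, mirroring the mean-load simplification in \cite{SinDhiAnd13}, and defer the accuracy check to the numerical validation in Sec.~\ref{sec:validation}. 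Once $\avload{k}$ is in hand, the remaining step—substitution into the tier sum of Theorem~\ref{thm:rcov}—is immediate.
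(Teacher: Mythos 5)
Your proposal is correct and matches the paper's (implicit) derivation: the corollary is obtained from Theorem~\ref{thm:rcov} by replacing the random load $N$ with its conditional mean $\avload{k}=1+1.28\,\passoc_k\userdnsty/\dnsty{k}$ from \cite{SinDhiAnd13}, which collapses the sum over $n$ into a single term per tier. Your added caveat that this mean-load substitution is an approximation without a one-sided Jensen bound is consistent with the paper's own acknowledgment that the simplification sacrifices a bit of accuracy, validated numerically in Sec.~\ref{sec:validation}.
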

The corollary above simplifies the rate coverage expression of Theorem \ref{thm:rcov} by eliminating a sum and sacrificing a bit of accuracy. 

\section{Joint uplink-downlink rate coverage}\label{sec:ULDLcov}
The joint uplink-downlink rate coverage is defined formally below.
\begin{definition}
The uplink-downlink joint rate coverage is the probability that the rate  on both  links exceed  their respective thresholds, i.e.,
\begin{align*}
\rcov^J(\RATEthresh_u,\RATEthresh_d ) & \triangleq \pr(\text{Uplink rate }>\RATEthresh_u, \text { Downlink rate }>\RATEthresh_d). \end{align*}
\end{definition}
It can be equivalently interpreted as the fraction of users in the network whose  both uplink and downlink rate exceed their respective thresholds.

For deriving the joint coverage,  the joint path loss distribution needs to be characterized.  
For the special case  of coupled association, the path losses are identical, however, for the general case they are correlated. The following Lemma characterizes the joint distribution of path losses for arbitrary downlink and uplink association weights.
\begin{lem}\label{lem:jointpl}
{\bf Joint path loss distribution.}
The joint PDF of uplink path loss ($\pl$) and downlink path loss ($\pl^{'}$) for the typical user under the given setting is
\begin{multline*}
f_{\pl,\pl^{'}}(x,y,\ct{}=k,\ct{}^{'}=j)\\=
\begin{cases}
\ednsty{j}\ednsty{k}\twople^2 x^{\twople-1}y^{\twople-1}\exp\left(-\sum_{i=1}^K\ednsty{i} \max\left(\frac{\metric{i}^{'}}{\metric{j}^{'}}y,\frac{\metric{i}}{\metric{k}}x\right)^\twople\right), &  k \neq j,\,\,  \frac{\metric{j}}{\metric{k}}\leq \frac{\metric{j}^{'}}{\metric{k}^{'}},\,\,\, x \geq 0, \,\,  \frac{\metric{j}}{\metric{k}}\leq \frac{y}{x}\leq \frac{\metric{j}^{'}}{\metric{k}^{'}}\\
\ednsty{k}\twople  x^{\twople-1}\exp\left(-x^\twople\sum_{i=1}^K\ednsty{i} \max\left(\frac{\metric{i}^{'}}{\metric{k}^{'}},\frac{\metric{i}}{\metric{k}}\right)^\twople \right), & k = j,\,\, x \geq 0, \,\, y = x\\
0 &  \text{ otherwise.}
\end{cases}\normalsize
\end{multline*}
\end{lem}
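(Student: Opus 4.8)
The plan is to exploit the fact that, because the shadowing $\sha$ is shared between uplink and downlink, both associations are driven by the \emph{same} family of per-tier minimum path losses $\{\pl_{\mathrm{min},i}\}_{i=1}^K$, which are independent across $i$ with $\pr(\pl_{\mathrm{min},i}>t)=\exp(-\ednsty{i}t^{\twople})$ and density $f_{\pl_{\mathrm{min},i}}(t)=\twople\ednsty{i}t^{\twople-1}\exp(-\ednsty{i}t^{\twople})$ (from the propagation-process description preceding Lemma \ref{lem:dist}). Conditioning on the uplink serving tier $\ct{}=k$ and downlink serving tier $\ct{}^{'}=j$, I first translate the two $\arg\max$ rules in (\ref{eq:association}) into inequalities on the $\pl_{\mathrm{min},i}$: tier $k$ wins uplink iff $\pl_{\mathrm{min},i}\geq(\metric{i}/\metric{k})\pl_{\mathrm{min},k}$ for all $i$, and tier $j$ wins downlink iff $\pl_{\mathrm{min},i}\geq(\metric{i}^{'}/\metric{j}^{'})\pl_{\mathrm{min},j}$ for all $i$. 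The serving path losses are then $\pl=\pl_{\mathrm{min},k}$ and $\pl^{'}=\pl_{\mathrm{min},j}$.

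When $k=j$ the uplink and downlink serving APs coincide (both are the nearest-in-path-loss AP of tier $k$), so $\pl=\pl^{'}$ and the joint law concentrates on the diagonal $y=x$. Setting $\pl_{\mathrm{min},k}=x$, the remaining tiers must satisfy $\pl_{\mathrm{min},i}\geq\max(\metric{i}/\metric{k},\metric{i}^{'}/\metric{k}^{'})x$ so as to lose \emph{both} associations; multiplying the tier-$k$ density by the product of the independent survival probabilities $\exp(-\ednsty{i}\max(\metric{i}/\metric{k},\metric{i}^{'}/\metric{k}^{'})^{\twople}x^{\twople})$ over $i\neq k$ yields the one-dimensional density of the second branch (carrying a single factor of $\twople$, reflecting the degeneracy onto the line).

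For the generic case $k\neq j$, I fix $\pl_{\mathrm{min},k}=x$ and $\pl_{\mathrm{min},j}=y$. The two dominance conditions evaluated at tiers $j$ and $k$ respectively read $y\geq(\metric{j}/\metric{k})x$ (tier $k$ beats tier $j$ on the uplink) and $x\geq(\metric{k}^{'}/\metric{j}^{'})y$, i.e. $y\leq(\metric{j}^{'}/\metric{k}^{'})x$ (tier $j$ beats tier $k$ on the downlink); together these give the support $\metric{j}/\metric{k}\leq y/x\leq\metric{j}^{'}/\metric{k}^{'}$ and force the feasibility condition $\metric{j}/\metric{k}\leq\metric{j}^{'}/\metric{k}^{'}$. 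For every other tier $i\neq k,j$ the joint constraint is $\pl_{\mathrm{min},i}\geq m_i:=\max((\metric{i}/\metric{k})x,(\metric{i}^{'}/\metric{j}^{'})y)$. By independence across tiers the joint element factors as the product of the two densities $f_{\pl_{\mathrm{min},k}}(x)f_{\pl_{\mathrm{min},j}}(y)=\twople^2\ednsty{k}\ednsty{j}x^{\twople-1}y^{\twople-1}\exp(-\ednsty{k}x^{\twople}-\ednsty{j}y^{\twople})$ and the survival probabilities $\prod_{i\neq k,j}\exp(-\ednsty{i}m_i^{\twople})$.

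The last—and only genuinely delicate—step is bookkeeping: I want to fold the two explicit prefactors $\ednsty{k}x^{\twople}$ and $\ednsty{j}y^{\twople}$ into the single sum $\sum_{i=1}^K\ednsty{i}\max((\metric{i}^{'}/\metric{j}^{'})y,(\metric{i}/\metric{k})x)^{\twople}$ appearing in the statement. Using the support constraints I verify that the $i=k$ term of that max equals $x$ (since $y\leq(\metric{j}^{'}/\metric{k}^{'})x$ gives $(\metric{k}^{'}/\metric{j}^{'})y\leq x$) and the $i=j$ term equals $y$ (since $y\geq(\metric{j}/\metric{k})x$), so the two prefactors are exactly the $i=k$ and $i=j$ contributions to the unified sum; combining them with the $i\neq k,j$ survival terms collapses everything into the claimed exponential, completing the first branch. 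The main obstacle is thus not analytic but this careful matching of the max-arguments to the support, together with correctly accounting for the diagonal degeneracy when $k=j$.
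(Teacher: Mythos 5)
Your proof is correct and follows essentially the same route as the paper: decompose by the uplink/downlink serving-tier pair, exploit the independence of the per-tier propagation processes $\{\pl_{\mathrm{min},i}\}$, translate the two association rules into the sandwich condition on $\pl_{\mathrm{min},j}/\pl_{\mathrm{min},k}$, and treat the diagonal $k=j$ case separately. The only cosmetic difference is that you write the joint density directly (folding the $i=k$ and $i=j$ exponential factors into the unified max-sum via the support constraints), whereas the paper computes the joint CCDF and differentiates; both yield the stated expression.
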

\begin{proof}  For $\ct{}=\ct{}^{'}=k$,
\begin{align*}
\pr(\pl=\pl^{'}>x, \ct{}=\ct{}^{'}=k) &= \pr\left(\bigcap_{i=1,\neq k}^K \pl_{\mathrm{min},i} > \pl_{\mathrm{min},k} \max\left(\frac{\metric{i}^{'}}{\metric{k}^{'}},\frac{\metric{i}}{\metric{k}}\right)\bigcap\pl_{\mathrm{min},k} > x \right)\\
& =\twople \ednsty{k} \int_{x}^\infty  l^{\twople-1} \exp\left(-l^\twople\sum_{i=1}^K \ednsty{i} \max\left(\frac{\metric{i}^{'}}{\metric{k}^{'}},\frac{\metric{i}}{\metric{k}}\right)^\twople\right)\mathrm{d}l.
\end{align*}
And for $\ct{}=k,\ct{}^{'}=j$ with $k\neq j$ and $\frac{\metric{j}^{'}}{\metric{k}^{'}}> \frac{\metric{j}}{\metric{k}}$
\small
\begin{multline*}
\pr(\pl > x,\pl^{'}>y, \ct{}=k,\ct{}^{'}=j) \\= \pr\left(\bigcap_{i=1,\neq k,j}^K \pl_{\mathrm{min},i} >  \max\left(\frac{\metric{i}^{'}}{\metric{j}^{'}}\pl_{\mathrm{min},j},\frac{\metric{i}}{\metric{k}}\pl_{\mathrm{min},k}\right)\bigcap\left\{ \frac{\metric{j}}{\metric{k}}\pl_{\mathrm{min},k}\leq \pl_{\mathrm{min},j}\leq \frac{\metric{j}^{'}}{\metric{k}^{'}}\pl_{\mathrm{min},k} \right\}\bigcap\{\pl_{\mathrm{min},k} > x\}, \bigcap\{\pl_{\mathrm{min},j} > y \} \right)\\
=\twople^2 \ednsty{j}\ednsty{k}\int_{x}^\infty \int_{\max(\metric{j}/\metric{k}t,y)}^{\metric{j}^{'}/\metric{k}^{'}t}  t^{\twople-1}u^{\twople-1}\exp\left(-\sum_{i=1}^K\ednsty{i} \max\left(\frac{\metric{i}^{'}}{\metric{j}^{'}}t,\frac{\metric{i}}{\metric{k}}u\right)^\twople\right)\mathrm{d}u\mathrm{d}t
\end{multline*}
\normalsize
Differentiating the above CCDFs leads to the corresponding PDFs. 
\end{proof}

The  downlink $\SIR$ analysis in \cite{josanxiaand12} ignored shadowing. However the analysis  can be adapted to the presented setting to give the Laplace transform of the downlink interference in the following Lemma (presented without proof).
\begin{lem}\label{lem:dlcov}  The Laplace transform of downlink interference ($I^{'}$) when the serving (downlink) AP belongs to tier $j$ and the corresponding path loss is $l$, i.e. $\pl^{'}=l$, is 
\[\mgf_{I^{'}_{j}}(s|\pl^{'}=l)= \exp\left\{-\frac{\twople}{1-\twople}s l^{\twople-1}\sum_{i=1}^K\ednsty{i}\power{i} \nmetric{i}^{'(\twople-1)}\hgc_\twople(s\power{i}/l\nmetric{i}^{'})\right\}.  \]
\end{lem}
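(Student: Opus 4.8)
The plan is to mirror the derivation underlying Lemma \ref{lem:dist} and the downlink treatment of \cite{josanxiaand12}, but working directly with the propagation processes $\PropPP_i = \{\pl(X,O)\}_{X\in\PPP_i}$ so that shadowing is retained, and exploiting that on the downlink the transmit power $\power{i}$ is fixed (there is no power--path-loss correlation as in the uplink). First I would write the downlink interference as
\[
I^{'} = \sum_{i=1}^K \sum_{t \in \PropPP_i,\; t > \nmetric{i}^{'} l} \power{i}\, H_t\, t^{-1},
\]
using that an AP at propagation value $t$ contributes received power $\power{i} H_t t^{-1}$. The crucial first step is to show that conditioning on $\{\ct{}^{'}=j,\, \pl^{'}=l\}$ merely truncates each tier's propagation process. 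The downlink association rule forces $\pl_{\mathrm{min},i} \ge \nmetric{i}^{'} l$ for every tier $i$; for the serving tier we have $\nmetric{j}^{'}=1$ and the serving AP at $t=l$ is removed from the interference, so the remaining tier-$j$ points also satisfy $t>l=\nmetric{j}^{'} l$. By the restriction property of a PPP, each conditioned process stays Poisson on $(\nmetric{i}^{'} l,\infty)$ with the same intensity $\twople \ednsty{i} t^{\twople-1}$ (as established for $\PropPP_i$ in the proof of Lemma \ref{lem:dist}), and by the independence of the per-tier AP PPPs the processes remain independent across $i$.

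Next I would apply the probability generating functional of each tier's PPP and average over the exponential fading $H_t\sim\exp(1)$, for which $\expect{\exp(-s\power{i}H_t t^{-1})}=(1+s\power{i}t^{-1})^{-1}$. Using $1-(1+s\power{i}t^{-1})^{-1}=s\power{i}/(t+s\power{i})$, this produces the product-over-tiers form
\[
\mgf_{I^{'}_{j}}(s|\pl^{'}=l) = \prod_{i=1}^K \exp\left(-\twople \ednsty{i}\int_{\nmetric{i}^{'} l}^{\infty} \frac{s\power{i}}{t+s\power{i}}\, t^{\twople-1}\,\mathrm{d}t\right),
\]
which collapses the exponent to a single sum over tiers.

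The remaining work is to evaluate $\int_{b}^{\infty} t^{\twople-1}(t+c)^{-1}\,\mathrm{d}t$ with $b=\nmetric{i}^{'} l$ and $c=s\power{i}$. Substituting $t=b/w$ converts it to $b^{\twople-1}\int_0^1 w^{-\twople}(1+(c/b)w)^{-1}\,\mathrm{d}w$, which is precisely the Euler integral representation of $\hg(1,1-\twople;2-\twople;-c/b)$ up to the factor $\Gamma(2-\twople)/\Gamma(1-\twople)=1-\twople$; hence the integral equals $\frac{b^{\twople-1}}{1-\twople}\hgc_\twople(c/b)$. Reinstating the prefactor $\twople \ednsty{i}s\power{i}$ and writing $b^{\twople-1}=\nmetric{i}^{'(\twople-1)}l^{\twople-1}$ and $c/b=s\power{i}/(l\nmetric{i}^{'})$ yields the claimed exponent. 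I expect the main obstacle to be the conditioning argument of the first step rather than the integral: one must verify that fixing the serving tier and its path loss only imposes the lower limits $\nmetric{i}^{'} l$ and does not otherwise reshape the interferer intensity, which is exactly where the PPP restriction property and the tier independence are invoked. The hypergeometric reduction itself is the same Gauss-hypergeometric identity already used in the proof of Lemma \ref{lem:LapI}, so it is routine.
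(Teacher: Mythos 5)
Your derivation is correct and is essentially the argument the paper intends: the paper states this lemma explicitly ``without proof,'' remarking only that the downlink analysis of \cite{josanxiaand12} can be adapted to the shadowed, weighted-association setting, and your proof is exactly that adaptation (conditioning truncates each tier's propagation process to $(\nmetric{i}^{'}l,\infty)$ by the PPP restriction property, then the PGFL, Rayleigh averaging, and the Euler-integral identity $\int_{b}^{\infty} t^{\twople-1}(t+c)^{-1}\mathrm{d}t = \tfrac{b^{\twople-1}}{1-\twople}\hgc_\twople(c/b)$ give the stated exponent). I verified the algebra, including the factors $\nmetric{i}^{'(\twople-1)}l^{\twople-1}$ and the argument $s\power{i}/(l\nmetric{i}^{'})$, and it matches the lemma exactly.
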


\begin{thm}\label{thm:jointrcov}
Using the mean load approximation for uplink and downlink, and assuming the uplink and downlink interference to be independent,  the  joint uplink-downlink rate coverage is
\[\rcov^J(\RATEthresh_u,\RATEthresh_d) =  \sum_{k=1}^K\sum_{j=1}^J\int_{0}^\infty\int_{0}^\infty\mgf_{I_k}\left(x^{1-\pcf}\uRATEthresh(\nRATEthresh_u \avload{k})\right)\mgf_{I'_j}\left(y\power{j}\uRATEthresh(\nRATEthresh_d \avload{k}^{'})\right)f_{\pl,\pl^{'}}(x,y,\ct{}=k,\ct{}^{'}=j)\mathrm{d}x\mathrm{d}y,\]
where $\avload{k}=1+1.28\frac{\dnsty{u}}{\dnsty{k}} \frac{\ednsty{k}}{ \sum_{j=1}^K a_{j}\nmetric{j}^{\twople}}$ is the average uplink load at the AP serving in the uplink,  $\avload{k}^{'}=1+1.28\frac{\dnsty{u}}{\dnsty{k}} \frac{\ednsty{k}}{ \sum_{j=1}^K a_{j}\nmetric{j}^{{'}\twople}}$  is the average downlink load at the  AP serving in the downlink, $\nRATEthresh_u =\RATEthresh (\res\af)^{-1}$, $\nRATEthresh_d =\RATEthresh (\res(1-\af))^{-1}$, and $\mgf_{I_k},\mgf_{I'_j} $  are as defined in Lemma \ref{lem:LapI} and  \ref{lem:dlcov} respectively.
\end{thm}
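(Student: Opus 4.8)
The plan is to reduce the joint rate coverage to a product of two conditional $\SINR$ complementary CDFs by conditioning on the serving tiers and on the joint uplink/downlink path losses, and then to exploit the assumed interference independence to factor the conditional joint probability. Throughout, write $\uRATEthresh(t)\triangleq 2^{t}-1$, so that by the rate model (\ref{eq:ratemodel}) the uplink rate exceeds $\RATEthresh_u$ exactly when $\SINR > \uRATEthresh(\nRATEthresh_u \load{})$ and the downlink rate exceeds $\RATEthresh_d$ exactly when $\SINR' > \uRATEthresh(\nRATEthresh_d \load{}^{'})$ (with $\SINR'$ the downlink $\SINR$), where $\load{}$ and $\load{}^{'}$ are the uplink and downlink loads and $\nRATEthresh_u,\nRATEthresh_d$ the normalized thresholds of the statement. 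I would first apply the mean-load approximation of Corollary \ref{cor:mrcov}, replacing $\load{}$ by $\avload{k}$ and $\load{}^{'}$ by $\avload{k}^{'}$; once the serving tiers are fixed this turns the two random rate constraints into the deterministic $\SIR$ thresholds $\uRATEthresh(\nRATEthresh_u \avload{k})$ and $\uRATEthresh(\nRATEthresh_d \avload{k}^{'})$.

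Next I would expand by the law of total probability over the uplink serving tier $\ct{}=k$ and the downlink serving tier $\ct{}^{'}=j$, and then condition on the joint path loss $(\pl,\pl^{'})=(x,y)$, integrating against the joint density $f_{\pl,\pl^{'}}(x,y,\ct{}=k,\ct{}^{'}=j)$ of Lemma \ref{lem:jointpl}. This is the step in which all of the uplink-downlink correlation is accounted for, since that correlation is carried entirely by the joint path loss law. Conditioned on $(k,j,x,y)$, the two desired-link power gains are independent unit-mean exponentials (the small-scale fading is i.i.d.\ across links), so each $\SINR$ CCDF collapses to a Laplace transform of the corresponding interference evaluated at the desired-signal coefficient: for the uplink, exactly as in the derivation of $\pcov_k$ via Lemma \ref{lem:LapI}, $\pr(\SINR>\uRATEthresh(\nRATEthresh_u\avload{k})\mid \pl=x,\ct{}=k)=\mgf_{I_k}(x^{1-\pcf}\uRATEthresh(\nRATEthresh_u\avload{k}))$, and for the downlink $\pr(\SINR'>\uRATEthresh(\nRATEthresh_d\avload{k}^{'})\mid \pl^{'}=y,\ct{}^{'}=j)=\mgf_{I'_j}(\,\cdot\,)$ with the argument prescribed by Lemma \ref{lem:dlcov} at serving path loss $y$.

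The crux is the factorization of the conditional joint event into a product. Given the path losses and tiers, the remaining randomness in the uplink and downlink $\SINR$ lives in the two desired-link fades and in the two interference terms; the fades are independent by the i.i.d.\ fading assumption, and the interference terms are taken to be independent by the hypothesis of the theorem, so $\pr(\SINR>\cdot,\ \SINR'>\cdot\mid k,j,x,y)$ splits into $\mgf_{I_k}(\cdot)\,\mgf_{I'_j}(\cdot)$. Assembling the tier sums and the path-loss integral then yields the claimed double sum of double integrals. The main obstacle is precisely this independence: the uplink and downlink interference fields stem from the same underlying geometry and are genuinely correlated, so the factorization is an approximation rather than an identity, and its validity is what must later be checked against simulation. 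A secondary point is ensuring the mean-load substitution is performed \emph{before} conditioning, so that $\avload{k}$ and $\avload{k}^{'}$ enter only as constants inside $\uRATEthresh(\cdot)$ and do not themselves recouple the two links.
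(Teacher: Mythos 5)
Your proposal is correct and follows essentially the same route as the paper: reduce joint rate coverage to joint $\SIR$ coverage via the rate model and mean-load approximation, condition on the serving tiers and the joint path losses (Lemma \ref{lem:jointpl}), and factor the conditional joint CCDF into the product $\mgf_{I_k}(\cdot)\,\mgf_{I'_j}(\cdot)$ using the i.i.d.\ fading and the assumed uplink--downlink interference independence. Your write-up merely makes explicit the conditioning order and the source of the factorization, which the paper's terser proof leaves implicit.
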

\begin{proof}
The proof follows by noting that the joint rate coverage can be written in terms of joint $\SIR$ coverage   as in Theorem \ref{thm:rcov}. Assuming independence of uplink and downlink interference, the joint $\SIR$ coverage  is 
\[\pr(\text{ Uplink } \SIR > \SINRthresh_u, \text{Downlink } \SIR> \SINRthresh_d)= \sum_{k=1}^K\sum_{j=1}^J\expect{\indic(\ct{}=k)\indic(\ct{}^{'}=j)\mgf_{I_k}\left(\pl^{1-\pcf}\SINRthresh_u\right)\mgf_{I'_j}\left(\pl^{'}\power{j}\SINRthresh_d\right)}.\] 
The final expression is then obtained by using the rate model of (\ref{eq:ratemodel}) along with  Lemma   \ref{lem:jointpl}.
\end{proof}

\section{Validation through simulations} \label{sec:validation}
The proposed model and the corresponding analytical results are  validated by simulations (based on the model of Sec. \ref{sec:sysmodel}) in a two tier setting with $\dnsty{1}=5$ BS per sq. km, $\ple=3.5$,  $\sha$ assumed Lognormal with $8$ dB standard deviation, open loop power spectral density $\power{u} =-80$ dBm/Hz, and free space path loss $\mathrm{L_0} = -40$ dB is assumed at reference distance of $1$ m at carrier frequency of $2$ GHz   (the same parameters are used in the later sections unless otherwise specified).  Fig. \ref{fig:uplinkcomp} shows the uplink $\SINR$ distribution obtained from simulations along with the $\SIR$ distribution  from analysis (Corollary \ref{cor:sir}) for different association weights and tier 2 densities. Two simplifications of the analysis A1, A2 are also shown for the $\pcf=1$ case. A1   neglects the conditioning of the  transmit power of an interfering user, i.e. Lemma \ref{lem:dist} is used for path loss distribution of interfering UEs instead of Corollary \ref{cor:cdist}. A2 additionally neglects the proposed modeling of interfering UE  process as per Assumption \ref{asmptn:ppp} and instead models interfering UEs of each tier  as an  homogeneous PPP with density same as the corresponding tier density.  The plots lead to the following takeaways: \begin{inparaenum} \item the proposed analysis matches the simulations quite closely for a range of parameters, validating Assumptions  \ref{asmptn:ppp},  \ref{asmptn:indpndnc}, and \ref{asmptn:dist}; \item  neglecting   the proposed thinning and/or conditioning (as is done prior works) leads to significant diversion from actual coverage, and \item   thermal noise has a  minimal impact on uplink $\SINR$ (this could also be due to the higher BS density). \end{inparaenum} Note that a value of $\nmetric{2} = -20 $ dB corresponds to  a typical power difference between small cells and  macrocells and hence is equivalent  to   downlink maximum power association. 

The rate coverage obtained from simulation and analysis (Corollary \ref{cor:mrcov}) is compared for a two-tier setting in Fig. \ref{fig:ratecompa} and for a three-tier setting in Fig. \ref{fig:ratecompb}. The user density used in these plots is $\dnsty{u} = 200$ per sq. km.  The joint rate distribution derived from analysis and simulation is shown in Fig. \ref{fig:jointratecomp} for an uplink resource fraction $\af=0.5$. The close match between analysis and simulations for a wide range of parameters  in these plots  validates the mean load assumption and  the downlink-uplink interference independence assumption.
\begin{figure}
  \centering
\subfloat[$\dnsty{2}=6\dnsty{1}$, $ \metric{2}/\metric{1}=-20$ dB ]{\includegraphics[width= 0.5\columnwidth]{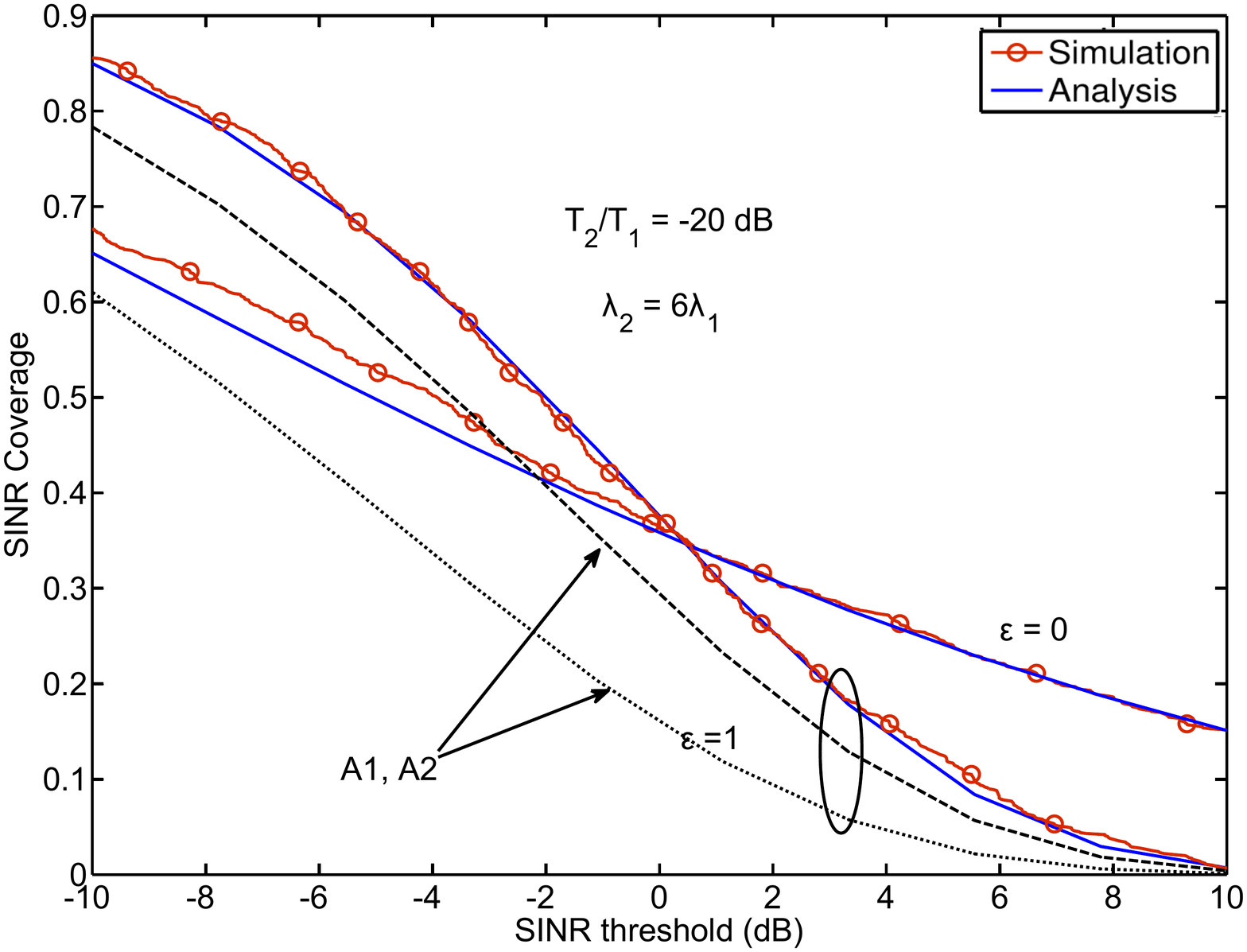}}
\subfloat[ $\dnsty{2}=4\dnsty{1}$, $ \metric{2}/\metric{1}=0$ dB]{\includegraphics[width=0.5 \columnwidth]{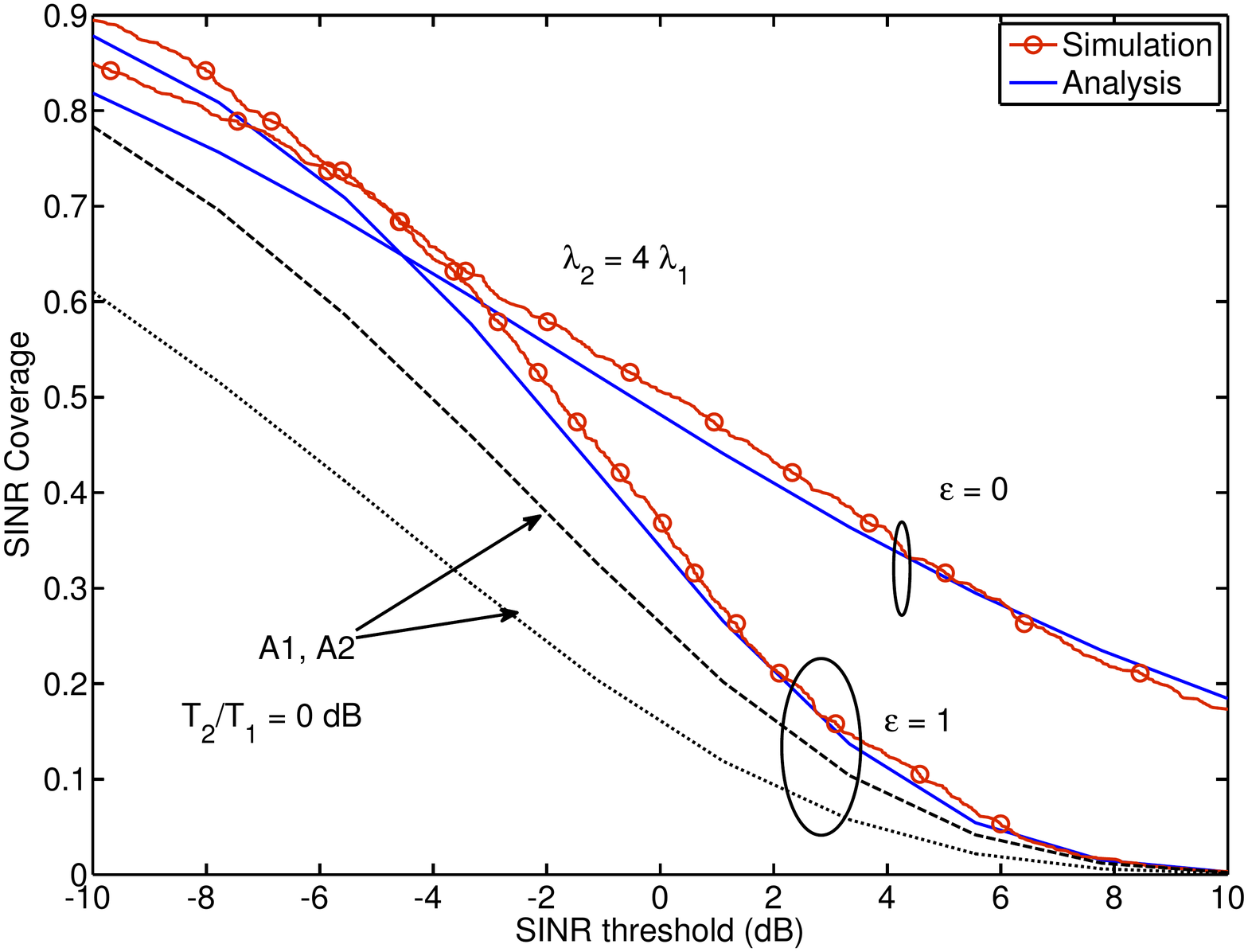}}
\caption{Comparison of uplink $\SINR$ distribution from simulation with $\SIR$ distribution   from analysis.}
 \label{fig:uplinkcomp}
\end{figure} 

 \begin{figure}
  \centering
\subfloat[$\dnsty{2}=6\dnsty{1}$, $ \metric{2}/\metric{1}=-20$ dB ]{\includegraphics[width= 0.5\columnwidth]{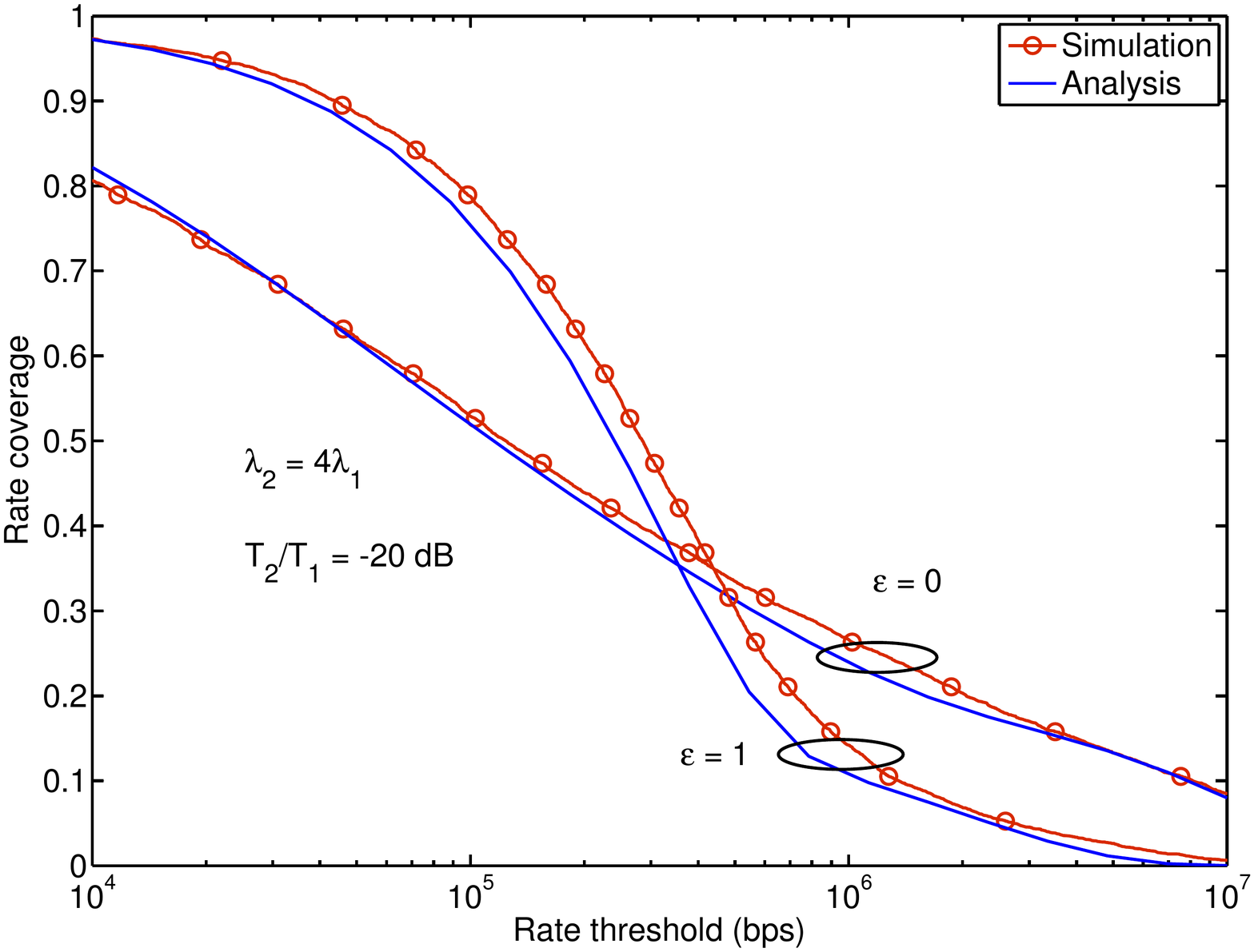}\label{fig:ratecompa}}
\subfloat[$\dnsty{3}=7\dnsty{1}$, $\dnsty{2}=4\dnsty{1}$, $\metric{3}= \metric{2}=\metric{1}$  ]{\includegraphics[width= 0.5\columnwidth]{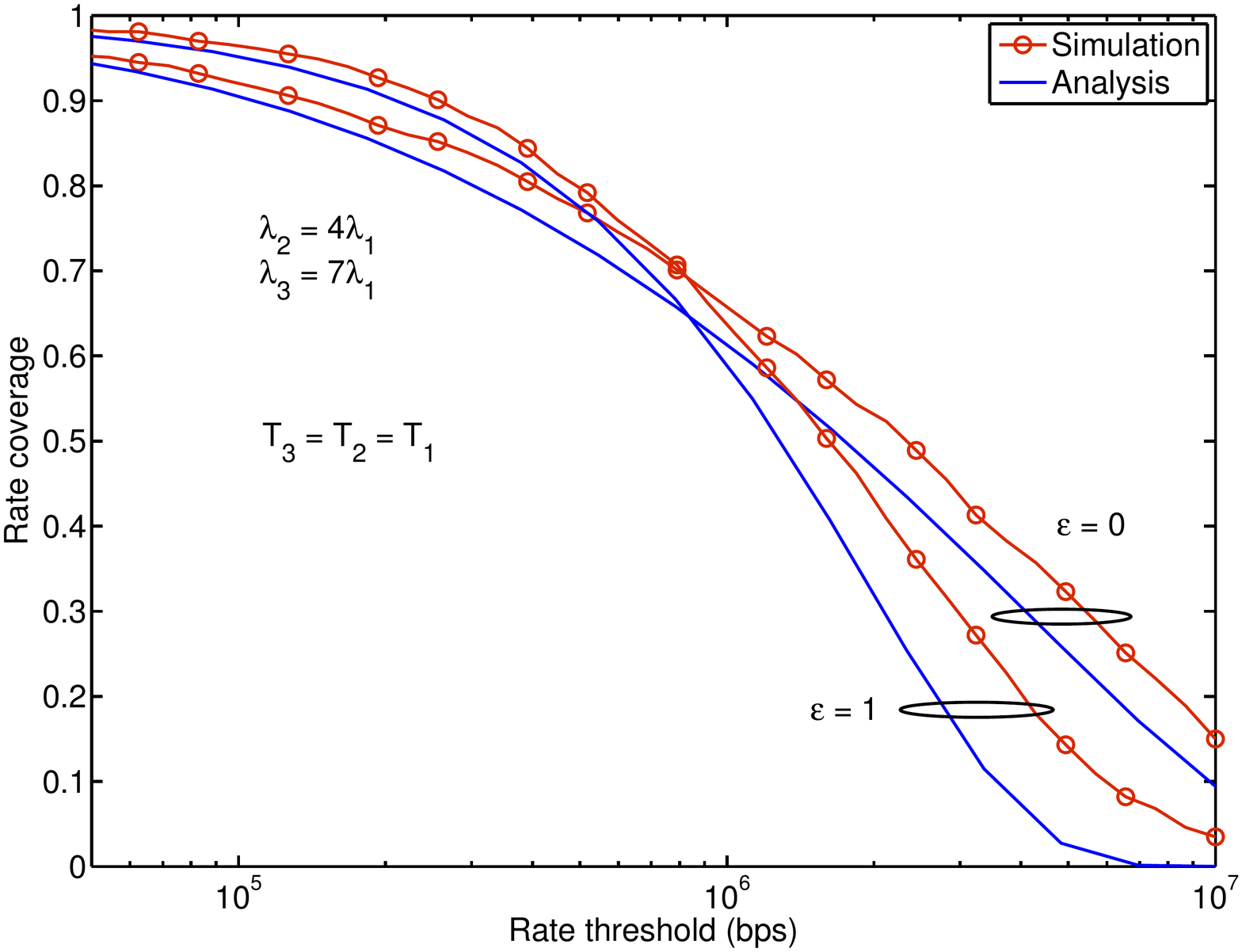}\label{fig:ratecompb}}
\caption{Comparison of uplink rate distribution from analysis and simulation.}
 \label{fig:ratecomp}
\end{figure} 

\begin{figure}
  \centering
{\includegraphics[width= 0.5\columnwidth]{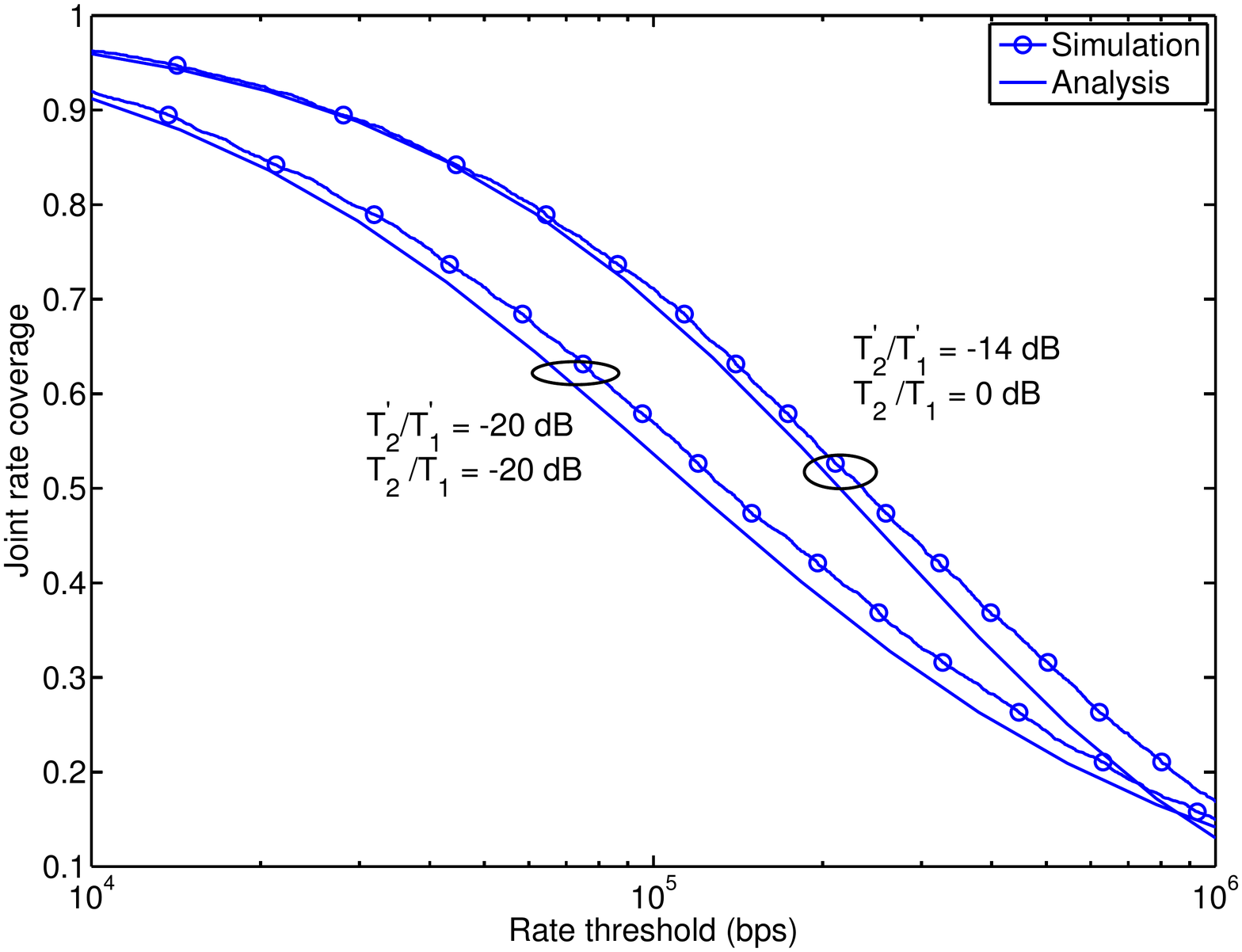}}
\caption{Comparison of joint rate coverage from analysis and simulation (with $\dnsty{2}=6\dnsty{1}$ and $\pcf=0.5$).}
 \label{fig:jointratecomp}
\end{figure}

\section{Optimal power control and association}\label{sec:opt}
The uplink $\SIR$ and rate coverage probability expressions of Corollary \ref{cor:sir}, Theorem \ref{thm:rcov}, and \ref{thm:jointrcov} can be used to numerically find the optimal power control and association weights. However, first   we focus on the coverage lower bound $\mathcal{P}^l$ of Corollary \ref{cor:lb} and obtain the following proposition.
\begin{prop}\label{prop:nn}
Minimum path loss association maximizes $\pcov^l$  $\forall \pcf \in [0,1] \,\,\forall \SINRthresh $. Further, $\pcf=0.5 $ maximizes the coverage lower bound $\pcov^l$.
\end{prop}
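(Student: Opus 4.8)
The plan is to treat $\pcov^l$ purely as an optimization over the variables $\pcf$ and the weights $\{\metric{k}\}$. Writing $\pcov^l(\SINRthresh)=\exp(-\SINRthresh^{\twople}\,g(\pcf)\,h(\pcf,\{\metric{k}\}))$ with
\[
g(\pcf)=\frac{\pi^2\twople\,\pcf(1-\pcf)}{\sin(\pi\twople)\sin(\pi\pcf)},\qquad
h(\pcf,\{\metric{k}\})=\Big(\sum_{k=1}^K\frac{\ednsty{k}}{G_k^{2-\pcf}}\Big)\Big(\sum_{k=1}^K\frac{\ednsty{k}}{G_k^{\pcf}}\Big),
\]
and noting $g(\pcf)>0$ and $\SINRthresh^{\twople}>0$ throughout $\pcf\in[0,1]$, $\twople\in(0,1)$, maximizing $\pcov^l$ is equivalent to minimizing the product $g\,h$. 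The two claims then separate cleanly: the association weights enter only through $h$, so the first claim is a statement about $h$ alone, while the second (after fixing the optimal association) is a one-variable statement about $g$.

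First I would reparametrize by $w_k\triangleq\metric{k}^{\twople}$. The purpose of this substitution is that $G_k=\sum_j a_j(\metric{j}/\metric{k})^{\twople}=\frac{1}{w_k}\sum_j a_j w_j=\frac{S}{w_k}$ with $S\triangleq\sum_j a_j w_j$, which collapses $h$ into
\[
h=\frac{\big(\sum_k a_k w_k^{\,2-\pcf}\big)\big(\sum_k a_k w_k^{\,\pcf}\big)}{S^2}.
\]
Introducing probability weights $p_k\triangleq a_k w_k/S$ and a random variable $X$ with $\pr(X=w_k)=p_k$, this is exactly $h=\expect{X^{1-\pcf}}\,\expect{X^{-(1-\pcf)}}$. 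Setting $Y=X^{1-\pcf}>0$, Cauchy--Schwarz gives $\expect{Y}\expect{Y^{-1}}\ge(\expect{\sqrt{Y}\sqrt{Y^{-1}}})^2=1$, with equality iff $Y$ (hence $X$, hence $\{\metric{k}\}$ over the supported tiers) is a.s.\ constant. Since minimum path loss association ($\metric{j}=\metric{k}\,\forall j,k$) makes $X$ constant and yields $h=1$, it minimizes $h$ for every $\pcf$, and because $g>0$ this maximizes $\pcov^l$ for all $\pcf$ and all $\SINRthresh$, proving the first claim.

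For the second claim I would evaluate at the optimal (minimum path loss) association, where $h\equiv1$, so that $\pcov^l=\exp\!\big(-\SINRthresh^{\twople}\tfrac{\pi^2\twople}{\sin(\pi\twople)}\tilde g(\pcf)\big)$ with $\tilde g(\pcf)=\pcf(1-\pcf)/\sin(\pi\pcf)$. It then remains to show $\tilde g$ is minimized at $\pcf=0.5$. The symmetry $\tilde g(\pcf)=\tilde g(1-\pcf)$ makes $\pcf=0.5$ a stationary point, and I would upgrade this to a global minimum on $[0,1]$ by establishing the elementary inequality $\sin(\pi\pcf)\le 4\pcf(1-\pcf)$, i.e.\ $\tilde g(\pcf)\ge\tfrac14=\tilde g(0.5)$, with equality only at $\pcf=0.5$.

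The main obstacle is twofold. The conceptual hurdle is spotting the $G_k=S/w_k$ reparametrization, without which the product $h$ never reveals its Cauchy--Schwarz/covariance structure; once seen, the first claim is immediate. The analytic hurdle is the second claim: $\pcf=0.5$ is a genuine minimizer only after the optimal association has forced $h\equiv1$ (for generic weights $h$ depends on $\pcf$ and its own minimum in $\pcf$ sits at $\pcf=1$, pulling against $g$), and confirming global --- not merely local --- minimality of $\tilde g$ rests on the comparison $\sin(\pi\pcf)\le 4\pcf(1-\pcf)$, which is the one genuinely nonalgebraic estimate in the argument.
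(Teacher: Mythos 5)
Your proof is correct and follows essentially the same route as the paper: reduce the maximization of $\pcov^l$ to minimizing the product of the two weight-dependent sums, show that equal weights bring that product down to its minimum value $1$, and then minimize the remaining one-variable factor $\pcf(1-\pcf)/\sin(\pi\pcf)$ at $\pcf=0.5$. The only differences are in your favor: the substitution $w_k=\metric{k}^{\twople}$ and the Cauchy--Schwarz step make rigorous what the paper does by expanding the ratio and applying AM--GM termwise (with the $\twople$ exponents silently absorbed into the weights), and you actually supply the inequality $\sin(\pi\pcf)\le 4\pcf(1-\pcf)$ where the paper merely asserts that $\pcf=0.5$ is optimal.
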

\begin{proof}
Using Corollary \ref{cor:lb}, $\pcov^l$ is maximized with  $\{\metric{j}^*\}$ given by
\begin{align*}
\{\metric{j}^*\}_{j=1}^K & =  \argmin \sum_{k=1}^K\frac{\ednsty{k}}{G_k^{2-\pcf}} \sum_{k=1}^K\frac{\ednsty{k}}{G_k^{\pcf}} 
= \argmin \frac{(\sum_{k=1}^K\ednsty{k}\metric{k}^{2-\pcf})(\sum_{k=1}^K\ednsty{k}\metric{k}^{\pcf})}{(\sum_{j=1}^K \ednsty{j}\metric{j})^{2}} \\
& = 1+ \argmin \frac{\sum_{i\neq j} \ednsty{i}\ednsty{j}(\metric{i}^{2-\pcf}\metric{j}^{\pcf}-\metric{i}\metric{j})}{(\sum_{j=1}^K \ednsty{j}\metric{j})^{2}},
\end{align*} where the last equation is minimized  with $\metric{j}=\metric{k}$ $\forall j,k$.  
Moreover, for such a case
\[\pcov^l(\SINRthresh) = \exp\left(-\SINRthresh^{\twople}\frac{\pi^2\twople\pcf}{\sin(\pi\twople)\sin(\pi\pcf)}\right),\]
which is maximized for $\pcf=0.5$.
\end{proof}
\begin{rem} Since the lower bound overestimates the uplink interference by neglecting the correlation of the transmit power of an interfering user with its path loss  to the tagged AP  (and hence treating it as if  originating from an ad-hoc network), the  result of optimal PCF of $0.5$ is in agreement  with   results for ad hoc wireless networks  \cite{JinUL,BacLi11} (derived under quite different modeling assumptions, though). \end{rem}

\textbf{Power control.} 
Since the power control impacts only uplink $\SIR$ and not load (unlike association), the optimal PCF is obtained using the $\SIR$ coverage of Corollary  \ref{cor:sir}. The $\SIR$ threshold  plays a vital role in determining the optimal PCF. More channel inversion is more beneficial for cell edge UEs, as they suffer from higher path loss and as a result the optimal PCF  decreases with $\SIR$ threshold,  as shown in Fig. \ref{fig:OptPCFContourDens}. This is similar to the insight obtained for single-tier networks    in \cite{NovUL}. It is interesting to note that the result on optimal PCF  of Proposition  \ref{prop:nn}  applies only to moderate $\SIR$ thresholds.  Further, as can be observed a higher association weight imbalance leads to a uniform (across all thresholds) increase in the optimal PCF,  as the path losses  in the network increase.  It can also be observed that the optimal PCF is relatively insensitive to  different densities in the two tier network, with no dependence seen in the case of minimum path loss association. A similar trend translates to uplink rate distribution too. The variation of uplink fifth percentile rate (or edge rate, $\RATEthresh| \rcov(\RATEthresh) = 0.95$) and median rate  ($\RATEthresh| \rcov(\RATEthresh) = 0.50$) with PCF is shown in Fig. \ref{fig:ratepcf}.  A higher   PCF maximizes fifth percentile rate than that for median rate, since former  represents users with lower uplink $\SIR$.

 \begin{figure}
  \centering
{\includegraphics[width=0.5 \columnwidth]{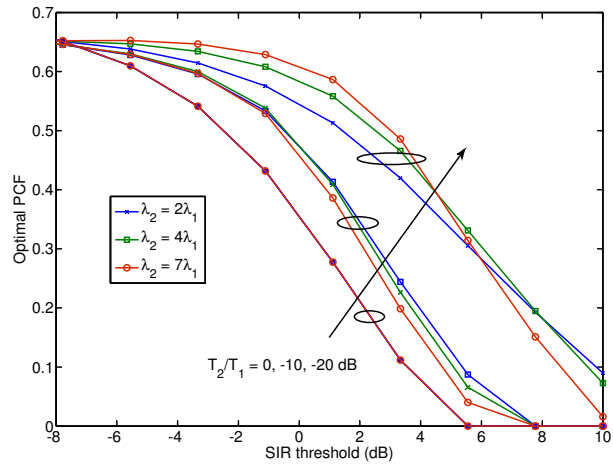}}
\caption{Variation of the optimal PCF  with $\SIR$ threshold (obtained from Corollary \ref{cor:sir}) for various association weights and densities. The bottom ($0$ dB) curve is indistinguishable for all $\dnsty{2}$. }
 \label{fig:OptPCFContourDens}
\end{figure} 
 \begin{figure}
  \centering
\subfloat[]{\includegraphics[width= 0.5\columnwidth]{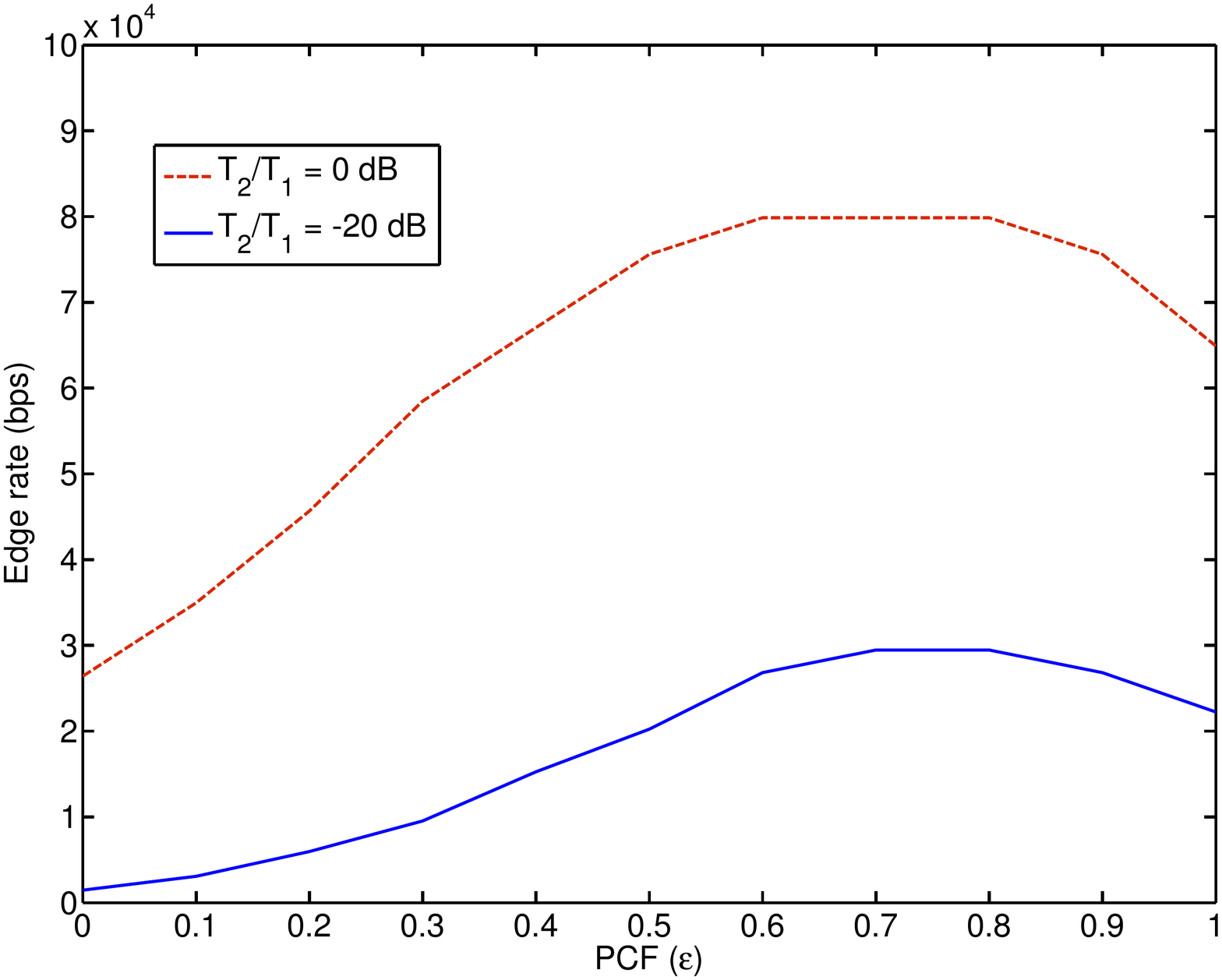}}
\subfloat[]{\includegraphics[width= 0.5\columnwidth]{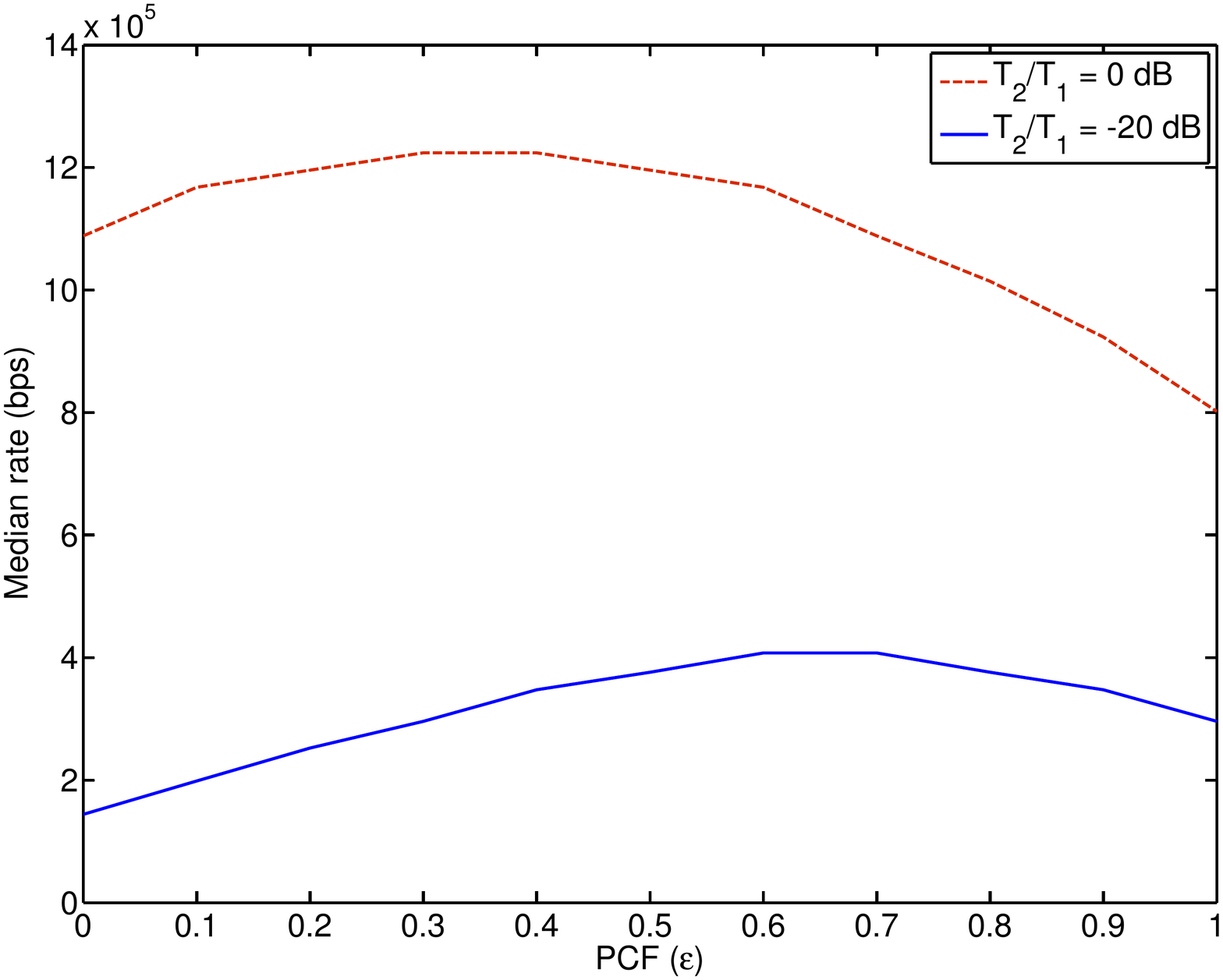}}
\caption{Variation  of uplink edge and median rate with PCF (obtained from Corollary \ref{cor:mrcov}) for $\dnsty{2}=6\dnsty{1}$ per sq. km.}
 \label{fig:ratepcf}
\end{figure}

\begin{figure}
  \centering
{\includegraphics[width= 0.5\columnwidth]{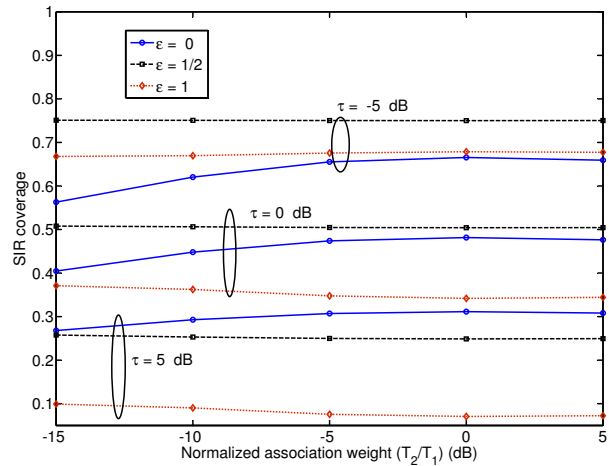}}
\caption{Uplink $\SIR$ variation with association weights (obtained from Corollary \ref{cor:sir}) with $\dnsty{2}=5\dnsty{1}$   for different thresholds and PCFs.}
 \label{fig:CovWeight}
\end{figure}

\textbf{Uplink association weights.} 
The variation of  uplink  $\SIR$ coverage with association weights is shown in Fig. \ref{fig:CovWeight} for different PCFs  and $\SIR$ thresholds.  Association weights are  seen to affect the $\SIR$ coverage nominally, except for the no power control case (where the variation is in concurrence with the result of the Proposition \ref{prop:nn}). An intuitive explanation of this behavior is as follows:  higher weight imbalance may lead a user to associate with a farther macrocell with a higher path loss, but it  would also experience reduced uplink interference due to the larger association area of the corresponding AP. So these two contrary effects compensate for each other leading to the observed phenomenon.

It is worth noting here  that minimum path loss association  leads to identical load distribution across all APs and hence balances the load. Moreover due to no adverse effect on uplink $\SIR$, minimum path loss association is also seen to be optimal from rate perspective too. The trend of uplink edge (fifth percentile) and median rate with association weights is shown in Fig. \ref{fig:ratebias}. As can be seen, irrespective of the PCF and density,  minimum path loss  association is optimal for uplink rate. Note that these results and insights for uplink are in  contrast with the corresponding result for downlink, where maximum $\SIR$ association (equivalent to maximum downlink received  power association) is optimal  for downlink $\SIR$ coverage \cite{josanxiaand12}, and  hence a conservative association bias\footnote{A bias of $\sim 6$ dB  was shown to maximize edge and median rates in downlink \cite{SinAnd14,ye2012user}   with $20$ dB power difference between macro and small cell, which translates to $\metric{2}^{'}/\metric{1}^{'}=-14$ dB for the setting of this paper.} was   shown to be optimal for rate coverage \cite{SinAnd14, ye2012user}.

 \begin{figure}
  \centering
\subfloat[Edge rate, $\pcf=1$]{\includegraphics[width= 0.5\columnwidth]{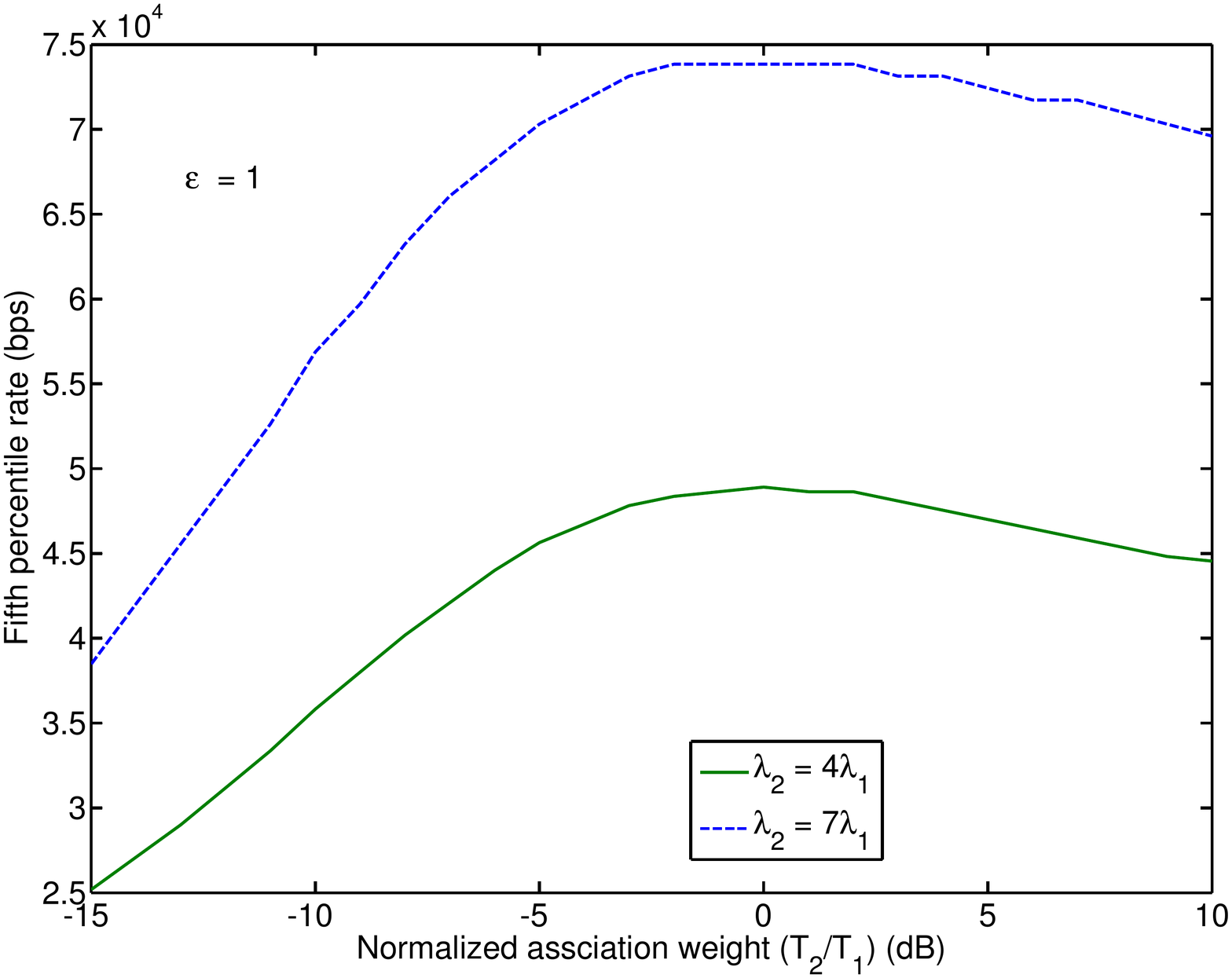}}
\subfloat[Median rate, $\pcf=0$]{\includegraphics[width= 0.5\columnwidth]{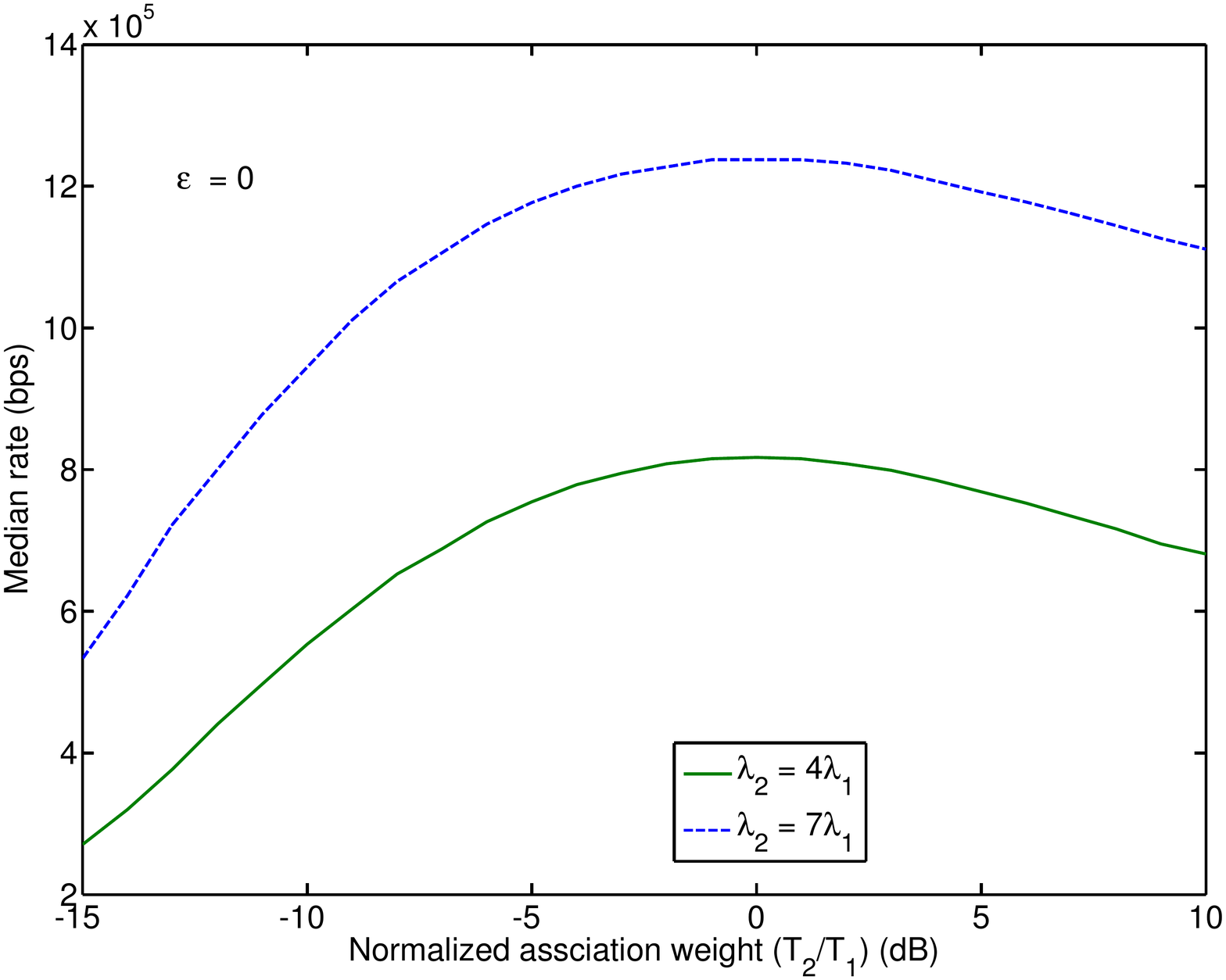}}
\caption{Variation  of uplink edge and median rate  with association weights (obtained from Corollary \ref{cor:mrcov}).}
 \label{fig:ratebias}
\end{figure} 

\textbf{Uplink-downlink jointly optimal association.} Considered separately, as discussed in the previous section, the association weights $\frac{\metric{2}}{\metric{1}} = 0 $ dB, $\frac{\metric{2}^{'}}{\metric{1}^{'}} = -14 $ dB optimize the uplink and downlink rate respectively. However, what happens if the joint downlink and uplink  association is considered?  The variation of joint rate coverage as a function of downlink and uplink association weights is shown in Fig. \ref{fig:jointpcf} for   three pairs of   ($\pcf, \af$)  with  a rate threshold of $\RATEthresh_u=\RATEthresh_d= 128$ Kbps and $\dnsty{2}= 6 \dnsty{1}$.   As can be seen from the plots, the  uplink and downlink  association weights of  $\frac{\metric{2}}{\metric{1}} =0$ dB, $\frac{\metric{2}^{'}}{\metric{1}^{'}}=-14$ dB ($\metric{1}=\metric{1}^{'} = 1$ in these plots)   also maximize the joint uplink-downlink rate coverage irrespective of chosen $\af$ and $\pcf$\footnote{Other pairs of ($\pcf$, $\af$) also led to similar results.}.   These leads to two key observations: \begin{inparaenum}[(i)] \item the uplink and downlink association weights that maximize  the joint rate coverage are the same as the ones that maximize their individual link  coverage, and as a result \item decoupled association, i.e. different association weights for the   uplink and downlink, is optimal for joint coverage.  \end{inparaenum} 

 \begin{figure}
  \centering
\subfloat[$\pcf=0$, $\af=0.7$]{\includegraphics[width= 0.5\columnwidth]{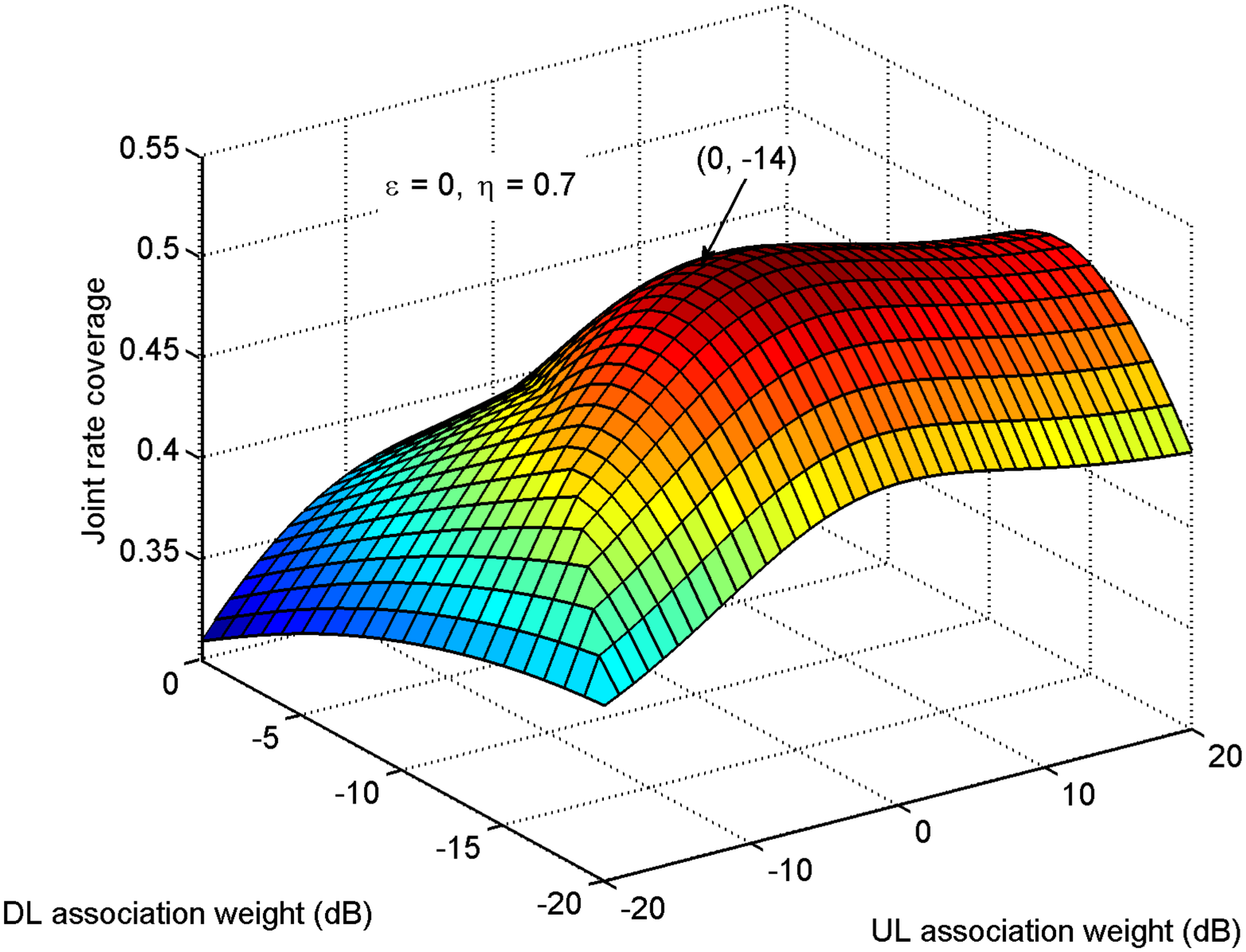}}
\subfloat[$\pcf=1$, $\af=0.3$]{\includegraphics[width= 0.5\columnwidth]{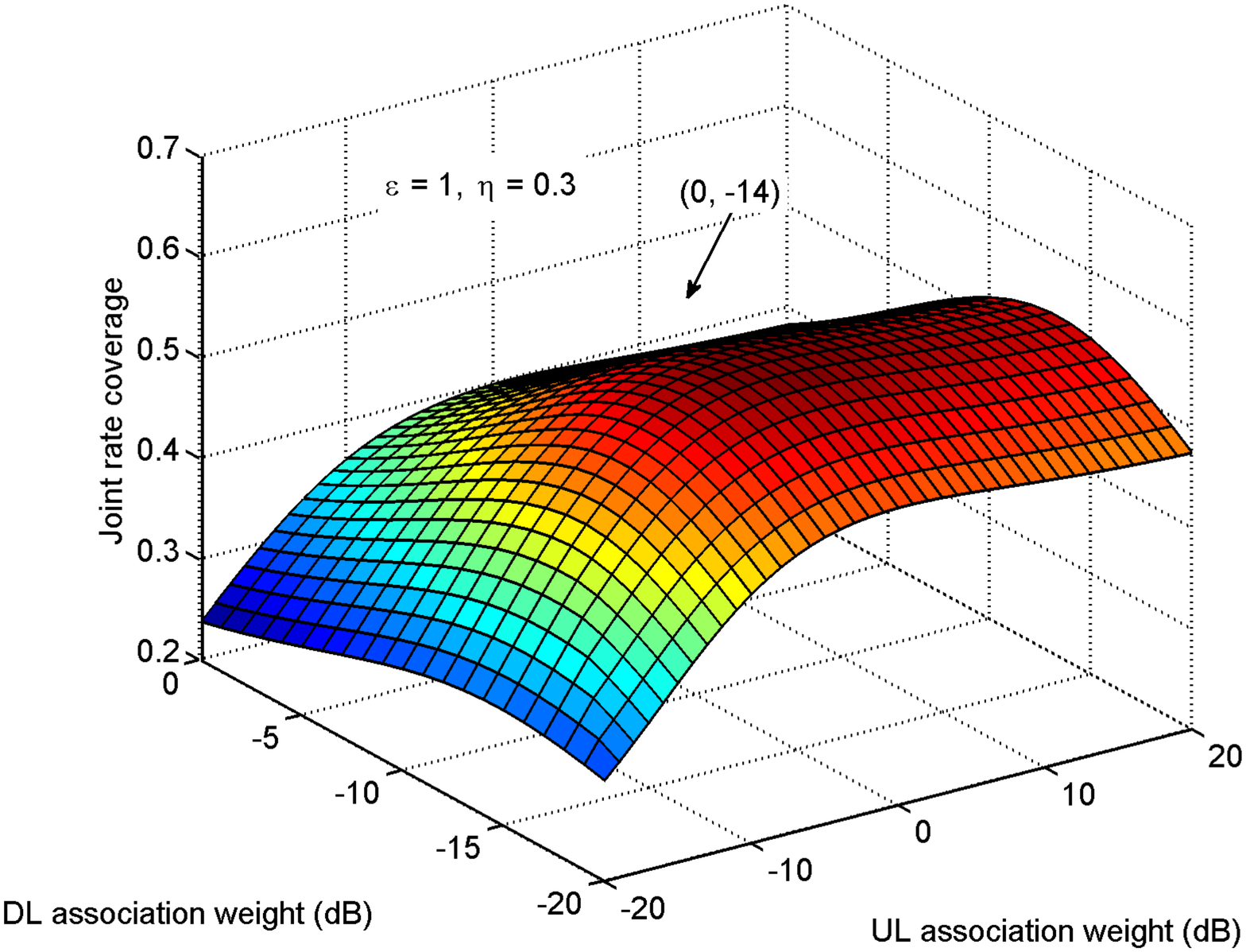}}\\
\subfloat[$\pcf=0.5$, $\af=0.5$]{\includegraphics[width= 0.5\columnwidth]{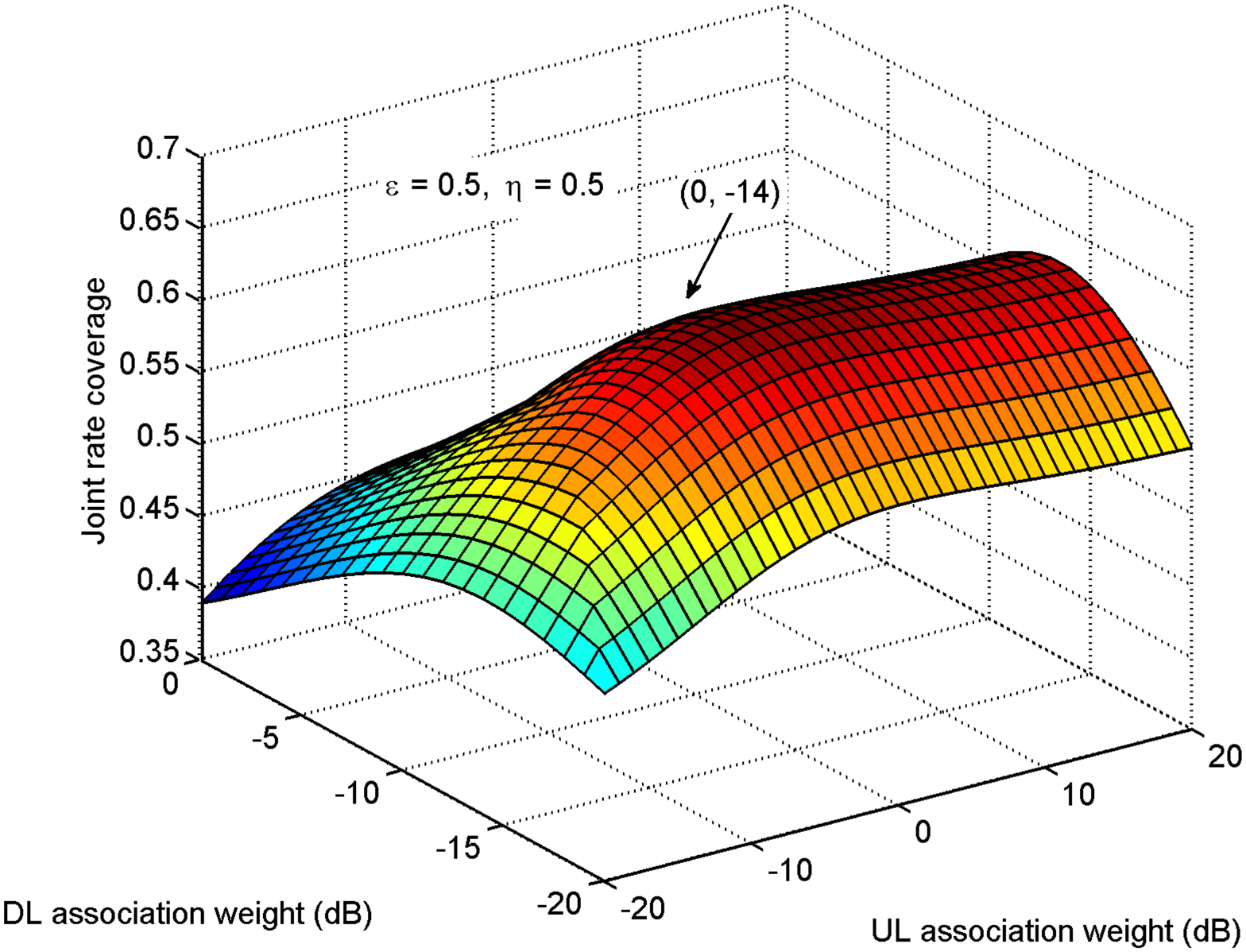}}
\caption{Variation  of joint rate coverage with uplink and downlink association weights (obtained from Theorem \ref{thm:jointrcov}) for different ($\pcf$, $\af$) pairs.}
 \label{fig:jointpcf}
\end{figure}

{\bf Optimal coupled vs. decoupled association.} In Fig. \ref{fig:cdcomp}, the gains of optimal decoupled association over that of coupled are analytically assessed for edge rate and median  rates for varying PCFs with $\dnsty{2}=6\dnsty{1}$ and $\af=0.5$. Note that in these plots, the  rates corresponds the minimum of uplink and downlink, i.e. edge rate $= \rho|\rcov^J(\RATEthresh,\RATEthresh)  =0.95$ and median rate  $= \rho|\rcov^J(\RATEthresh,\RATEthresh)  =0.5$. As observed, across all PCFs, the decoupled association provides significant ($\sim 1.5$x) gain over coupled association. This shows that, in spite requiring  certain architectural changes \cite{SmiPopGavTWC}, decoupled association  is beneficial for applications requiring similar QoS in both uplink and downlink. 

 \begin{figure}
  \centering
\subfloat[Joint edge rate]{\includegraphics[width= 0.5\columnwidth]{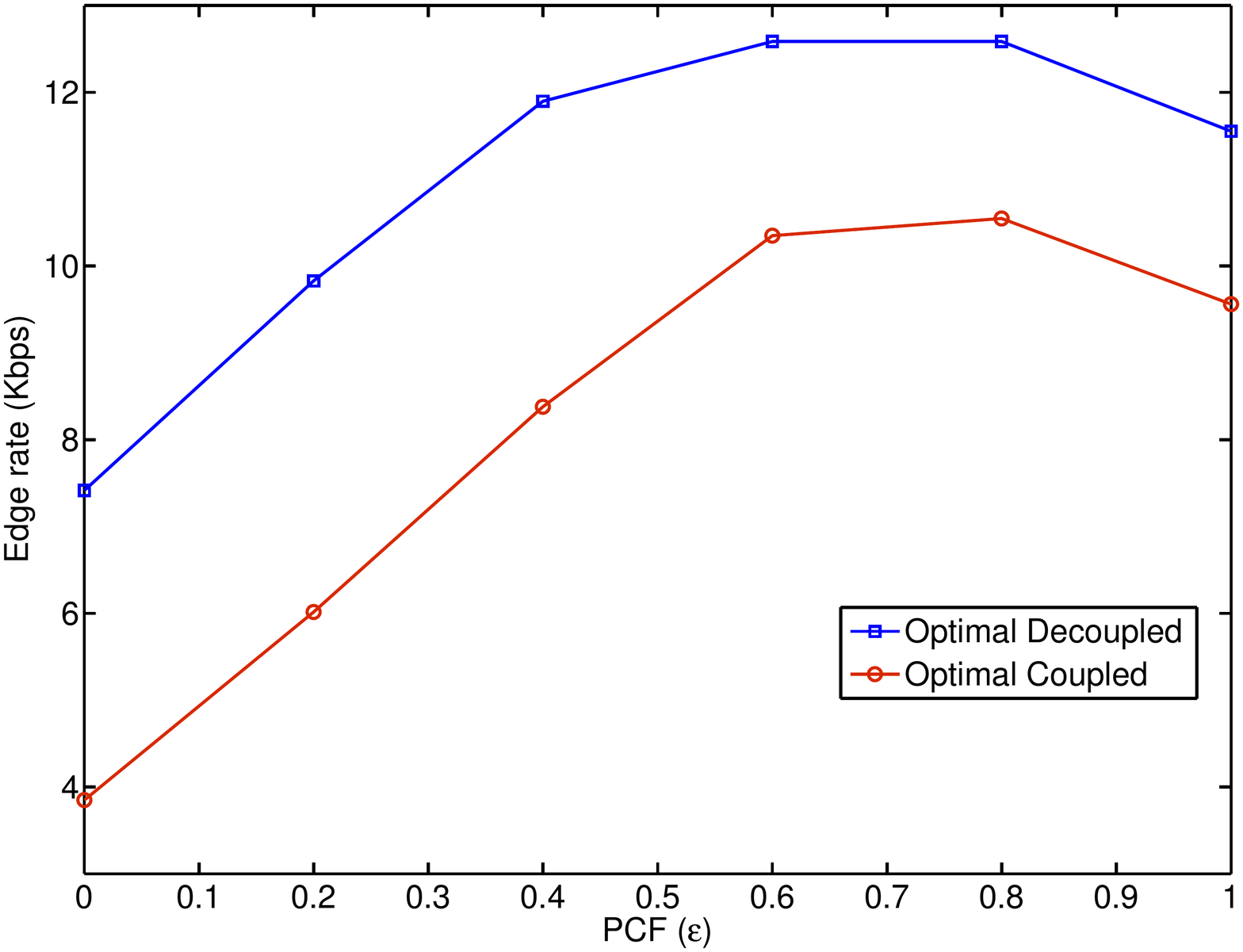}}
\subfloat[Joint median rate]{\includegraphics[width= 0.5\columnwidth]{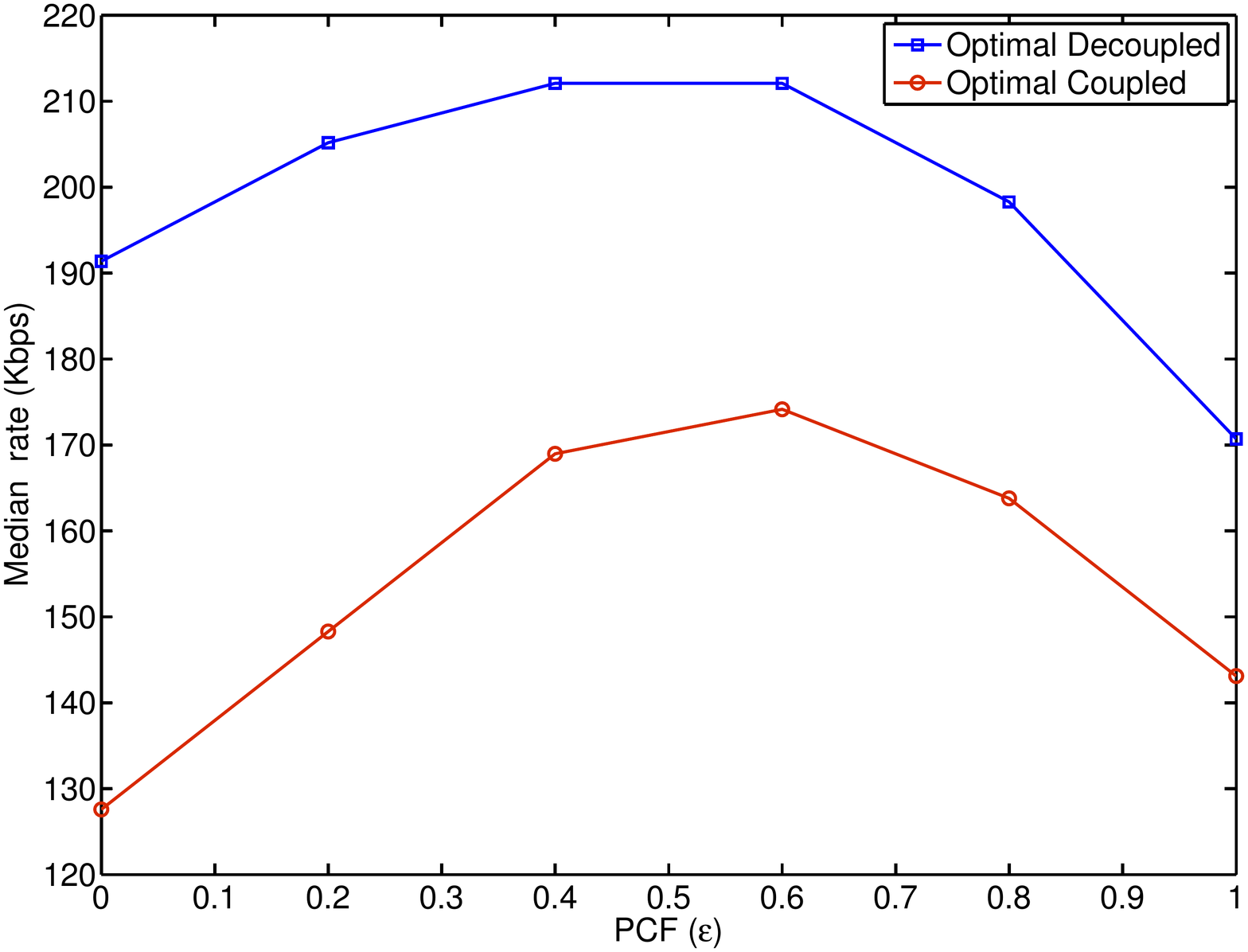}}
\caption{Variation  of uplink-downlink joint edge and median rate with PCF (obtained from Theorem \ref{thm:jointrcov}) for optimal coupled and decoupled   association.}
 \label{fig:cdcomp}
\end{figure}

\section{Conclusion}
This  paper proposes a novel model to analyze  uplink $\SINR$ and rate coverage  in $K$-tier HCNs with load balancing. To the best of the authors' knowledge, this is the first work to derive and validate the uplink rate distribution in HCNs incorporating   offloading and fractional power control. One of the key takeaways from this work is the contrasting behavior exhibited by the uplink and downlink rate distributions with respect to load balancing.  The derivation  of uplink $\SINR$ and  rate distribution as a tractable functional form of   system parameters opens various  areas to gain further design insights. For  example, optimal  association weights were derived in this paper for both uplink and joint uplink-downlink coverage.    We assumed parametric but fixed resource partitioning between uplink and downlink -- and this might also be a more practical assumption -- but analyzing  the impact of more dynamic (possibly load-aware) partitioning on the presented insights  could be considered in the future.  The proposed uplink interference characterization can also be used to analyze systems like massive MIMO, where it plays a crucial role  \cite{MarMIMO}.  Performance analysis for decoupled association incorporating the cost of  possible architectural changes \cite{SmiPopGavTWC} could also be one area of future investigation.

\section*{Acknowledgment}
The authors appreciate helpful feedback from Xingqin Lin. 

\appendices
\section{}\label{sec:prooflapI}
\begin{proof}[Derivation of Lemma \ref{lem:LapI}]
Let $\mgf_{I_{kj}}(s)$ denote the Laplace transform of the interference from tier $j$ UEs, then  $\mgf_{I_k} =\prod_{j=1}^K\mgf_{I_{kj}}$ (from   Assumption \ref{asmptn:indpndnc}). Now, 
\begin{align*}
\mgf_{I_{kj}}(s)& =  \expect{\exp\left(-s \sum_{X\in \PPP_{u,j}^b}\pl_{X}^{ \pcf} H_{X} \pl(X,\cc{})^{-1}\right)} \\
  \overset{(a)}{=} &  \expect{\prod_{X \in \PPP_{u,j}^b}\frac{1}{1+ s \pl_{X}^{\pcf} \pl(X,\cc{ })^{-1}}}\\
  = & \expect{\prod_{X \in \PropPP_{u,j}}\cexpect{\frac{1}{1+ s \pl_X^{\pcf}  X^{-1}}}{\pl_X}} \\
  \overset{(b)}{=} & \exp\left(- \int_{x>0}\left(1-\cexpect{\frac{1}{1+ s\pl_x^{\pcf} x^{-1}}}{\pl_x}\right)\ndnstyr_{u,j}(\mathrm{d}x)\right)\\
 \overset{(c)}{=} & \exp\left(- \int_{x>0}\left(1-\cexpect{\frac{1}{1+ s\pl^{\pcf} x^{-1}}\mid\pl< \nmetric{j}x, \ct{x}=j}{\pl}\right)\ndnstyr_{u,j}(\mathrm{d}x)\right)\\
 = & \exp\left(- \int_{x>0}\cexpect{\frac{1}{1+ (s\pl^{\pcf})^{-1}x} \mid \pl< \nmetric{j}x, \ct{x}=j}{\pl} \ndnstyr_{u,j}(\mathrm{d}x)\right)\\
= & \exp\left(- \cexpect{a_j \pl^{\twople}\nmetric{j}^{-\twople}\int_{1}^{\infty} \frac{\mathrm{d}t}{1+ (s \nmetric{j})^{-1}\pl^{1-\pcf} t^{1/\twople}}}{\pl|\ct{}=j}  \right),
\end{align*}
where  (a) follows from the i.i.d. nature of $\{\chanl_X\}$, (b) follows from the Laplace functional (also known as probability generating functional) of the  assumed PPP $\PropPP_{u,j}$, (c) follows from   Corollary \ref{cor:cdist}, and the last equality follows  with change of variables $t=(x\nmetric{j}/\pl)^{\twople}$ and algebraic manipulation. The final result is then obtained by using the definition of Gauss-Hypergeometric function, yielding
 \begin{equation*}\int_{1}^\infty \frac{\mathrm{d}t}{1+t^{1/\twople}\pl^{1-\pcf}(s\nmetric{j})^{-1}}\\=\frac{\twople}{1-\twople}\frac{s\nmetric{j}}{\pl^{ 1-\pcf }}\,  \hg\left(1,1-\twople,2-\twople,- \frac{s\nmetric{j}}{\pl^{ 1-\pcf}}\right).
 \end{equation*}
\end{proof}

\section{•}\label{sec:proofub}
\begin{proof}[Derivation of  Corollary \ref{cor:ub}]
The  proof of Lemma~\ref{lem:LapI} gives
\begin{align*}
\mgf_{I_{kj}}(s)&= \exp\left(-a_j \nmetric{j}^{-\twople} \int_{l>0} l^{\twople} \int_{1}^{\infty} \frac{\mathrm{d}t}{1+ (s \nmetric{j})^{-1}l^{1-\pcf} t^{1/\twople}}  f_{\pl|\ct{}=j}(l)\mathrm{d}l \right) \\
&=\exp\left(-\frac{a_j \nmetric{j}^{-\twople}}{G_j}    \int_{1}^{\infty} \int_{l>0}\frac{1}{1+ (s \nmetric{j})^{-1}l^{1-\pcf} t^{1/\twople}}  f_{\pl'|\ct{}=j}(l)\mathrm{d}l \mathrm{d}t\right),
\end{align*}
where the inner integral (w.r.t $l$) can be viewed as an expectation of $1/(1+t^{1/\twople} L'^{1-\pcf}(s\nmetric{j})^{-1})$,
where $L_j'$ is a random variable with pdf $f_{\pl'|\ct{}=j}(l) = \twople G_j^2 l^{2\twople-1}\exp(-G_j l^{\twople})$ for $l>0$. Since $1/(1+t^{1/\twople} x (s\nmetric{j})^{-1})$
is a convex function of $x$, we can apply Jensen's inequality (using $\expect{\pl^{'(1-\pcf)}|\ct{}=j}=  \frac{\Gamma(2+(1-\pcf)/\twople)}{G_j^{(1-\pcf)/\twople}}$) and obtain  a lower bound on the inner integral, which leads to an upper
bound on the coverage probability in a form similar to the one in Corollary~\ref{cor:sir}.
\end{proof}

\section{•}\label{sec:prooflb}
\begin{proof}[Derivation of  Corollary \ref{cor:lb}]
Neglecting the conditioning in (c) of the proof of Lemma \ref{lem:LapI}, we have 
\begin{align*}&\mgf_{I_{k}}(s) \geq   \exp\left(-\sum_{j=1}^K\int_{x>0}\cexpect{\frac{1}{1+ (s\pl^{\pcf})^{-1}x}}{\pl|\ct{}=j}\ndnstyr_{u,j}(\mathrm{d}x)\right)\\
& \geq  \exp\left(- \sum_{j=1}^K\cexpect{\int_{x>0}\frac{1}{1+ (s\pl^{\pcf})^{-1}x}\twople\ednsty{j}x^{\twople-1}\mathrm{d}x}{\pl|\ct{}=j}\right)\\
&  \overset{(a)}{=} \exp\left(-  s^{\twople}\frac{\pi\twople}{\sin(\pi\twople)}\sum_{j=1}^K\ednsty{j}\cexpect{\pl^{\twople\pcf}}{\pl|\ct{}=j} \right),
\end{align*} where (a) follows by the change of variables $t = x^{\twople}(s\pl^{\pcf})^{-2/\ple}$ and noting that $\int_{0}^\infty \frac{\mathrm{d}t}{1+t^{\ple/2}}=\frac{2\pi}{\ple\sin(2\pi/\ple)}$.
Now using the coverage expression
\begin{align*}
\pcov(\SINRthresh) &\geq \expect{\exp\left(-\frac{\pi\twople}{\sin(\pi\twople)} \SINRthresh^{\twople}\pl^{\twople(1-\pcf)}\sum_{j=1}^K\ednsty{j} \cexpect{\pl^{\twople\pcf}}{\pl|\ct{}=j}  \right)} \\
&\geq \exp\left(-\frac{\pi\twople}{\sin(\pi\twople)} \SINRthresh^{\twople}\expect{\pl^{\twople(1-\pcf)}}\sum_{j=1}^K\ednsty{j} \cexpect{\pl^{\twople\pcf}}{\pl|\ct{}=j} \right),
\end{align*} where the last inequality follows from Jensen's inequality.
Noting that $\cexpect{\pl^{\twople\pcf}}{\pl|\ct{}=j}= \frac{\Gamma(1+\pcf)}{G_j^\pcf}$ and $\expect{\pl^{\twople(1-\pcf)}} = \sum_{j=1}^K\ednsty{j}\frac{\Gamma(2-\pcf)}{ G_j^{2-\pcf}}$ and $\Gamma(1+\pcf)\Gamma(2-\pcf) = \frac{\pi\pcf(1-\pcf)}{\sin(\pi \pcf)}$ leads to the final result.  
\end{proof}

\bibliographystyle{ieeetr}

\bibliography{IEEEabrv,/Users/sarabjotsingh/Dropbox/research/refoffload}

\begin{thebibliography}{10}

\bibitem{ghosh2012heterogeneous}
A.~Ghosh {\em et~al.}, ``Heterogeneous cellular networks: From theory to
  practice,'' {\em {IEEE} Commun. Mag.}, vol.~50, pp.~54--64, June 2012.

\bibitem{qcom_hetnet_wmag}
A.~Damnjanovic {\em et~al.}, ``{A survey on 3GPP heterogeneous networks},''
  {\em {IEEE} Wireless Commun. Mag.}, vol.~18, pp.~10--21, June 2011.

\bibitem{ElSawy13tut}
H.~ElSawy, E.~Hossain, and M.~Haenggi, ``{Stochastic Geometry for Modeling,
  Analysis, and Design of Multi-tier and Cognitive Cellular Wireless Networks:
  A Survey},'' {\em IEEE Communications Surveys \& Tutorials}, vol.~15,
  pp.~996--1019, July 2013.

\bibitem{AndLoadCommag13}
J.~G. Andrews, S.~Singh, Q.~Ye, X.~Lin, and H.~S. Dhillon, ``An overview of
  load balancing in {HetNets}: Old myths and open problems,'' {\em {IEEE}
  Wireless Commun. Mag.}, vol.~21, pp.~18--25, Apr. 2014.

\bibitem{XiaUL}
W.~Xiao {\em et~al.}, ``Uplink power control, interference coordination and
  resource allocation for {3GPP E-UTRA},'' in {\em IEEE Vehicular Technology
  Conference}, Sept. 2006.

\bibitem{mulUL}
R.~M\"{u}llner {\em et~al.}, ``Contrasting open-loop and closed-loop power
  control performance in {UTRAN} {LTE} uplink by {UE} trace analysis,'' in {\em
  IEEE ICC}, pp.~1--6, June 2009.

\bibitem{MuhULPC}
B.~Muhammad and A.~Mohammed, ``Performance evaluation of uplink closed loop
  power control for {LTE} system,'' in {\em IEEE VTC}, pp.~1--5, Sept. 2009.

\bibitem{CasUL}
C.~Castellanos {\em et~al.}, ``Performance of uplink fractional power control
  in {UTRAN LTE},'' in {\em IEEE VTC}, pp.~2517--2521, May 2008.

\bibitem{SimULPC}
A.~Simonsson and A.~Furuskar, ``Uplink power control in {LTE} - overview and
  performance,'' in {\em IEEE VTC}, pp.~1--5, Sept. 2008.

\bibitem{NovUL}
T.~Novlan, H.~Dhillon, and J.~Andrews, ``Analytical modeling of uplink cellular
  networks,'' {\em {IEEE} Trans. Wireless Commun.}, vol.~12, pp.~2669--2679,
  June 2013.

\bibitem{Cou2011}
M.~Coupechoux and J.-M. Kelif, ``How to set the fractional power control
  compensation factor in {LTE} ?,'' in {\em Proc. IEEE Sarnoff Symposium},
  pp.~1--5, May 2011.

\bibitem{andganbac11}
J.~G. Andrews, F.~Baccelli, and R.~K. Ganti, ``A tractable approach to coverage
  and rate in cellular networks,'' {\em {IEEE} Trans. Commun.}, vol.~59,
  pp.~3122--3134, Nov. 2011.

\bibitem{BlaKarKee12}
B.~Blaszczyszyn, M.~K. Karray, and H.-P. Keeler, ``Using {Poisson} processes to
  model lattice cellular networks,'' in {\em Proc. IEEE Intl. Conf. on Comp.
  Comm. (INFOCOM)}, pp.~773--781, Apr. 2013.

\bibitem{ADG_Tcom}
A.~Guo and M.~Haenggi, ``Asymptotic deployment gain: A simple approach to
  characterize the $\mathtt{SINR}$ distribution in general cellular networks,''
  {\em {IEEE} Trans. Commun.}, 2014.
\newblock Submitted, available at: http://arxiv.org/abs/1404.6556.

\bibitem{SinDhiAnd13}
S.~Singh, H.~S. Dhillon, and J.~G. Andrews, ``Offloading in heterogeneous
  networks: Modeling, analysis, and design insights,'' {\em {IEEE} Trans.
  Wireless Commun.}, vol.~12, pp.~2484--2497, May 2013.

\bibitem{Soellerhaus}
B.~B{\l}aszczyszyn and D.~Yogeshwaran, ``Clustering comparison of point
  processes with applications to random geometric models,'' 2012.
\newblock accepted for {\em Stochastic Geometry, Spatial Statistics and Random
  Fields: Analysis, Modeling and Simulation of Complex Structures}, (V.
  Schmidt, ed.) Lecture Notes in Mathematics Springer. Available at
  http://arxiv.org/abs/1212.5285.

\bibitem{ElSawyH14}
H.~ElSawy and E.~Hossain, ``On stochastic geometry modeling of cellular uplink
  transmission with truncated channel inversion power control,'' {\em {IEEE}
  Trans. Wireless Commun.}, vol.~13, pp.~4454--4469, Aug. 2014.

\bibitem{LeeUL14}
H.~Lee, Y.~Sang, and K.~Kim, ``On the uplink {SIR} distributions in
  heterogeneous cellular networks,'' {\em IEEE Commun. Let.}, vol.~to appear,
  Oct. 2014.

\bibitem{ElshaerBDI14}
H.~Elshaer, F.~Boccardi, M.~Dohler, and R.~Irmer, ``Downlink and uplink
  decoupling: a disruptive architectural design for {5G} networks,'' in {\em
  IEEE Global Commun. Conf. (GLOBECOM)}, Dec. 2014.

\bibitem{SmiPopGavTWC}
K.~Smiljkovikj, H.~Elshaer, P.~Popovski, F.~Boccardi, M.~Dohler,
  L.~Gavrilovska, and R.~Irmer, ``Capacity analysis of decoupled downlink and
  uplink access in {5G} heterogeneous systems,'' {\em {IEEE} Trans. Wireless
  Commun.}, 2014.
\newblock Submitted. Available at: http://arxiv.org/abs/1410.7270.

\bibitem{SmiPopGav}
K.~Smiljkovikj, P.~Popovski, and L.~Gavrilovska, ``Analysis of the decoupled
  access for downlink and uplink in wireless heterogeneous networks,'' {\em
  IEEE Wireless Comm. Lett.}, 2014.
\newblock Submitted. Avaialble at: arxiv.org/abs/1407.0536.

\bibitem{SinAnd14}
S.~Singh and J.~G. Andrews, ``Joint resource partitioning and offloading in
  heterogeneous cellular networks,'' {\em {IEEE} Trans. Wireless Commun.},
  vol.~13, pp.~888--901, Feb. 2014.

\bibitem{josanxiaand12}
H.-S. Jo, Y.~J. Sang, P.~Xia, and J.~G. Andrews, ``Heterogeneous cellular
  networks with flexible cell association: A comprehensive downlink
  $\mathtt{SINR}$ analysis,'' {\em {IEEE} Trans. Wireless Commun.}, vol.~11,
  pp.~3484--3495, Oct. 2012.

\bibitem{dhiganbacand12}
H.~S. Dhillon, R.~K. Ganti, F.~Baccelli, and J.~G. Andrews, ``Modeling and
  analysis of {$K$}-tier downlink heterogeneous cellular networks,'' {\em
  {IEEE} J. Sel. Areas Commun.}, vol.~30, pp.~550--560, Apr. 2012.

\bibitem{SinBacAnd13}
S.~Singh, F.~Baccelli, and J.~G. Andrews, ``On association cells in random
  heterogeneous networks,'' {\em IEEE Wireless Commun. Lett.}, vol.~3,
  pp.~70--73, Feb. 2014.

\bibitem{MadHCN12}
P.~Madhusudhanan, J.~Restrepo, Y.~Liu, and T.~Brown, ``Downlink coverage
  analysis in a heterogeneous cellular network,'' in {\em IEEE Global Commun.
  Conf. (GLOBECOM)}, pp.~4170--4175, Dec. 2012.

\bibitem{JinUL}
N.~Jindal, S.~Weber, and J.~Andrews, ``Fractional power control for
  decentralized wireless networks,'' {\em {IEEE} Trans. Wireless Commun.},
  vol.~7, no.~12, pp.~5482--5492, 2008.

\bibitem{BacLi11}
F.~Baccelli, J.~Li, T.~Richardson, S.~Shakkottai, S.~Subramanian, and X.~Wu,
  ``On optimizing {CSMA} for wide area ad-hoc networks,'' in {\em Intl. Symp.
  on Modeling and Optimization in Mobile, Ad Hoc and Wireless Networks
  (WiOpt)}, pp.~354--359, May 2011.

\bibitem{ye2012user}
Q.~Ye, B.~Rong, Y.~Chen, M.~Al-Shalash, C.~Caramanis, and J.~G. Andrews, ``User
  association for load balancing in heterogeneous cellular networks,'' {\em
  {IEEE} Trans. Wireless Commun.}, vol.~12, pp.~2706--2716, June 2013.

\bibitem{MarMIMO}
T.~Marzetta, ``Noncooperative cellular wireless with unlimited numbers of base
  station antennas,'' {\em {IEEE} Trans. Wireless Commun.}, vol.~9,
  pp.~3590--3600, Nov. 2010.

\end{thebibliography}

\end{document}